\documentclass[10pt,letterpaper]{article}
\usepackage[margin=1in]{geometry}
\usepackage{authblk}
\usepackage{amsmath,amssymb,amsthm}
\usepackage[toc,page]{appendix}
\usepackage{braket}
\usepackage{graphicx}
\usepackage{xcolor}
\usepackage{dsfont}
\usepackage{enumitem}
\usepackage[mathscr]{euscript}
\usepackage{comment}
\usepackage[normalem]{ulem}
\usepackage{diagbox}
\usepackage[colorlinks=true,allcolors=blue]{hyperref}
\usepackage[capitalise]{cleveref}

\DeclareMathOperator{\sech}{sech}

\DeclareMathOperator{\sgn}{sign}

\newcounter{propcount}
\numberwithin{propcount}{section}

\newtheorem{theorem}[propcount]{Theorem}
\newtheorem{lemma}[propcount]{Lemma}

\newtheorem{proposition}[propcount]{Proposition}
\newtheorem{corollary}[propcount]{Corollary}
\newtheorem{conjecture}[propcount]{Conjecture}
\theoremstyle{definition}
\newtheorem{definition}[propcount]{Definition}
\newtheorem{remark}[propcount]{Remark}

\newcommand{\Id}{\mathds{1}}

\newcommand{\vect}[1]{\mathbf{#1}}

\newcommand{\dv}{{\vect{d}}}

\newcommand{\vv}{{\vect{v}}}
\newcommand{\wv}{{\vect{w}}}
\newcommand{\xv}{{\vect{x}}}
\newcommand{\yv}{{\vect{y}}}
\newcommand{\zv}{{\vect{z}}}

\newcommand{\RR}{\mathbb{R}}
\newcommand{\FF}{\mathbb{F}}
\newcommand{\EE}{\mathbb{E}}
\newcommand{\ZZ}{\mathbb{Z}}
\newcommand{\PP}{\mathbb{P}}
\newcommand{\QQ}{\mathbb{Q}}
\newcommand{\GG}{\mathbb{G}}

\def\bX{\boldsymbol{X}}
\def\bY{\boldsymbol{Y}}

\def\bSig{\boldsymbol{\Sigma}}
\newcommand{\cA}{\mathcal{A}}

\newcommand{\cC}{\mathcal{C}}
\newcommand{\cF}{\mathcal{F}}
\newcommand{\cN}{\mathcal{N}}

\newcommand{\eps}{\varepsilon}
\newcommand{\tep}{\tilde{\eps}}
\newcommand{\iid}{\textnormal{i.i.d.}}

\DeclareMathOperator{\Law}{Law}
\DeclareMathOperator{\poly}{poly}
\DeclareMathOperator{\diam}{diam}
\def\Unif{{\sf Unif}}
\DeclareMathOperator*{\E}{\EE}
\def\round{{\mathsf{round}}}

\newcommand{\dyn}{\textnormal{dyn}}
\newcommand{\ER}{\textnormal{ER}}
\newcommand{\DQI}{\textnormal{DQI}}
\newcommand{\USD}{\textnormal{USD}}
\newcommand{\ALG}{\textnormal{ALG}}
\newcommand{\OGP}{\textnormal{OGP}}
\newcommand{\RS}{\textnormal{RSB}}
\newcommand{\TV}{\textnormal{TV}}

\def\lt{\left}
\def\rt{\right}
\def\la{\langle}
\def\ra{\rangle}

\DeclareMathOperator{\Poisson}{Poisson}
\DeclareMathOperator{\rank}{rank}
\DeclareMathOperator{\range}{range}
\DeclareMathOperator*{\argmax}{\arg\max}

\numberwithin{equation}{section}
\numberwithin{table}{section}

\begin{document}

\title{\huge Gibbs Sampling in the Shattered Phase by \\ Decoded Quantum Interferometry}

\author[1]{Leo Zhou}
\author[2]{Noah Shutty}
\author[3]{Mark Sellke}
\author[2]{Stephen P.~Jordan}

\affil[1]{\small{\it{Electrical \& Computer Engineering Department, University of California, Los Angeles, CA, USA}}}
\affil[2]{\small{\it{Google Quantum AI, Venice, CA, USA}}}
\affil[3]{\small{\it{Department of Statistics, Harvard University, Cambridge, MA, USA}}}

\date{September 30, 2026}

\maketitle

\begin{abstract}
We apply Decoded Quantum Interferometry (DQI) to sample from the Gibbs measures of classical Ising spin Hamiltonians.
We show that this Gibbs sampling problem reduces to a quantum decoding problem, and the temperature achievable by DQI is determined by the performance of decoding algorithms.
We then focus on the task of Gibbs sampling for classical Ising $k$-spin glasses (or Max-$k$-XORSAT) on random Erd\H{o}s-R\'enyi hypergraphs with average degree $D\ge k$. 
In a temperature range beginning asymptotically at the predicted dynamical phase transition, $\beta_{\dyn}(k,D) = \sqrt{(2\ln k)/D}\times [1+o_{k\to\infty}(1)]$, we show that shattering and disorder chaos form a topological barrier that obstructs many algorithms, including Glauber dynamics and any algorithm whose output distribution is ``stable'' under perturbations of the input.
In contrast, we prove that this barrier can be broken both by a classical algorithm based on Prange's method, and by DQI equipped with a quantum decoder. 
For example, when $D=\alpha k$ with fixed $\alpha>1$, both Prange's algorithm and DQI can sample at any inverse temperature $\beta < \tanh^{-1}(1/\alpha)$ for sufficiently large $k$, well beyond the dynamical threshold $\beta_\dyn \sim \sqrt{2\ln k / (\alpha k)}$.
Therefore, our results show that DQI can overcome topological barriers that obstruct stable algorithms.
\end{abstract}

\tableofcontents

\section{Introduction}
\label{sec:intro}

Spin glasses are paradigmatic models of random optimization problems and disordered systems, and naturally serve as a testbed for optimization and sampling algorithms.
Such problems are believed to be computationally hard, even in the average-case setting, due to the topological barriers that separate the many local minima in their solution space.
A productive recent line of work has formalized these barriers through the overlap gap property (OGP)~\cite{gamarnik2014limits, gamarnik2021overlap}, leading to many provable hardness results against broad classes of algorithms, such as local algorithms~\cite{rahman2017local, chen2019suboptimality}, low-degree polynomials~\cite{gamarnik2020low,Wein2022, bresler2022kSAT}, low-depth quantum algorithms~\cite{farhi2020typical,basso2022performance,anshu2023concentration, chen2023local}, and any algorithm that is ``stable'' under perturbations of the input~\cite{gamarnik2024low,  GKPX2022,huang2022tight}. These hardness results are particularly compelling because, in many models, the OGP-predicted thresholds for stable algorithms match the best-known polynomial-time classical algorithms (see \cite{gamarnik2021overlap} for a review). They also raise a natural question: can an efficient quantum algorithm overcome these barriers that obstruct stable algorithms?

Decoded Quantum Interferometry (DQI) \cite{jordan2024optimization} is a promising quantum algorithmic framework for investigating this question. The DQI framework reduces combinatorial optimization to decoding an error-correcting code, so that the quality of the solution achieved is determined by the performance of decoders.
It was recently shown that DQI can efficiently find approximate optima for certain optimization problems related to polynomial regression for which no polynomial-time classical algorithm is known~\cite{jordan2024optimization}. 
It remains an open question whether DQI can also achieve quantum advantage for algebraically unstructured problems such as random spin glass models. 
However, under assumptions on the decoder, \cite{anschuetz2025decoded} shows that DQI is stable and consequently faces OGP-based limitations when optimizing certain sparse spin glasses.
This makes it important to understand which limitations persist when more general decoders are allowed.

In this work, we consider a slight generalization of the DQI framework applied to sample from the Gibbs measures of classical spin glass Hamiltonians. 
In fact, the algorithmic techniques we use are closer in spirit to the original reductions of Regev, Aharonov, and Ta-Shma \cite{ATS03,R04,AR05,R09} than they are to DQI as introduced in \cite{jordan2024optimization};
nevertheless, DQI and Regev's reduction are efficiently interconvertible~\cite{SMR26}. Here we stick to the terminology of DQI, which also generalizes more naturally than does Regev's reduction to sampling from Gibbs states of quantum Hamiltonians \cite{schmidhuber2025hamiltonian}.
Via the quantum Fourier transform, DQI reduces the Gibbs sampling problem to the problem of decoding a codeword from a classical error-correcting code after it has been subjected to a coherent quantum superposition of bit flip errors. Sampling from Gibbs measure at a larger inverse temperature $\beta$ requires decoding at a higher bit flip error rate $\gamma=\frac12-\frac12\sech\beta$, as we show in \cref{thm:gamma_vs_beta}. The capabilities of DQI thus depend on the specific choice of decoder.

Gibbs sampling is an even more demanding problem than optimization, since it requires outputting not just one but all low-energy configurations with correct relative probabilities. The Sherrington-Kirkpatrick model is a clear illustrative example, where an efficient optimization algorithm~\cite{montanari2019optimization, lopatto2026replica} exists to find near-ground states ($\beta\to\infty$), while efficient sampling is only known in the high-temperature regime $\beta<1$ and provably impossible for stable algorithms at any $\beta >1$ ~\cite{AMS22sampling,Celentano2024}.
For many spin glass models, physicists have long predicted that thermal processes (e.g. Glauber dynamics) equilibrate rapidly only for $\beta < \beta_\dyn$, a dynamical phase transition point defined by the emergence of nontrivial solution to the so-called 1RSB equations (see e.g.~\cite{Sompolinsky1981,mezard1987spin,Montanari2003} and \cref{app:spin-glass}).
More recently, hardness of sampling results against stable algorithms have been proved for dense Ising $k$-spin glasses~\cite{alaoui2023sampling, alaoui2024near},  which are conjectured to hold for all  $\beta > \beta_\dyn(k) \simeq \sqrt{2\ln k / k}$.
Nevertheless, the algorithmic thresholds for sampling and optimization need not coincide; for comparison, we can associate the OGP-predicted energy threshold for stable optimization algorithms~\cite{AMS21optimization,huang2022tight} on these dense models with an effective temperature $\beta_\OGP(k)$. (For the sparse graph, a natural definition is the best factor of IID or ``local'' algorithm as in \cite{lyons2017factors}, or alternatively one can study low degree polynomials as in \cite{wein2025computational}.) 
Numerical evaluation of the predicted threshold shows that $\beta_\OGP(k) > \beta_\dyn(k)$ for various $3\le k \le 20$ (see \cref{tab:dense-betas}), and one can prove that $\beta_\OGP(k)=\Theta\big(\sqrt{\ln k / k}\big)$ \cite{anschuetz2025decoded,bogpinprogress}.

\begin{figure}[t]
\centering
\includegraphics[height=2.6in]{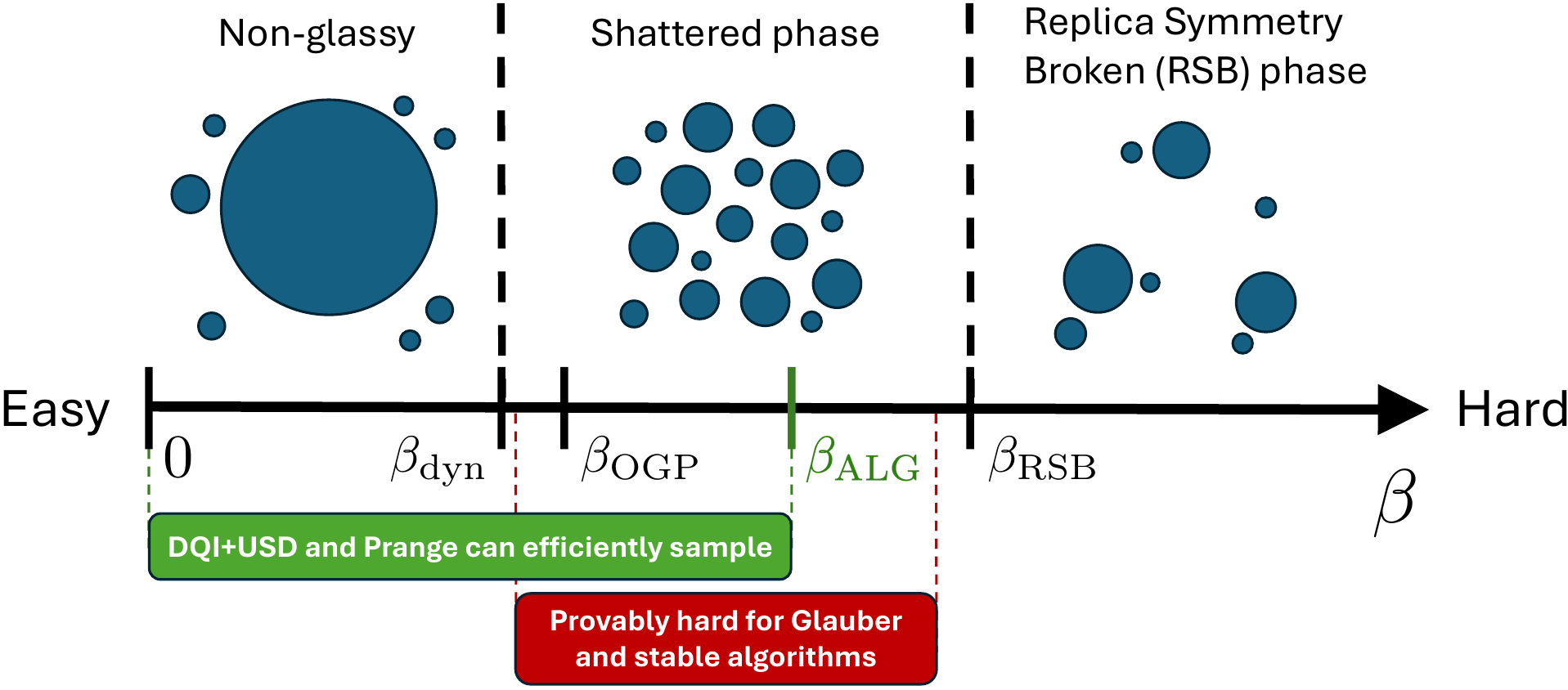}
\caption{
The different phases in an Ising $k$-spin glass for large $k$, visualized by a cluster decomposition of the Gibbs measure at different inverse temperature $\beta$. Prior work identified $E(\beta_{\OGP})$ and $\beta_{\dyn}$ as barriers for broad classes of algorithms, motivating the possibility that they reflect more general computational barriers. Here we prove hardness of Gibbs sampling for Glauber dynamics and all stable algorithms throughout the indicated red interval in the shattered phase. In contrast, DQI+USD and Prange's algorithm efficiently Gibbs sample for all $\beta < \beta_{\ALG}$, including at temperatures deep in the shattered phase.
}
\end{figure}

Motivated by these hardness results, we apply DQI to sample from the Gibbs measures of Ising $k$-spin glasses on random sparse Erd\H{o}s-R\'enyi hypergraphs with average degree $D \ge k$.
Equivalently, these are random Max-$k$-XORSAT instances.
For this ensemble of spin glasses, the dynamic transition temperature is believed to be $\beta_\dyn(k,D)=\sqrt{2\ln k / D}\cdot[1+o_{k}(1)]$. We show in \cref{thm:shatter-hardness} that for any fixed $\eps>0$, all $\beta \in [(1+\eps)\sqrt{2\ln k / D}, \sqrt{k/(2D)} \,]$ and sufficiently large $k$, the Gibbs measure shatters into exponentially many well-separated clusters. We then use shattering to prove that no stable algorithms (classical or quantum) can Gibbs sample in this temperature range, and that Glauber dynamics has exponentially slow mixing, similar to recent hardness results~\cite{alaoui2023sampling,alaoui2024near}.

On the algorithmic side, we show that DQI with a quantum decoder is able to sample in this shattered temperature range, which immediately implies that DQI is not subject to the OGP-like topological obstructions for stable algorithms. Specifically, we solve the quantum decoding problem corresponding to the Erd\H{o}s-R\'enyi ensemble, using a quantum decoding technique based on Unambiguous State Discrimination (USD) measurements introduced in \cite{chailloux2023quantum}. Therefore, we refer to this version of DQI as DQI+USD. 
In \cref{thm:gamma_USD} we quantify the maximum coherent bit flip error rate that can be reliably decoded using the USD technique, which can then be substituted into \cref{thm:gamma_vs_beta} to yield $\beta_\ALG(k,D) = \tanh^{-1}(k/D) \cdot [1+o_{k}(1)]$ (uniformly over $D \ge \alpha k$ for any fixed $\alpha>1$),  such that DQI can efficiently Gibbs sample for all $\beta < \beta_\ALG$.
For example, when $D=3k$ and $k$ is large, $\beta_\ALG$ lies squarely in the shattering phase: $\beta_\ALG(k,3k)\to\frac12\ln2<1/\sqrt6$, while $\beta_\dyn(k,3k) \asymp \sqrt{\ln k/k}$ tends to zero.

Our results show that DQI+USD can sample from certain classical Gibbs measures in the shattered phase, at temperatures colder than can be reached by any stable classical algorithm. This raises the question of whether DQI+USD achieves quantum advantage for these problems over \emph{all} classical algorithms. Prange's algorithm~\cite{prange1962use}, which originates in the cryptanalysis literature, was previously shown in \cite{chailloux2023quantum} to match DQI+USD on certain problems. We show in \cref{prop:prange-works} that Prange's algorithm can be adapted to Gibbs sampling and reaches the same temperatures that can be reached by DQI+USD.
Therefore, DQI+USD does not achieve advantage over all classical algorithms for this problem. This also shows shattering and OGP do not necessarily predict intractability for quantum or classical algorithms in these spin glasses. 
In a similar spirit, recent work \cite{li2024some} shows that the shortest-path problem exhibits OGP while being algorithmically easy, although a follow-up \cite{koehler2026overlap} illustrates that OGP disappears when the optimization landscape is modified.
It would thus be interesting to develop a more universal theory of average-case hardness in these spin glass models, and investigate whether there exists a parameter regime where a genuine quantum advantage is possible.

\section{Main Results}
\label{sec:main_results}

Our first main result is the reduction of the Gibbs sampling problem to a decoding problem, where a mapping between the inverse temperature $\beta$ of the Gibbs ensemble and the decodable error rate $\gamma$ is derived. We start by formally defining the two problems:

\begin{definition}[$(H, \beta,\eps)$-Gibbs Sampling Problem]
    Let $H:\{+1,-1\}^N \to \mathbb{R}$ be a classical Hamiltonian that maps a spin configuration to a corresponding energy. The corresponding Gibbs measure at inverse temperature $\beta$ is then
    \begin{equation}
        \label{eq:gibbs}
        \mu_{\beta,H}(\mathbf{z}) = \frac{1}{Z_{\beta,H}} e^{-\beta H(\mathbf{z})}, \quad \textrm{where} \quad Z_{\beta,H} = \sum_{\mathbf{z}} e^{-\beta H(\mathbf{z})}.
    \end{equation}
    The \emph{$(H, \beta,\eps)$-Gibbs sampling problem} is to sample from a distribution over $\{+1,-1\}^N$ whose total variation distance from $\mu_{\beta,H}$ is at most $\eps$.
\end{definition}

\begin{definition}[$(B, \gamma,\delta)$-Quantum Decoding Problem]
\label{def:QDP}
    Let $B \in \mathbb{F}_2^{M\times N}$ and let $C^\perp$ be the classical error correcting code defined by taking $B^T$ as its parity check matrix:
    \begin{equation}
        \label{eq:Cperp}
        C^\perp = \{ \mathbf{d} \in \mathbb{F}_2^M : B^T \mathbf{d} = \mathbf{0}\}.
    \end{equation}
    Given $\gamma \in [0,1/2]$ let
    \begin{equation}
        \ket{\psi_{0,\gamma}} = \sqrt{1-\gamma} \ket{0} + \sqrt{\gamma} \ket{1} \quad \textrm{and} \quad \ket{\psi_{1,\gamma}} = \sqrt{1-\gamma} \ket{1} + \sqrt{\gamma} \ket{0}.
    \end{equation}
    Given $\mathbf{d} = (d_1,\ldots,d_M) \in \mathbb{F}_2^M$ let
    \begin{equation}
        \label{eq:productform}
        \ket{\psi_{\mathbf{d},\gamma}} = \bigotimes_{j=1}^M \ket{\psi_{d_j,\gamma}}.
    \end{equation}
    The \emph{$(B, \gamma,\delta)$-quantum decoding problem} at error rate $\gamma$ is, for uniformly random $\mathbf{d}\in C^\perp$, to identify $\mathbf{d}$ from $\ket{\psi_{\dv,\gamma}}$ with probability at least $1 - \delta$.
\end{definition}

\begin{remark}
    The problem in \cref{def:QDP} is a special case of the more general quantum decoding problem formulated in \cite{chailloux2023quantum} and explored further in \cite{SMR26}. Here, motivated by Gibbs sampling, we restrict our attention to input states with the tensor product form \cref{eq:productform}.
    If we take a classical algorithm for decoding LDPC codes and implement it as a reversible circuit, then upon the input state \eqref{eq:productform},  the algorithm only ``sees'' a classical error model where each bit has been independently flipped with probability $\gamma$; such classical decoders were the starting point for DQI in~\cite{jordan2024optimization}. However, in this paper we will use an intrinsically quantum decoding method from \cite{chailloux2023quantum} that takes advantage of the coherence of these bit flip errors.
\end{remark}
The Decoded Quantum Interferometry algorithm reduces certain $(H, \beta,\eps)$-Gibbs sampling problems to an associated $(B, \gamma,\delta)$-quantum decoding problem, as we show in the below theorem (proved in \cref{sec:DQI}):
\begin{theorem}[Gibbs sampling from quantum decoding]
    \label{thm:gamma_vs_beta}
    Given any $B \in \mathbb{F}_2^{M \times N}$, and uniformly random $J_1,\ldots,J_M \sim_\iid \{+1,-1\}$. Let $H_{B,J}:\{+1,-1\}^N \to \mathbb{Z}$ be the classical Hamiltonian 
    \begin{equation}
        \label{eq:HBJ}
        H_{B,J}(\mathbf{z}) = -\sum_{i=1}^M J_i \prod_{j:B_{ij} = 1} z_j.
    \end{equation}
Fix any $\delta \in [0,1]$ and $\gamma \in [0,1/2]$, and set
    \begin{equation}
        \label{eq:betamax}
        \beta = \sech^{-1} \left( 1- 2 \gamma \right).
    \end{equation}
    Given a subroutine which efficiently solves the $(B, \gamma,\delta)$-quantum decoding problem, then DQI can efficiently solve the $(H_{B, J}, \beta, \eps)$-Gibbs sampling problem for any $\eps\ge \sqrt{\delta}$, with probability at least $1 - \sqrt{\delta}/\eps$ over the $J_i$'s.
\end{theorem}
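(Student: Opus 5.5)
The plan is to build the state $\sum_{\mathbf z}\sqrt{\mu_{\beta,H_{B,J}}(\mathbf z)}\,\ket{\mathbf z}$, up to its image under $B$, as the Hadamard transform of a coherent superposition of noisy codewords of $C^\perp$. The decoder is used only to erase a clean copy of the codeword $\dv$. Throughout, write $J_i=(-1)^{j_i}$ with $\mathbf j\in\FF_2^M$, write $\mathbf z=(-1)^{\xv}$, and note that $\prod_{j:B_{ij}=1}z_j=(-1)^{(B\xv)_i}$.

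\textbf{Step 1 (state preparation).} Prepare the uniform superposition $|C^\perp|^{-1/2}\sum_{\dv\in C^\perp}\ket{\dv}$, which is easy since $C^\perp$ is a linear space. Apply the phase $(-1)^{\mathbf j\cdot\dv}$, which is diagonal and classical. Then apply bitwise controlled rotations $\ket{d_i}\ket{0}\mapsto\ket{d_i}\ket{\psi_{d_i,\gamma}}$. This gives
\[
|C^\perp|^{-1/2}\sum_{\dv\in C^\perp}(-1)^{\mathbf j\cdot\dv}\ket{\dv}\ket{\psi_{\dv,\gamma}}.
\]
Next run the $(B,\gamma,\delta)$ decoder coherently on the second register and XOR its output into the first, which uncomputes $\ket{\dv}$. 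The second register's marginal over $\dv$ is exactly the input distribution of \cref{def:QDP}, because the phase does not change that distribution.

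\textbf{Step 2 (Fourier computation).} Write $\ket{\psi_{\dv,\gamma}}=\sum_{\yv}g(\yv\oplus\dv)\ket{\yv}$, where
\[
g(\mathbf e)=\prod_i\sqrt{1-\gamma}^{\,1-e_i}\sqrt{\gamma}^{\,e_i}.
\]
Apply $H^{\otimes M}$ to the ideal state $\sum_{\dv\in C^\perp}(-1)^{\mathbf j\cdot\dv}\ket{\psi_{\dv,\gamma}}$. By the convolution theorem and the duality $(C^\perp)^\perp=C=\range(B)$, the result is supported on $\mathbf u\in C\oplus\mathbf j$ with amplitude proportional to
\[
\hat g(\mathbf u)=\prod_i\bigl(\sqrt{1-\gamma}+(-1)^{u_i}\sqrt\gamma\bigr).
\]
Measure $\mathbf u$ and compute $B\xv=\mathbf u\oplus\mathbf j$. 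Output a uniformly random solution $\xv$ (Gaussian elimination), and return $\mathbf z=(-1)^{\xv}$. Every solution in the coset has the same weight, so choosing uniformly is the correct thing to do.

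The output probability is proportional to $\prod_i(\sqrt{1-\gamma}+J_i(-1)^{(B\xv)_i}\sqrt\gamma)^2$. From \cref{eq:betamax}, $4\gamma(1-\gamma)=1-\sech^2\beta=\tanh^2\beta$. Hence each factor equals $1+J_is_i\tanh\beta\propto e^{\beta J_is_i}$, where $s_i=(-1)^{(B\xv)_i}$. This is exactly $\mu_{\beta,H_{B,J}}$.

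\textbf{Step 3 (error analysis), the main obstacle.} A failing decoder leaves garbage in the first register, and this garbage branch carries the $J$-dependent phases $(-1)^{\mathbf j\cdot\dv}$. Its effect on the output therefore depends on $J$.

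I would proceed by comparing the actual state with the ideal state. Write $\ket{\Phi_J}=\ket{\Phi_J^{\rm ideal}}\ket{0}+\ket{\rm err_J}$. The key fact is that the squared overlap loss equals the decoder failure probability for $\dv$ uniform on $C^\perp$, which is at most $\delta$. Then bound the trace distance between the actual and ideal output distributions by the Fuchs--van de Graaf inequality, giving a bound of order $\sqrt{\delta_J}$ for each $J$. Here $\delta_J$ is at most a constant times the failure mass on the $J$-dependent part, and averaging over uniform $J$ gives $\E_J[\TV_J]\le\sqrt\delta$.

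Markov's inequality then gives $\TV_J\le\eps$ with probability at least $1-\sqrt\delta/\eps$ over $J$, as claimed.

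The delicate point is proving that the averaged error is at most $\sqrt\delta$, rather than a bound that is merely per-$J$ and potentially worse. I would try first to show that averaging over $\mathbf j$ decoheres the cross terms between distinct $\dv$ in the garbage branch. Since $\E_{\mathbf j}(-1)^{\mathbf j\cdot(\dv\oplus\dv')}=\mathbb 1[\dv=\dv']$, the averaged squared error norm reduces exactly to the decoder failure probability. Finally, every step uses only linear algebra over $\FF_2$, Hadamards, controlled single-qubit rotations, and one call to the efficient decoder, so the whole procedure is efficient.
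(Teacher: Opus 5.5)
Your Steps 1--2 reproduce the paper's construction in \cref{sec:DQI}: you prepare $\sum_{\dv\in C^\perp}(-1)^{\mathbf j\cdot\dv}\ket{\dv}\ket{\psi_{\dv,\gamma}}$, uncompute $\dv$, and apply a Hadamard transform. Replacing the paper's second $Z^{\vv}$ by a classical shift of the outcome by $\mathbf j$ is equivalent, and your convolution computation and the identity $2\sqrt{\gamma(1-\gamma)}=\tanh\beta$ are correct. Two small repairs are needed:
\begin{itemize}
\item Running the decoder and XOR-ing its output into the first register only moves $\dv$ into the decoder's output register. You must then run the decoder backwards (compute--copy--uncompute). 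Otherwise the phases $(-1)^{\mathbf j\cdot\dv}$ decohere and the output no longer depends on $J$.
\item You need a rule for outcomes $\mathbf u\notin C\oplus\mathbf j$, which do occur in the faulty run. The paper outputs a fixed $\zv_0$, so the post-processing is a channel and total variation contracts.
\end{itemize}

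The genuine gap is Step 3. The paper does not prove this step; it imports $\EE_J d_\TV\le\sqrt\delta$ from \cite[Theorem~4]{SMR26} and then applies contraction and Markov exactly as you do. Your cross-term cancellation $\EE_{\mathbf j}(-1)^{\mathbf j\cdot(\dv\oplus\dv')}=\Id[\dv=\dv']$ is the right mechanism, but your sketch ignores normalization. The ideal uncomputed vector $\tilde\Phi_J=|C^\perp|^{-1/2}\sum_{\dv}(-1)^{\mathbf j\cdot\dv}\ket{\psi_{\dv,\gamma}}$ has squared norm $N_J=\sum_{\mathbf c\in C^\perp}(-1)^{\mathbf j\cdot\mathbf c}(\tanh\beta)^{|\mathbf c|}=Z_{\beta,H_{B,J}}/\EE_J Z_{\beta,H_{B,J}}$. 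So the target is $\tilde\Phi_J/\sqrt{N_J}$, and the per-$J$ overlap loss is not the decoder failure probability. The error branch must absorb $1-\sqrt{N_J}$, which need not be small for individual $J$.

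A clean way to finish is as follows. Let $\ket{\Phi_J}$ be the actual state and set $X_J:=(\bra{0}\otimes\bra{\tilde\Phi_J}\otimes\bra{0})\ket{\Phi_J}$. For your uncomputation unitary $V$, define $\bar a:=\EE_{\dv\in C^\perp}(\bra{0}\otimes\bra{\psi_{\dv,\gamma}}\otimes\bra{0})V(\ket{\dv}\otimes\ket{\psi_{\dv,\gamma}}\otimes\ket{0})$.
\begin{itemize}
\item First, $d_\TV\le\sqrt{1-|X_J|^2/N_J}$.
\item The same cross-term cancellation gives $\EE_J N_J=1$ and $\EE_J X_J=\bar a$.
\item Jensen plus Cauchy--Schwarz then give $\EE_J d_\TV\le\sqrt{1-|\bar a|^2}$.
\end{itemize}
For compute--copy--uncompute with failure probability $\delta$ one has $\bar a=1-\delta$, so this route yields $\sqrt{2\delta-\delta^2}$, not $\sqrt\delta$. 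Your passage from ``a constant times the failure mass'' to exactly $\sqrt\delta$ is therefore unsupported. Reaching the stated constant is precisely what the paper outsources to \cite{SMR26}. Alternatively, you could measure decoder failure by the amplitude criterion $1-|\bar a|^2\le\delta$, under which your argument does give $\sqrt\delta$.
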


To more concretely assess the performance of the DQI sampling algorithm, we consider the following ensemble of $(B,J)$ corresponding to Ising $k$-spin glass models on Erd\H{o}s-R\'enyi hypergraphs. Note that minimizing \eqref{eq:HBJ} is equivalent to a Max-$k$-XORSAT problem that seeks to maximize the number of satisfied constraints in the linear system $B\xv =\vv$, with $z_i=(-1)^{x_i}$ and $J_i = (-1)^{v_i}$.

\begin{definition}[Erd\H{o}s-R\'enyi ensemble $\GG_\ER(N,k,D)$]
\label{def:erdos_renyi}
    Given positive integers $N \ge k \ge 2$, and any $D >0$, we define $\GG_\ER(N,k,D)$ be the following ensemble of random weighted $k$-uniform hypergraphs with average degree (approximately) $D$. Let the vertex set be $V=\{1,\ldots,N\}$. Let $M=\lfloor ND/k\rfloor$, and the hyperedge multiset $E=(e_1,\ldots,e_M)$ is obtained by drawing each $e_i$ independently and uniformly from the collection $\binom{V}{k}$ of all $k$-element subsets of $V$ (i.e., with replacement). 
    The weights $J=(J_1,\ldots,J_M)\in\{+1,-1\}^M$ are i.i.d.\ uniform Rademacher and independent of $E$. 
    Each hypergraph $G=(V,E)$ is equivalently described by an incidence matrix $B\in\FF_2^{M\times N}$ whose row $i$ has ones exactly at the $k$ positions in $e_i$, so $\GG_\ER(N,k,D)$ also defines a distribution on matrices over $\FF_2$ in which each row has exactly $k$ non-zero entries placed uniformly at random.
    We will often write $H=(B,J)\sim\GG_\ER(N,k,D)$, where the relationship between $H$ and $(B,J)$ is given in Eq.~\eqref{eq:HBJ}. 
\end{definition}

Many natural sampling algorithms are predicted to fail at temperatures beyond a dynamical phase transition. For the sparse hypergraph ensemble above, the predicted transition occurs at inverse temperature $\beta_\dyn(k, D)$ (see \cref{app:beta-dyn} and Eq.~\eqref{eq:xor-beta-dyn-def} for a precise definition), which asymptotically is
\begin{equation}
\label{eq:beta-dyn-asymp}
\beta_{\dyn}(k, D) = \sqrt{\frac{2\ln k}{D}}\cdot [1+o_{k}(1)].
\end{equation}
For example, recent results~\cite{AMS22sampling, alaoui2023sampling} showed that for densely connected variants of these $k$-spin glasses, no stable algorithms can sample their Gibbs measures at temperatures below the onset of either shattering or replica symmetry breaking (RSB). \cite{huang2025hardness} has also extended the hardness result for $k=2$ case to sparse random graphs for temperatures in the RSB regime.
For large $k$, however, shattering is predicted to obstruct stable sampling algorithms outside the RSB regime~\cite{alaoui2023sampling, alaoui2024near}.
Here, we prove the following hardness of sampling result from shattering for the sparse $\GG_\ER$ ensemble:
\begin{theorem}[Shattering and failure of stable samplers]
\label{thm:shatter-hardness} 
	For any fixed $\eps >0$, there exists a cutoff $k_0(\eps)$ such that the following holds for all fixed $k\ge k_0(\eps)$ and $D\ge k$ as $N\to\infty$.
	Let $H=H_{B,J}$ with $(B,J)\sim\GG_\ER(N,k,D)$ as in \cref{def:erdos_renyi}. For any $\beta$ satisfying
	\begin{equation} \label{eq:beta-range}
	(1+\eps)\sqrt{\frac{2\ln k }{D} } \le \beta  \le \sqrt{\frac{k}{2D}},
	\end{equation}
	the Gibbs measure $\mu_{\beta, H}$ shatters into exponentially many well-separated clusters with high probability.
	In this interval of $\beta$, no stable algorithm can sample from $\mu_{\beta,H}$ with vanishing $o_N(1)$ expected normalized Wasserstein distance, and Glauber dynamics has $e^{\Omega(N)}$ mixing time with high probability.
\end{theorem}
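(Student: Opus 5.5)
The plan follows the template of \cite{alaoui2023sampling,alaoui2024near}: prove shattering by a planted-model/first-moment argument, then derive the algorithmic consequences from shattering together with disorder chaos. The first step is to replace the Gibbs measure by the \emph{planted} model. For $\beta$ in the range \eqref{eq:beta-range} with $\beta\le\sqrt{k/(2D)}$, I would show the annealed free energy is asymptotically correct, i.e.\ $\frac1N\log Z_{\beta,H}\to \ln 2+\frac{D}{k}\ln\cosh\beta$ in probability. The tool is a second-moment computation over pairs with overlap $q$: the exponent is $\ln 2 + h(q)+\frac{D}{k}\ln(1+q^k\tanh^2\beta)$ versus twice the annealed one. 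Since $\tanh^2\beta\le \beta^2\le k/(2D)$, the correction is at most $\frac{D}{k}q^k\beta^2\le q^k/2$, which is dominated by the entropy loss $-q^2/2$ for all $q$ once $k$ is large. Here I expect to need truncation or small-subgraph conditioning, since sparse Erd\H{o}s-R\'enyi fluctuations prevent concentration of $Z$ to constant order; a contiguity statement up to $e^{o(N)}$ suffices. By Achlioptas--Coja-Oghlan style quiet planting, any event of probability $e^{-\Omega(N)}$ in the planted model (draw $\sigma^*$ uniformly, then $J_i$ with bias $\tanh\beta$ toward $\prod_{j\in e_i}\sigma^*_j$) then also has vanishing probability in the null model, where $\sigma^*$ is a Gibbs sample.

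The second step, and the main obstacle, is shattering in the planted model. I would show that with high probability the ``overlap profile'' $\Phi(q)=\frac1N\log\mu_{\beta,H}(\{\sigma:\la\sigma,\sigma^*\ra/N\approx q\})$ has a gap: it is strictly negative on an interval $[q_1,q_2]\subset(0,1)$ and is $0$ near $q=q_*\approx 1$. Its first-moment upper bound is $h(q)-\ln 2+\frac{D}{k}\,\mathrm{KL}$-type terms, approximately
\[
h\Bigl(\tfrac{1+q}{2}\Bigr)-\ln 2+\tfrac{D}{k}\ln\bigl(1+q^k\tanh^2\beta\bigr)\ldots
\]
(an exact planted computation). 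For intermediate $q$, $q^k$ is tiny and the entropy loss $\approx -q^2/2$ wins. Near $q=1-2\rho$ with $\rho\ll1$, the energetic penalty is roughly $D\rho\,\beta^2$, while the entropy gain is $\rho\ln(1/\rho)$. The bound $\beta\ge(1+\eps)\sqrt{2\ln k/D}$ gives $D\beta^2\ge 2(1+\eps)^2\ln k$, and this beats the entropy in the window $\rho\in[k^{-1-\eps'},\,c/k]$, producing the forbidden band. This careful large-$k$ estimate is where I expect the work to lie, as well as the verification that the typical planted point sits at $q_*$ with $1-q_*=O(k^{-1-\eps'})$. Exponentially many clusters then follow: each cluster has Gibbs mass at most $e^{-cN}$, because its entropy is a positive fraction smaller than that of the full measure. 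Well-separatedness comes from the band, and covering $\mu$ by such balls follows from the planting equivalence.

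The third step covers the consequences. For Glauber dynamics, the band gives a bottleneck set, namely the shell around a typical cluster, with Gibbs mass exponentially smaller than the cluster itself. Conductance then yields $e^{\Omega(N)}$ mixing, after a standard initialization from stationarity. For stable algorithms, I would prove disorder chaos: for two correlated instances obtained by resampling a small fraction $t$ of the clauses, independent Gibbs samples have overlap near $0$. This follows from a two-replica planted first-moment bound with correlated couplings, giving $\EE\la\sigma^1,\sigma^2\ra^2/N^2\to0$. Combined with the fact that the measure has no mass at intermediate overlaps, we interpolate along a path of instances $H_0,\dots,H_L$. A stable algorithm's output law moves by $o(1)$ in $W_2$ at each step, yet it must go from being concentrated near one cluster to being near an unrelated one. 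Since every sample sits in a cluster and the clusters are separated by the band, the overlap of the output with a reference point must cross $[q_1,q_2]$. This contradicts sampling with $o_N(1)$ Wasserstein error, as in \cite{alaoui2023sampling}.
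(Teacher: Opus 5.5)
Your first step (second moment with an $e^{o(N)}$ ratio, McDiarmid concentration, quiet planting) is fine and matches the paper, and no small-subgraph conditioning is needed. There are genuine gaps in the other two steps.

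\textbf{Shattering.} The gap you propose for $\Phi(q)$ cannot be the mechanism.

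Your own second-moment computation, uniquely maximized at overlap $0$, shows that two Gibbs samples are nearly orthogonal. So with $\zv^*$ a Gibbs sample, $\Phi(q)<0$ for every $q$ bounded away from $0$, including $q_*\approx 1$; you say yourself later that clusters have mass $e^{-\Omega(N)}$. Hence ``negative on $[q_1,q_2]$'' is automatic and is not a gap. What you need is that $\Phi$ on the band lies below $\Phi(q_*)$, and the annealed planted first moment cannot show this.

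To see why, condition on $(\zv^*,G)$. Each clause on which $\zv$ and $\zv^*$ differ contributes $\EE^{\PP}[e^{-2\beta f_i(\zv^*)}]=pe^{-2\beta}+(1-p)e^{2\beta}=1$, with $p=e^{\beta}/(2\cosh\beta)$. Therefore $\EE^{\PP}[e^{-\beta(H(\zv)-H(\zv^*))}\mid \zv^*,G]=1$ exactly. Relative to the weight of $\zv^*$, the annealed mass of the band is its full cardinality, with no energetic penalty at all. Normalized by $\EE Z$ as in your formula, at the lower end of your range and $\rho=\eps/k$, it is about $e^{N(-\ln 2+(2+O(\eps))\ln k/k)}$. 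The Gibbs weight of $\zv^*$ itself is only about $e^{N(-\ln 2+(1+O(\eps))\ln k/k)}$.

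The penalty $\approx ND\rho\ln\cosh\beta\approx ND\rho\beta^2/2$ in your heuristic appears only after restricting to the typical energy level. That is what the paper does. It proves a soft OGP by a union bound with Bennett's inequality on \emph{counts}: if $H(\zv^*)=M\Gamma\pm M\delta$, then no $\zv$ with $|d_H(\zv,\zv^*)/N-\eps/k|\le(\eps/k)^2$ has $H(\zv)<M\Gamma+2M\delta$.

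Clusters are then defined as connected components, at step size $(\eps/k)^3N$, of the set of energy-typical soft-OGP points. So no lower bound on cluster mass is ever needed. Coverage comes from contiguity, and ``no dominant cluster'' comes from near-orthogonality of two replicas. Likewise, the Glauber bottleneck comes from discarding clusters whose neighbourhood is not $e^{cN/10}$ times lighter, which costs only $e^{-\Omega(N)}$ mass. It does not come from a free-energy comparison.

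\textbf{Disorder chaos.} Throughout your range $\beta\le\sqrt{k/(2D)}$ the model is replica symmetric, so two independent samples from the \emph{same} $\mu_{\beta,H}$ already have overlap $\approx 0$. The statement that independent samples from $\mu_{\beta,H_0}$ and $\mu_{\beta,H_t}$ are nearly orthogonal is therefore true even when $H_t=H_0$. It gives no lower bound on $W_1(\mu_{\beta,H_0},\mu_{\beta,H_t})$.

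The interpolation does not repair this, for three reasons.
\begin{enumerate}
\item An accurate sampler's output law is spread over $e^{\Omega(N)}$ clusters rather than concentrated near one. At the level of coupled outputs $X(H_\ell,\omega)$, near-orthogonality of \emph{independent} samples says nothing about the coupled pair $(X(H_0,\omega),X(H_L,\omega))$. Asserting that they end in ``unrelated'' clusters assumes the transport lower bound you are trying to prove.
\item The band is a statement about a single Hamiltonian. It does not forbid $X(H_\ell,\omega)$ from sitting at band distance from $X(H_0,\omega)$, and for weakly correlated instances there are exponentially many points that are energy-typical for both.
\item Resampling a constant fraction of clauses is not controlled by single-clause stability.
\end{enumerate}

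The missing idea is the paper's cluster-reweighting argument. Clusters have Hamming diameter at most $\eps N/k$. Hence, for all but a $\sqrt{\eps}$ fraction of supports $E$, a fresh clause $\chi_E$ agrees with its value at a cluster's majority representative on all but a $\sqrt{\eps}$ fraction of that cluster's Gibbs mass. By near-orthogonality, its values on two independently drawn clusters $\cC_i,\cC_j$ differ with probability $\approx 1/2$. When they differ, the two signs of $J$ change $\mu(\cC_i)/\mu(\cC_j)$ by factors that differ by about $e^{4\beta}$. So the compressed cluster-weight measure moves by $\Omega(1)$ in total variation. The separation $(\eps/k)^3N$ converts this into $\liminf \EE W_1(\mu_{\beta,H'},\mu_{\beta,H''})>0$ under a \emph{single} clause resample, and the triangle inequality then plays this against stability.
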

The proof is in \cref{sec:stable-hardness}, where we provide a formal definition of stable algorithms in \cref{def:stability}. 
Roughly, we say an algorithm is stable if resampling one input clause changes the output distribution by a vanishing amount in Wasserstein distance.
Many natural algorithms are stable, including low-degree polynomial algorithms. 
The exponential mixing time result for Glauber dynamics follows separately from shattering-induced bottlenecks in the Gibbs measure and does not require stability of the dynamics.

To establish the above shattering and hardness results, we work in the replica-symmetric regime and analyze a simpler ``planted'' model before transferring the result to $\GG_\ER$ (see e.g.,~\cite{alaoui2023shattering}).
In this temperature interval, we prove that a typical Gibbs sample has an annular gap in Hamming distance where every configuration has strictly worse energy---a phenomenon called ``soft OGP''~\cite{alaoui2024near}. Following similar techniques as in \cite{alaoui2024near}, we then use the soft OGP to construct well-separated, small-diameter sets that carry almost all the Gibbs mass, yielding a shattering decomposition (see formal statement in \cref{def:shattering}). 
To prove hardness against stable algorithms, we show that the shattered Gibbs measure is highly sensitive to changing any single hyperedge clause.
This sensitivity, also known as disorder chaos, is incompatible with accurate sampling by a stable algorithm.

We note that the lower endpoint of \eqref{eq:beta-range} is asymptotically $(1+\eps)\beta_\dyn(k, D)$  as $k\to\infty$ for any $\eps >0$, consistent with the predicted dynamical transition. We leave open a proof of hardness for stable samplers throughout $\beta > \beta_\dyn(k, D)$. In particular, we expect our shattering-based obstruction extends across the full interval of $\big(\beta_\dyn(k, D), \beta_\RS(k,D)\big)$, where $\beta_\RS(k,D)$ marks the static or replica symmetry breaking transition.
For $\beta > \beta_\RS(k,D)$, we expect the onset of replica symmetry breaking also implies disorder chaos and thus failure of stable algorithms. Such an implication is known from \cite{alaoui2023sampling} in dense spin glasses, and was extended to sparse spin glasses in \cite{huang2025hardness} for the $k=2$ case.

Despite the above hardness result, we show that DQI is able to Gibbs sample in the shattered phase:
\begin{theorem}[Gibbs sampling by DQI+USD]
\label{thm:gamma_USD} 
    Fix any $k\ge 3$ and $D\ge k$.
    Let
    \begin{equation}
        \gamma_\USD(k,D) = \frac{1}{2} - \frac{1}{2} \sqrt{1-\Big( \frac{K_*(k)}{D} \Big)^2},
    \end{equation}
    where we define $K_*(k)$ variationally via
    \begin{equation}
        \Phi_{d,k}(x) = e^{-d x^{k-1}} + d x^{k-1} - \frac{d(k-1)}{k}x^k - \frac{d}{k} 
        \quad \text{and} \quad 
        K_*(k) := \sup\Bigl\{ d>0 : \max_{x\in[0,1]}\Phi_{d,k}(x)=1-\frac{d}{k} 
        %\ \text{(with unique maximizer at $x=0$)}
        \Bigr\}.
        \label{eq:admissible}
    \end{equation}
    Let  $(B, J) \sim\mathbb{G}_{\mathrm{ER}}(N,k,D)$, where $B\in \FF_2^{M\times N}$ is the incidence matrix of a random Erd\H{o}s-R\'enyi hypergraph.
    For every fixed $\delta \in (0,1)$ and $\gamma \in [0, \gamma_\USD)$,
    the quantum USD decoder of \cite{chailloux2023quantum} can efficiently solve the $(B, \gamma,\delta)$-quantum decoding problem with high probability $(1-o_{N}(1))$ over the choice of $B$.
    With \cref{thm:gamma_vs_beta}, this implies that for any $\eps >0$ and
    \begin{equation} \label{eq:beta-ALG}
        \beta <  \beta_{\ALG}(k, D) := \tanh^{-1} \Big(\frac{K_*(k)}{D} \Big),
    \end{equation}
    and sufficiently large $N$, DQI can efficiently solve the $(H_{B, J}, \beta,\eps)$-Gibbs sampling problem with probability at least $1-\sqrt{\delta}/{\eps} - o_N(1)$ over the choice of $(B,J)$. 
\end{theorem}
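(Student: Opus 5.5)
The plan is to show that coherent bit-flip noise becomes a classical erasure channel after the unambiguous state discrimination (USD) measurement of \cite{chailloux2023quantum}. Erasure decoding of $C^\perp$ then reduces to a rank question about a random sparse $\FF_2$ matrix, which we handle with the known satisfiability/rank threshold of random $k$-XORSAT.

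\textbf{Step 1: USD turns the noise into erasures.} On each qubit $j$ we apply the optimal USD measurement between $\ket{\psi_{0,\gamma}}$ and $\ket{\psi_{1,\gamma}}$. Their overlap is $\braket{\psi_{0,\gamma}|\psi_{1,\gamma}} = 2\sqrt{\gamma(1-\gamma)}$. The measurement therefore reveals $d_j$ with certainty with probability $1-2\sqrt{\gamma(1-\gamma)}$, and otherwise outputs ``erasure''. The erasure probability is
\[
\epsilon := 2\sqrt{\gamma(1-\gamma)} = \sqrt{1-(1-2\gamma)^2}.
\]
This probability does not depend on $d_j$, and since the input is a product state, erasures occur independently across coordinates and independently of $\dv$. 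Let $S\subseteq[M]$ be the random erased set; we learn $d_j$ exactly for all $j\notin S$.

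\textbf{Step 2: erasure decoding is a rank condition.} We recover $\dv$ by solving the linear system $B^T\dv=\mathbf 0$ with the unerased coordinates fixed, using Gaussian elimination in $\poly(N)$ time. The solution is unique exactly when no nonzero codeword of $C^\perp$ is supported inside $S$. Equivalently, the rows $\{B_i\}_{i\in S}$, i.e.\ the erased hyperedges, must be linearly independent over $\FF_2$. Conditioned on $S$, these rows are $|S|\approx \epsilon M\approx \epsilon ND/k$ i.i.d.\ uniform $k$-subsets of $[N]$. They form a random $k$-XORSAT matrix with constraint-to-variable ratio $\epsilon D/k$, i.e.\ average variable degree $d=\epsilon D$.

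\textbf{Step 3: invoke the $k$-XORSAT rank threshold.} We use the known threshold (Dubois--Mandler; Pittel--Sorkin; Ayre--Coja-Oghlan--Gao--M\"uller). For $k\ge3$, such a random matrix has full row rank with probability $1-o_N(1)$ iff $d<K_*(k)$. The variational formula \eqref{eq:admissible} is precisely the statement that the $2$-core/annealed free-energy functional $\Phi_{d,k}$ is maximized at $x=0$, which characterizes this threshold. Some care is needed here: we must check that the multigraph model with replacement in \cref{def:erdos_renyi} and the random, concentrated size of $|S|$ do not affect the threshold. Both are handled by monotonicity in the number of rows plus concentration of $|S|$ around $\epsilon M$. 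With this, the condition becomes $\epsilon D<K_*(k)$, i.e.
\[
\gamma<\tfrac12-\tfrac12\sqrt{1-(K_*/D)^2}=\gamma_\USD.
\]
We then pass from averaged to quenched success. The joint failure probability over $(B,S)$ is $o_N(1)$. By Markov's inequality, with probability $1-o_N(1)$ over $B$, the conditional failure probability over the erasures (and over uniform $\dv\in C^\perp$) is at most $\delta$ for any fixed $\delta>0$. This solves the $(B,\gamma,\delta)$-quantum decoding problem.

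\textbf{Step 4: convert to temperature.} We substitute into \cref{thm:gamma_vs_beta}. There $\sech\beta=1-2\gamma$, so $\epsilon=\sqrt{1-\sech^2\beta}=\tanh\beta$. The decodability condition $\epsilon D<K_*$ is therefore exactly $\beta<\tanh^{-1}(K_*/D)$. Since $\gamma\mapsto\beta$ is increasing, any $\beta<\beta_\ALG$ corresponds to some $\gamma<\gamma_\USD$. Combining the $o_N(1)$ failure probability over $B$ with the $\sqrt\delta/\eps$ failure probability over $J$ from \cref{thm:gamma_vs_beta} gives the claimed bound.

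The main obstacle is Step 3. We must match $K_*(k)$ exactly to the literature's full-rank threshold and transfer that result to this ensemble, with repeated hyperedges and a random number of rows. If the literature statement is only for the satisfiability threshold, we would instead use the equivalence between full row rank and the $2$-core having fewer rows than columns, and prove the latter directly via the core's density.
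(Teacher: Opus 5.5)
Your proposal is correct and follows essentially the same route as the paper. USD turns the coherent noise into independent erasures at rate $2\sqrt{\gamma(1-\gamma)}=\tanh\beta$, and decoding reduces to $B_S$ having full row rank; the random row count is absorbed by monotonicity under row deletion plus a Chernoff bound before invoking the full-rank threshold at $K_*(k)$, and Markov's inequality over $B$ together with \cref{thm:gamma_vs_beta} then finishes the argument, exactly as in the paper's one-sided coupling proof.
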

We prove \cref{thm:gamma_USD} in Section~\ref{sec:USD}. The proof gives a polynomial-time DQI-based sampler with $o_N(1)$ expected total variation distance over $(B,J)$.
\begin{remark}[Meaning of $K_*(k)$]
$K_*(k)$ is in fact quite close to $k$. We show later in \cref{cor:dk-asymptotics} that
    \[
        \frac{K_*(k)}{k}\longrightarrow 1
        \qquad\text{as }k\to\infty.
    \]
Consequently, for $D=\alpha k$ with fixed $\alpha>1$,  $\beta_\ALG(k,\alpha k)\to\tanh^{-1}(1/\alpha)$ as $k\to\infty$.
\end{remark}

\begin{remark}[Total variation versus normalized Wasserstein distance]
\label{rmk:tvd-vs-wasserstein}
Throughout the paper we use two distance measures on probability distributions over $\bSig_N=\{+1,-1\}^N$. Our positive results for DQI (\cref{thm:gamma_vs_beta,thm:gamma_USD}) are stated in the stronger total variation distance $d_\TV$, while our hardness results against stable algorithms (\cref{thm:shatter-hardness}) are stated in the weaker normalized Wasserstein distance $W_1$ (see \cref{eq:W1-def} for a formal definition). The two are related by
\[
W_1(\mu,\nu)\le 2\,d_\TV(\mu,\nu)
\qquad\text{for all }\mu,\nu\text{ on }\bSig_N,
\]
which follows by taking the optimal TV coupling and using $\|\bX-\bY\|_1\le 2N$ for any $\bX,\bY\in\bSig_N$. Hence, an $\eps$-TVD sampler is also a $2\eps$-Wasserstein sampler, and hardness in $W_1$ also implies hardness in total variation.
\end{remark}

Observe that for any fixed $\alpha > 1$ and $D=\alpha k$, \cref{thm:shatter-hardness,thm:gamma_USD} tell us that the DQI+USD can efficiently sample up to $\beta_{\ALG}(k, \alpha k)\simeq \tanh^{-1}(1/\alpha)$ for large $k$, while stable algorithms provably fail at a much smaller $\beta_{\dyn}(k, \alpha k) \approx \sqrt{2\ln k / (\alpha k)}$.
(In particular the latter tends to $0$ as $k\to\infty$ but the former does not.)
This shows that failure of stable algorithms does not pose a barrier to the success of DQI.

However, we emphasize that these results do not prove quantum advantage.
In fact, the performance of the quantum USD decoder was previously shown in \cite{chailloux2023quantum} to be matched by Prange's algorithm~\cite{prange1962use}.
Indeed, as we show in Appendix~\ref{sec:prange}, Prange's algorithm can be adapted to give a classical (but non-stable) method which provably samples with $o_N(1)$ expected total variation distance whenever $D \tanh\beta  < K_*(k)$.
This is exactly the same sufficient condition \eqref{eq:beta-ALG} for DQI+USD at each fixed $k\ge 3$ and $D\ge k$.
To adapt Prange's algorithm for sampling, one simply retains each clause independently with probability $\tanh \beta$ and samples uniformly from the solutions of the resulting linear system.
Both Prange's and DQI+USD evade the obstruction against stable algorithms because of their linear-algebraic subroutines.
Nevertheless, DQI with other decoders was shown to surpass Prange's algorithm on certain XORSAT instances~\cite{jordan2024optimization}, so we imagine that better temperatures than $\beta_{\ALG}$ might be achievable with DQI using better decoders.

To put possible improvements to DQI-based sampling in context, we observe that information-theoretic limits can be placed on the maximum decodable rate for random $B$ from a sparse Max-XORSAT ensemble such as $\GG_{\ER}(N, k,D)$. 
This provides a limitation to the DQI approach to Gibbs sampling. 
If DQI uses a decoder based on executing a classical algorithm in superposition to uncompute codewords of $C^\perp$, then the Shannon bound yields the following inverse temperature limit
\begin{equation}
\beta_{\mathrm{Shannon}}(k,D) = \sech^{-1} \left[ 1 - 2 H_2^{-1} \left( \frac{k}{D} \right) \right],
\end{equation}
where $H_2^{-1}$ is the inverse of the binary entropy function for its branch on $[0,1/2]$.
For a more general quantum decoder, the Holevo bound gives
\begin{equation}  \label{eq:beta-Holevo}
 \beta_{\mathrm{Holevo}}(k,D) = \sech^{-1} \left[ 2 \sqrt{ H_2^{-1} \left( 1 - \frac{k}{D} \right)\left( 1 - H_2^{-1}\left( 1 - \frac{k}{D} \right)\right)} \right].
\end{equation}
See \cref{app:holevo_shannon} for a derivation. We remark these channel bounds are necessary conditions for reliable decoding, and they do not imply that efficient decoders exist to attain them. Furthermore, as noted in \cite[Theorem 12]{garrido2026regev}, there exist circumstances where Regev's reduction can, in principle, reliably yield the desired output state even when it is information-theoretically impossible to reliably solve the quantum decoding problem. Thus, while the Holevo bound is a natural threshold, which characterizes most natural approaches to coherent Gibbs sampling via DQI and Regev reductions, it does not rigorously imply success or failure of the approach in complete generality.

Several interesting questions remain open:
\begin{enumerate}
    \item Can DQI with an efficient classical decoder surpass stable algorithms? When $k \ll D$, the Shannon upper bound is
    \[
    \beta_{\rm Shannon}(k, D)
        \simeq 2\sqrt{H_2^{-1}(k/D)} = \sqrt{\frac{k/D}{\Theta(\ln (D/k))}},
    \]
    which exceeds the stable algorithmic threshold $\beta_\dyn$ for some $k,D$. 
	While this shows that it is information-theoretically possible for DQI with a classical decoder to beat stable algorithms, known polynomial-time classical decoding algorithms fall increasingly short of the Shannon bound as $k$ and $D$ grow.
	For various finite $(k,D)$, it would be interesting to compare the predicted threshold $\beta_\dyn(k,D)$, which we give in \cref{app:spin-glass}, with the performance of DQI with various classical LDPC decoders.

    \item Can DQI efficiently perform Gibbs sampling at temperatures below the replica symmetry breaking (RSB) transition? The RSB transition occurs at a colder temperature than the shattering/dynamical transition ($\beta_\RS\ge \beta_\dyn$), and is thought to be more challenging to reach classically.
	The Holevo bound gives the following upper bound on the inverse temperature accessible through reliable quantum decoding:
    \begin{equation*}
        \beta_{\rm Holevo}(k,D)  \simeq  \sqrt{\frac{k \ln 4}{D}} \qquad \text{ when } \quad k \ll D.
    \end{equation*}
    The exact expression \eqref{eq:beta-Holevo} coincides with the inverse temperature at which the annealed entropy vanishes, and marginally exceeds $\beta_\RS(k,D)$ in the Erd\H{o}s-R\'enyi ensemble at every finite $D > k$ (see \cref{prop:beta-RS-Holevo}).
    The question, however, is whether polynomial-size quantum circuits can approach this information-theoretic limit closely enough to also cross the replica symmetry breaking barrier.
    \item Is there an ensemble of Hamiltonians where DQI efficiently samples at temperatures beyond those reached by all known classical algorithms?
	Answering this question would require, at a minimum, comparing the performance of DQI with the best available efficient decoders against the predicted thresholds for stable algorithms, as well as linear-algebraic methods such as Prange's algorithm.
	Besides Max-XORSAT ensembles at various finite connectivity parameters $(k,D)$, it would also be interesting to consider more general LINSAT problems over larger finite fields for which DQI is suitable~\cite{jordan2024optimization}.
	For example, the Max-$q$-cut problem can also be reduced via DQI to decoding linear code over $\FF_q$; the corresponding stable algorithm threshold corresponds to the dynamical transition in the antiferromagnetic Potts model. 
	We may also consider quantum Hamiltonians, for which DQI can be generalized to produce Gibbs samples~\cite{schmidhuber2025hamiltonian}. Recently, \cite{zlokapa2025average} showed that topological barriers from glass transition in certain random ensembles of Pauli Hamiltonians also obstruct stable quantum algorithms. It would be interesting to explore if Hamiltonian DQI can overcome these barriers, similar to what we have shown in this work for classical spin glasses.
\end{enumerate}

\section{Gibbs Sampling by Decoded Quantum Interferometry}\label{sec:gibbsSamplingProofs}
In this section,  we prove Theorems \ref{thm:gamma_vs_beta} and \ref{thm:gamma_USD}. The first theorem is established in \cref{sec:DQI}, where we reduce the Gibbs sampling problem at inverse temperature $\beta$ to a quantum decoding problem at error rate $\gamma=\frac12 - \frac12\sech \beta$. Then in \cref{sec:USD}, we characterize the performance of the USD-based decoder for the Erd\H{o}s-R\'enyi ensemble and prove the second theorem.

\subsection{Reducing Classical Gibbs Sampling to Quantum Decoding}
\label{sec:DQI}
For any bit string $\mathbf{x} \in \mathbb{F}_2^N$ let $(-1)^{\mathbf{x}} = ((-1)^{x_1},\ldots,(-1)^{x_N})$. To sample from the classical Gibbs distribution $\mu_{\beta,H}(\mathbf{z}) = \frac{1}{Z_{\beta,H}} e^{-\beta H(\mathbf{z})}$ it would suffice to produce the quantum state
\begin{equation}
    \ket{\sqrt{\mu_{\beta,H}}} = \sum_{\mathbf{x} \in \mathbb{F}_2^N } \sqrt{\mu_{\beta,H}[(-1)^{\mathbf{x}}]} \ket{\mathbf{x}}
\end{equation}
and then measure in the computational basis. Instead, given a Hamiltonian $H_{B,J}$ of the form \cref{eq:HBJ} we find it more convenient to produce the state
\begin{equation}
    \label{eq:phitarget}
    \ket{\Phi_{\mathrm{target}}} =  \frac{1}{\sqrt{|\ker B|}} \sum_{\xv \in \FF_2^N}  \sqrt{\mu_{\beta,H_{B,J}}[(-1)^{\xv}]} \ket{B \xv}.
\end{equation}
Since $H_{B,J}(\zv=(-1)^\xv)$ in \cref{eq:HBJ} depends on $\mathbf{x}$ only through the string $B\mathbf{x}$, any two inputs $\mathbf{x}$ that produce the same $\mathbf{y}=B\mathbf{x}$ carry the same Gibbs weight $\mu_{\beta,H_{B,J}}[(-1)^{\mathbf{x}}]$, and their amplitudes in \cref{eq:phitarget} pile up on the single basis state $\ket{\mathbf{y}}$ for some $\yv=B\xv$. Every such $\mathbf{y}$ has the same number of preimages, namely $|\ker B|$, so the prefactor $1/\sqrt{|\ker B|}$ makes $\ket{\Phi_{\mathrm{target}}}$ a unit vector. Measuring it in the computational basis then returns a given $\mathbf{y}$ with probability equal to the total Gibbs weight of its preimages $\{\mathbf{x}: B\mathbf{x}=\mathbf{y}\}$, and since those preimages are equally weighted, the Gibbs distribution restricted to them is uniform. Therefore, given $\ket{\Phi_{\mathrm{target}}}$ one can sample from $\mu_{\beta,H}$ exactly: measure in the computational basis to obtain a string $\mathbf{y}=B\mathbf{x}$, then output a uniformly random solution $\mathbf{x}$ of $B\mathbf{x}=\mathbf{y}$, which Gaussian elimination over $\mathbb{F}_2$ produces as a particular solution plus a uniformly random element of $\ker B$. Note when $B$ has full column rank, $\ker B=\{\mathbf{0}\}$ and this last step reduces to deterministically inverting $\mathbf{y}=B\mathbf{x}$.

By \cref{eq:gibbs} and \cref{eq:HBJ}, we can rewrite \cref{eq:phitarget} as
\begin{equation}
    \ket{\Phi_{\mathrm{target}}} = \sum_{\mathbf{c} \in C} \mathcal{F}(\mathbf{c}-\mathbf{v}) \ket{\mathbf{c}}
\end{equation}
where $C$ is the code generated by $B$,
\begin{equation}
    C = \{B \mathbf{x} : \mathbf{x} \in \mathbb{F}_2^N \},
\end{equation}
$\mathbf{v} \in \mathbb{F}_2^M$ is the bit string such that $(-1)^{\mathbf{v}} = (J_1,\ldots,J_M)$, namely
\begin{equation}
    v_i = (1-J_i)/2,
\end{equation}
and
\begin{equation}
    \label{eq:F}
    \mathcal{F}(\mathbf{y}) \propto e^{-\beta |\mathbf{y}|},
\end{equation}
with the proportionality constant fixed by the normalization of $\ket{\Phi_{\mathrm{target}}}$.
Here $|\cdot|$ denotes Hamming weight.
(One might naively expect $\mathcal{F}(\mathbf{y}) \propto  e^{-\beta |\mathbf{y}|/2}$ as the amplitude is the square root of the Gibbs measure, but recall that $H$ assigns an energy of $-1$ to each satisfied clause and $+1$ to each unsatisfied clause and thus, up to an offset, is equal to twice the Hamming weight of $\mathbf{y}$.)

By the Fourier convolution theorem (see e.g., \cite{SMR26}), the Hadamard transform of $\ket{\Phi_{\mathrm{target}}}$ is
\begin{equation}
    \label{eq:phihat}
    \ket{\widehat{\Phi}_{\mathrm{target}}} = \sum_{\mathbf{d} \in C^\perp} \sum_{\mathbf{e} \in \mathbb{F}_2^M} (-1)^{\mathbf{v} \cdot{\mathbf{e}}} \widehat{\mathcal{F}}(\mathbf{e}) \ket{\mathbf{d} + \mathbf{e}},
\end{equation}
where $\widehat{\mathcal{F}}$ is the Hadamard transform of $\mathcal{F}$ and $C^\perp$ is the code dual to $C$, given by \cref{eq:Cperp}. By \cref{eq:F}, we see that $\mathcal{F}$ is a product of functions on each variable $y_1,\ldots,y_M$, namely
\begin{equation}
    \mathcal{F}(\mathbf{y}) \propto \prod_{i=1}^M e^{-\beta y_i}.
\end{equation}
This makes it easy to obtain the Hadamard transform, namely
\begin{equation}
    \label{eq:Fhat}
    \widehat{\mathcal{F}}(\mathbf{e}) \propto \sqrt{1-\gamma}^{M-|\mathbf{e}|} \sqrt{\gamma}^{|\mathbf{e}|},
\end{equation}
where
\begin{equation}
    \label{eq:gamma_from_beta}
    \gamma = \frac{1}{2} - \frac{1}{2} \sech \left(\beta \right).
\end{equation}
Solving this for $\beta$ yields \cref{eq:betamax}.

The DQI/Regev method of preparing $\ket{\Phi_{\mathrm{target}}}$ is as follows. First, prepare
\begin{equation}
    \left( \frac{1}{\sqrt{|C^\perp|}} \sum_{\mathbf{d} \in C^\perp} \ket{\mathbf{d}} \right) \otimes \sum_{\mathbf{e} \in \mathbb{F}_2^M} \widehat{\mathcal{F}}(\mathbf{e}) \ket{\mathbf{e}}.
\end{equation}
Preparing a uniform superposition over a linear code can be done by very efficient quantum circuits, as shown in e.g., \cite{FJ24}. Preparing $\sum_{\mathbf{e} \in \mathbb{F}_2^M} \widehat{\mathcal{F}}(\mathbf{e}) \ket{\mathbf{e}}$ is efficient because by \cref{eq:Fhat} it is a tensor product over single-qubit states. Second, reversibly add the content of the first register to the second register.
\begin{equation}
        \label{eq:before_dec}
        \to \frac{1}{\sqrt{|C^\perp|}} \sum_{\mathbf{d} \in C^\perp} \ket{\mathbf{d}} \sum_{\mathbf{e} \in \mathbb{F}_2^M} \widehat{\cF}(\mathbf{e}) \ket{\mathbf{d} + \mathbf{e}}.
\end{equation}
Next, we wish to impose the phase $(-1)^{\mathbf{e} \cdot \mathbf{v}}$ and uncompute the string $\mathbf{d}$ from the first register.

To show how to do this, let us illustrate the idea with a simpler case where the states $\sum_{\mathbf{e} \in \mathbb{F}_2^M} \widehat{\cF}(\mathbf{e}) \ket{\mathbf{d} + \mathbf{e}}$ for different $\mathbf{d} \in C^\perp$ are orthogonal. This is possible, for example, if the support of $\hat\cF(\mathbf{e})$ is limited to $\mathbf{e}$ of Hamming weight less than half the minimum Hamming distance between codewords in $C^\perp$. In this case, there exists a unitary $\tilde{U}_{\mathrm{dec}}$ that solves the quantum decoding problem perfectly. That is, 
\begin{equation}
\label{eq:udec_ideal}
    \tilde{U}_{\mathrm{dec}} \ket{0\ldots0} \sum_{\mathbf{e} \in \mathbb{F}_2^M} \widehat{\mathcal{F}}(\mathbf{e}) \ket{\mathbf{d} + \mathbf{e}} = \ket{\mathbf{d}} \sum_{\mathbf{e} \in \mathbb{F}_2^M} \widehat{\mathcal{F}}(\mathbf{e}) \ket{\mathbf{d} + \mathbf{e}}.
\end{equation}
(In fact, to prove \cref{thm:gamma_vs_beta} we will need to consider non-orthogonal states and therefore use a $U_{\mathrm{dec}}$ that achieves this 
only up to a small decoder failure probability $\delta$,
and we analyze the effect of this error at the end of this subsection.
For illustrative purposes, we here consider the ideal error-free case.) 

Under this assumption \eqref{eq:udec_ideal} about the idealized decoder $\tilde{U}_{\rm dec}$, we can obtain the state $\ket{\widehat{\Phi}_{\mathrm{target}}}$ described in \cref{eq:phihat} as follows.
Let $Z^\mathbf{v} = Z^{v_1} \otimes \ldots \otimes Z^{v_M}$ be the quantum circuit operation in which a Pauli $Z$ is applied to each bit for which $\mathbf{v}$ is one. Start by applying $Z^{\mathbf{v}}$ to the first register of the state \cref{eq:before_dec}, yielding
\begin{equation}
        \to \frac{1}{\sqrt{|C^\perp|}} \sum_{\mathbf{d} \in C^\perp} (-1)^{\mathbf{d} \cdot \mathbf{v}} \ket{\mathbf{d}} \sum_{\mathbf{e} \in \mathbb{F}_2^M} \widehat{\mathcal{F}}(\mathbf{e}) \ket{\mathbf{d} + \mathbf{e}}.
\end{equation}
Next, apply $\tilde{U}_{\mathrm{dec}}^\dag$ to the pair of registers. This can be done by applying the inverses of the gates composing $\tilde{U}_{\mathrm{dec}}$ in reverse order. By \cref{eq:udec_ideal}, this yields
\begin{equation}
        \to \frac{1}{\sqrt{|C^\perp|}} \sum_{\mathbf{d} \in C^\perp} (-1)^{\mathbf{d} \cdot \mathbf{v}} \ket{\mathbf{0}} \sum_{\mathbf{e} \in \mathbb{F}_2^M} \widehat{\mathcal{F}}(\mathbf{e}) \ket{\mathbf{d} + \mathbf{e}}.
\end{equation}
Next, apply $Z^{\mathbf{v}}$ to the second register, yielding
\begin{equation}
        \to \frac{1}{\sqrt{|C^\perp|}} \sum_{\mathbf{d} \in C^\perp}  \ket{\mathbf{0}} \sum_{\mathbf{e} \in \mathbb{F}_2^M} (-1)^{\mathbf{e} \cdot \mathbf{v}} \widehat{\mathcal{F}}(\mathbf{e}) \ket{\mathbf{d} + \mathbf{e}}.
\end{equation}
Next, we can discard the first register, since it is unentangled with the rest. This leaves our desired state $\ket{\widehat{\Phi}_{\mathrm{target}}}$. Therefore the final step is to apply the Hadamard transform, which then yields the final state $\ket{\Phi_{\mathrm{target}}}$.

By inverting \cref{eq:gamma_from_beta} one obtains \cref{eq:betamax}. However, this does not yet complete the proof of \cref{thm:gamma_vs_beta}, because the quantum decoding problem cannot be solved perfectly and \cref{eq:udec_ideal} cannot be exact. Instead, one must show that a small failure probability of the decoder translates into a small total variation distance between the quantum algorithm output and the target Gibbs distribution. We address this now, using the following theorem proven in \cite{SMR26}, but restated here in notation more suitable for the present manuscript.

\begin{theorem}[{Adapted from \cite[Theorem~4]{SMR26}}]\label{thm:distributionalcloseness}
    Fix $B \in \mathbb{F}_2^{M \times N}$ and $\beta > 0$, with $\gamma$ related to $\beta$ via \cref{eq:gamma_from_beta}.
    Let $U_{\mathrm{dec}}$ be any unitary that solves the $(B, \gamma,\delta)$-quantum decoding problem of \cref{def:QDP}.
    Let $\mathcal{D}_\mathrm{target}(B,J,\beta)$ denote the distribution over $\mathbf{y}\in\mathbb{F}_2^M$ obtained by measuring the ideal state $\ket{\Phi_{\mathrm{target}}}$ of \cref{eq:phitarget} in the computational basis (which is supported on the code $C=\{B\mathbf{x}:\mathbf{x}\in\mathbb{F}_2^N\}$), and let $\mathcal{D}_\mathrm{actual}(B,J,\beta)$ denote the output distribution of the above procedure when $U_{\mathrm{dec}}$ replaces the idealized decoder $\tilde{U}_{\mathrm{dec}}$.
    Then, for uniformly random $J \in \{+1,-1\}^M$,
    \begin{equation}
        \label{eq:tvd}
        \mathbb{E}_{J} \Big[d_\TV\bigl(\mathcal{D}_\mathrm{actual}(B,J,\beta),\, \mathcal{D}_\mathrm{target}(B,J,\beta)\bigr)\Big] \leq \sqrt{\delta},
    \end{equation}
    where $d_\TV$ denotes the total variation distance.
\end{theorem}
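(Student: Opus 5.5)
The plan is to compare the actual final state with the ideal one register by register, bound their Euclidean distance by the decoder's failure amplitude, and then average over $J$ to remove the phases $(-1)^{\mathbf{d}\cdot\mathbf{v}}$. Write $\ket{\chi_{\mathbf d}}=\sum_{\mathbf e}\widehat{\cF}(\mathbf e)\ket{\mathbf d+\mathbf e}$, so that $\ket{\chi_{\mathbf d}}=\ket{\psi_{\mathbf d,\gamma}}$ after normalization by \cref{eq:Fhat}. Decompose the decoder action as
\[
U_{\mathrm{dec}}\ket{\mathbf 0}\ket{\chi_{\mathbf d}}=\ket{\mathbf d}\ket{\chi_{\mathbf d}}+\ket{\mathrm{err}_{\mathbf d}} .
\]
Because $U_{\mathrm{dec}}$ solves the $(B,\gamma,\delta)$-quantum decoding problem, the success probability averaged over uniform $\mathbf d\in C^\perp$ is at least $1-\delta$. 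From this I aim to derive $\frac{1}{|C^\perp|}\sum_{\mathbf d}\|\ket{\mathrm{err}_{\mathbf d}}\|^2\le 2\delta$ or a similar $O(\delta)$ bound. The route is to project onto $\ket{\mathbf d}$ in the first register and use $\|a-b\|^2=2-2\,\mathrm{Re}\langle a,b\rangle$ for unit vectors. Since the relevant overlap is only bounded by $\sqrt{p_{\mathbf d}}$, some care is needed; it may be cleaner to let the decoder output $\mathbf d$ into a fresh register and define the error relative to that.

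Next I track the algorithm with $U_{\mathrm{dec}}^\dagger$ in place of $\tilde U_{\mathrm{dec}}$. The state before uncomputation is $\frac{1}{\sqrt{|C^\perp|}}\sum_{\mathbf d}(-1)^{\mathbf d\cdot\mathbf v}\ket{\mathbf d}\ket{\chi_{\mathbf d}}$. Applying $U_{\mathrm{dec}}^\dagger$ to $\ket{\mathbf d}\ket{\chi_{\mathbf d}}=U_{\mathrm{dec}}\ket{\mathbf 0}\ket{\chi_{\mathbf d}}-\ket{\mathrm{err}_{\mathbf d}}$ gives
\[
\ket{\mathbf 0}\ket{\chi_{\mathbf d}}-U_{\mathrm{dec}}^\dagger\ket{\mathrm{err}_{\mathbf d}} .
\]
The actual pre-measurement state is therefore the ideal state (with $\ket{\mathbf 0}$ in the first register) minus
\[
\ket{\Delta_J}=\frac{1}{\sqrt{|C^\perp|}}\sum_{\mathbf d}(-1)^{\mathbf d\cdot\mathbf v}\,(Z^{\mathbf v}H^{\otimes M})\,U_{\mathrm{dec}}^\dagger\ket{\mathrm{err}_{\mathbf d}} ,
\]
with the remaining unitaries applied to the second register. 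Discarding the first register and measuring only the second is a channel. The total variation distance of the outcome distributions is therefore at most the trace distance of the full pure states, which is at most $\|\ket{\Delta_J}\|$.

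The key step, and the main obstacle, is that the error vectors for different $\mathbf d$ need not be orthogonal. A triangle inequality would lose a factor $\sqrt{|C^\perp|}$; the averaging over $J$ is what rescues this. Since $\mathbf v$ is uniform on $\FF_2^M$, the characters $\mathbf d\mapsto(-1)^{\mathbf d\cdot\mathbf v}$ on the subspace $C^\perp$ are orthonormal in expectation: $\EE_{\mathbf v}[(-1)^{(\mathbf d+\mathbf d')\cdot\mathbf v}]=\mathbb 1[\mathbf d=\mathbf d']$. The unitaries $Z^{\mathbf v}H^{\otimes M}$ act identically on every term of the sum and can be dropped from the norm. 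Hence
\[
\EE_J\|\ket{\Delta_J}\|^2=\frac{1}{|C^\perp|}\sum_{\mathbf d}\|\ket{\mathrm{err}_{\mathbf d}}\|^2=O(\delta).
\]
By Jensen, $\EE_J\,d_\TV\le\EE_J\|\ket{\Delta_J}\|\le\sqrt{\EE_J\|\ket{\Delta_J}\|^2}$. To get exactly $\sqrt\delta$ rather than $\sqrt{2\delta}$, I would use the sharper bound $d_\TV\le\sqrt{1-|\langle\text{ideal}|\text{actual}\rangle|^2}$. Then $1-\EE_J|\langle\cdot|\cdot\rangle|^2\le 1-(1-\delta)=\delta$, where the overlap is computed via the same character-orthogonality and its average equals the average success amplitude $\frac{1}{|C^\perp|}\sum_{\mathbf d}\langle\mathbf d,\chi_{\mathbf d}|U_{\mathrm{dec}}|\mathbf 0,\chi_{\mathbf d}\rangle$, with the phase convention of $U_{\mathrm{dec}}$ chosen so that this is real.
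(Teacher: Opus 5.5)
The paper does not prove this statement itself; it imports it from \cite{SMR26}, so there is no in-paper argument to compare against. Your skeleton is the natural one: write the actual state as the ideal vector minus $\ket{\Delta_J}$, kill cross terms using $\EE_{\vv}[(-1)^{(\dv+\dv')\cdot\vv}]=0$ for $\dv\neq\dv'$ in $C^\perp$, drop the common unitaries, and finish with Jensen. However, the step everything rests on is not established, and the final sharpening is wrong.

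\textbf{The missing step.} You cannot extract an $O(\delta)$ bound on $\frac{1}{|C^\perp|}\sum_{\dv}\|\ket{\mathrm{err}_{\dv}}\|^2$ from the decoding guarantee if $U_{\mathrm{dec}}$ is simply inverted. The guarantee controls only $p_{\dv}=\|(\bra{\dv}\otimes I)\,U_{\mathrm{dec}}\ket{\mathbf 0}\ket{\chi_{\dv}}\|^2$. By contrast, $\|\ket{\mathrm{err}_{\dv}}\|^2=2-2\,\mathrm{Re}\,\bra{\dv}\bra{\chi_{\dv}}U_{\mathrm{dec}}\ket{\mathbf 0}\ket{\chi_{\dv}}$ also depends on whether the received register is left intact. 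Concretely, suppose $U_{\mathrm{dec}}$ has these amplitudes near $1$. Then $(I\otimes Y_1)U_{\mathrm{dec}}$ has the same $p_{\dv}$, but its amplitudes are near $\bra{\chi_{\dv}}Y_1\ket{\chi_{\dv}}=0$ (the $\ket{\chi_{\dv}}$ are real), so $\|\ket{\mathrm{err}_{\dv}}\|^2\approx 2$.

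The statement therefore has to be read with an uncomputation built the way you only hint at. Take any decoding circuit $A$ (ancillas suppressed), let $P_{\yv}$ project $A$'s output register onto $\yv$, and set $K_{\yv}=A^\dagger P_{\yv}A$. Use the involution $V=A^\dagger\,\mathrm{CNOT}\,A$, where the CNOT XORs $A$'s output into the first register. Then:
\begin{itemize}
\item $V\ket{\dv}\ket{\chi_{\dv}}=\sum_{\yv}\ket{\dv+\yv}K_{\yv}\ket{\chi_{\dv}}$, and $V$ itself solves the decoding problem with the same $p_{\dv}$;
\item the relevant amplitude is exactly $\bra{\chi_{\dv}}K_{\dv}\ket{\chi_{\dv}}=p_{\dv}$;
\item the error splits into two orthogonal pieces (first register $\mathbf 0$ or not), each of squared norm $1-p_{\dv}$.
\end{itemize}
Hence $\frac{1}{|C^\perp|}\sum_{\dv}\|\ket{\mathrm{err}_{\dv}}\|^2=2(1-\bar p)\le 2\delta$, where $\bar p$ is the average success probability. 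This is your target bound, but it holds only after this construction.

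\textbf{Normalization.} Because the $\ket{\chi_{\dv}}$ overlap, your ideal vector $\ket{I_J}=|C^\perp|^{-1/2}\sum_{\dv}(-1)^{\dv\cdot\vv}\ket{\mathbf 0}\ket{\chi_{\dv}}$ is not a unit vector. Its squared norm is $\|I_J\|^2=\sum_{\mathbf c\in C^\perp}(-1)^{\mathbf c\cdot\vv}(\tanh\beta)^{|\mathbf c|}$, which has mean $1$ over $J$ but fluctuates. Your trace-distance step still survives. For a unit vector $a$ and any $u\neq 0$, the trace distance between $a$ and $u/\|u\|$ equals the distance from $a$ to $\mathrm{span}(u)$, which is at most $\|a-u\|$. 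With both repairs, your argument proves $\EE_J\, d_{\TV}\le\sqrt{2\delta}$.

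\textbf{The constant.} The sharpening to $\sqrt{\delta}$ does not go through. Let $\ket{A_J}$ be the actual final state. Character orthogonality gives $\EE_J\langle I_J|A_J\rangle=\bar p$ for the unnormalized overlap. What you need, though, is $\EE_J|\langle I_J|A_J\rangle|^2/\|I_J\|^2$. Cauchy--Schwarz with $\EE_J\|I_J\|^2=1$ only yields $\ge\bar p^{\,2}\ge(1-\delta)^2$, i.e.\ a bound of $\sqrt{2\delta-\delta^2}$. Choosing a phase convention cannot increase a modulus.

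The inequality $\EE_J|\langle\cdot|\cdot\rangle|^2\ge 1-\delta$ you assert is in fact false. Take $C^\perp=\{\mathbf 0,\mathbf c\}$ with $\langle\chi_{\mathbf 0}|\chi_{\mathbf c}\rangle=t$, and $V$ built from a real projective measurement. The normalized squared overlap is then $(\bar p\pm t/2)^2/(1\pm t)$. At $t=3/5$ and $\bar p=3/4$ (below the Helstrom value $0.9$), its $J$-average is about $0.60<3/4=1-\delta$.

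Getting exactly $\sqrt{\delta}$ requires one of two things. Either assume a decoder hypothesis of the form $\frac{1}{|C^\perp|}\sum_{\dv}\|\ket{\mathrm{err}_{\dv}}\|^2\le\delta$, under which your argument gives the bound verbatim, or find a genuinely different argument.
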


\begin{proof}[Proof of \cref{thm:gamma_vs_beta}]
Any concrete decoder, such as the USD decoder analyzed in \cref{sec:USD} below, will have a small but nonzero failure rate. In the context of quantum decoding algorithms, the effect of decoding errors is subtle to analyze because the error terms from different codewords in the superposition over $C^\perp$ can add constructively. Consequently, the naive expectation that a decoder failure rate $\delta$ implies that the output of DQI deviates from the target distribution by at most $O(\delta)$ is not always true. (See \cite{chailloux2023quantum} for an explicit counterexample.) Nevertheless, for average-case coefficients $J$, \cref{thm:distributionalcloseness} tells us that a small decoder failure rate induces only a small deviation from the desired output distribution. Interestingly, this suggests that for the decoding errors to be tolerable it is necessary that the coefficients $J_1,\ldots,J_M \in \{+1,-1\}$ in the Hamiltonian be chosen randomly rather than adversarially. We obtain \cref{thm:gamma_vs_beta} as a corollary of \cref{thm:distributionalcloseness} by the following argument 

Fix an arbitrary spin configuration $\mathbf{z}_0\in\{+1,-1\}^N$, and define a stochastic
post-processing channel $K_B$ as follows. On input $\mathbf{y}\in C$, choose $\mathbf{x}$ uniformly from
\[
    A_\mathbf{y} := \{\mathbf{x}\in\mathbb{F}_2^N:B\mathbf{x}=\mathbf{y}\}
\]
and output $\mathbf{z}=(-1)^{\mathbf{x}}$.  On input $\mathbf{y}\notin C$, output $\mathbf{z}_0$.

Let
\[
    \nu^{\DQI}_{B,J,\beta}
    :=
    K_B\mathcal{D}_{\mathrm{actual}}(B,J,\beta).
\]
Because $H_{B,J}((-1)^{\mathbf{x}})$ depends on $\mathbf{x}$ only through $B\mathbf{x}$, the Gibbs measure is uniform on the set $A_\mathbf{y}$.  Hence
\[
    K_B\mathcal{D}_{\mathrm{target}}(B,J,\beta)
    =
    \mu_{\beta,H_{B,J}}.
\]
Total variation distance contracts under stochastic post-processing, so
\begin{align}
    \mathbb{E}_J\!\left[
        d_\TV\!\left(
            \nu^{\DQI}_{B,J,\beta},~
            \mu_{\beta,H_{B,J}}
        \right)
    \right]
    &\leq
    \mathbb{E}_J\!\left[
        d_\TV\!\left(
            \mathcal{D}_{\mathrm{actual}}(B,J,\beta),
            \mathcal{D}_{\mathrm{target}}(B,J,\beta)
        \right)
    \right] \nonumber \\
    &\leq \sqrt{\delta}.
\end{align}
Therefore, by Markov's inequality, for every $\eps>0$,
\[
    \PP_J\!\left[
        d_\TV\!\left(
            \nu^{\DQI}_{B,J,\beta},~
            \mu_{\beta,H_{B,J}}
        \right)>\eps
    \right]
    \leq
    \frac{\sqrt{\delta}}{\eps}.
    \qedhere
\]
\end{proof}

\subsection{Solving the Quantum Decoding Problem Using USD}
\label{sec:USD}

In this section, we briefly recount the method for solving quantum decoding problems using Unambiguous State Discrimination (USD) measurements, which was introduced in \cite{chailloux2023quantum}. We then analyze its performance on our specific ensemble of codes arising from the Erd\H{o}s-R\'enyi ensemble, and prove \cref{thm:gamma_USD} as a consequence.

An obvious way to infer $\mathbf{d} \in C^\perp$ given the state $\sum_{\mathbf{e} \in \mathbb{F}_2^M} \widehat{\mathcal{F}}(\mathbf{e}) \ket{\mathbf{d} + \mathbf{e}}$ would be to measure the state in the computational basis and try to classically infer $\mathbf{d}$ from the measured value of $\mathbf{d} + \mathbf{e}$. Due to the product structure of $\widehat{\mathcal{F}}$, apparent from \cref{eq:Fhat}, our quantum decoding problem is one of inferring the codeword $\mathbf{d}$ after each bit in $\mathbf{d}$ has been sent separately through the following channel that takes a classical input to a quantum output state:
\begin{equation}
    \label{eq:quantum_channel}
    d_j \to \sqrt{1-\gamma} \ket{d_j} + \sqrt{\gamma} \ket{\lnot d_j}.
\end{equation}
By measuring in the computational basis, we collapse this classical-quantum channel to a classical bit flip channel where each bit is flipped with probability $\gamma$. This is called the binary symmetric channel $\mathrm{BSC}(\gamma)$. Inferring the original codeword $\mathbf{d}$ from the bit string $\mathbf{d} + \mathbf{e}$ resulting from the binary symmetric channel is a well-studied problem in classical error correction, for which classical algorithms such as belief propagation are highly effective when $B^T$ is very sparse.

In the context of Regev's reduction, one cannot actually measure $\ket{\mathbf{d} + \mathbf{e}}$ because one must solve a superposition over quantum decoding problems for different codewords $\mathbf{d} \in C^\perp$. Instead, one must implement the chosen classical decoding algorithm as a reversible circuit and apply it to the superposition. The measurements serve only as a conceptual tool to aid us in thinking about the quantum decoding problem.

By using the conceptual tool of fictitious measurements one can formulate other approaches to the quantum decoding problem. In \cite{chailloux2023quantum}, Unambiguous State Discrimination (USD) measurements were introduced for this purpose. Given two non-orthogonal states $\ket{\psi_0}$ and $\ket{\psi_1}$, the corresponding USD measurement is a three-outcome POVM that takes $\ket{\psi_b}$ as input for unknown $b \in \{0, 1\}$ and will either identify with certainty that $b = 0$, identify with certainty that $b = 1$, or produce the $\perp$ outcome which indicates no information regarding the value of $b$. A larger inner product $|\langle \psi_0 | \psi_1 \rangle|$ necessitates a larger probability on the $\perp$ outcome. Specifically, in the case of uniform priors, the optimal USD measurement achieves
\begin{equation}
    p_{\perp} = |\langle \psi_0 | \psi_1 \rangle|.
\end{equation}
Hence, if one applies a USD measurement to the states produced by the classical-quantum channel of \cref{eq:quantum_channel} then instead of a classical binary symmetric channel with bit flip probability $\gamma$ one obtains a classical erasure channel with bit erasure probability
\begin{equation}
    \label{eq:erasure_rate}
    \hat{e} = 2 \sqrt{\gamma (1-\gamma)}.
\end{equation}

Information-theoretically, the bit flip channel has higher capacity than this erasure channel. However, the advantage of an erasure channel is that it is easier to saturate the information-theoretic capacity using polynomial-time decoders. Specifically, one can find the erased bit values by solving the set of $\mathbb{F}_2$-linear equations defined by the parity check matrix. This does not depend on $B^T$ being sparse. Rather, $e$ erasure errors can be reliably decoded if random $N \times e$ submatrices of $B^T \in \mathbb{F}_2^{N \times M}$ have high probability of having rank at least $e$. This is a purely classical question about the matrices $B$ drawn from the Erd\H{o}s-R\'enyi hypergraph ensemble, which we now address.

We first describe precisely what must be decoded. By \cref{def:erdos_renyi}, the rows of $B$ are i.i.d.\ uniform $k$-subsets of $V$. The USD erasure channel erases each of the $M$ coordinates of a codeword $\mathbf{d}\in C^\perp\subseteq\FF_2^M$ independently with probability $\hat e$; let $S\subseteq\{1,\dots,M\}$ denote the resulting (random) set of erased coordinates. The erased bits $(d_j)_{j\in S}$ are uniquely recoverable from the parity checks $B^T\mathbf{d}=\mathbf{0}$ by Gaussian elimination if and only if the columns of $B^T$ indexed by $S$ are linearly independent over $\FF_2$, equivalently if and only if the row-submatrix $B_S$ (the rows of $B$ indexed by $S$) has full row rank. Conditioned on $|S|=m$, the matrix $B_S$ consists of exactly $m$ i.i.d.\ uniform $k$-subsets of $V$, as in \cref{def:erdos_renyi} of the ensemble $\GG_\ER$.
The only discrepancy between $B_S$ and the ensemble $\GG_\ER(N,k,\hat e D)$ is that the number of rows $|S|=m\sim\text{Bin}(M,\hat e)$ is random, whereas \cref{def:erdos_renyi} prescribes the deterministic row count $\lfloor N\hat e D/k\rfloor$ for $\GG_\ER$. This discrepancy slightly complicates the proof that follows, but nevertheless can be reconciled as we will see at the end of this subsection.

We next invoke the following result, quoted from \cite[Theorem~2]{coja2024k} (which was previously proved in \cite{dubois20023,pittel2016satisfiability} in the respective cases $k=3$ and $k>3$). The premise of \cite[Theorem~2]{coja2024k} matches our $\GG_\ER(N,k,D)$ exactly: each row of the matrix is an independent uniformly random $k$-subset of $[N]$ and the row count is deterministic.

\begin{theorem}[{Adapted from \cite[Theorem~2]{coja2024k}}]
\label{thm:coja-fullrank}
    For $k\ge 3$, let $K_*(k)$ be as defined in \cref{eq:admissible}, which we reproduce here for the reader's convenience:
    \[
        \Phi_{D,k}(x) = e^{-D x^{k-1}} + D x^{k-1} - \frac{D(k-1)}{k}x^k - \frac{D}{k} 
        \quad \text{and} \quad 
        K_*(k) := \sup\Bigl\{ D>0 : \max_{x\in[0,1]}\Phi_{D,k}(x)=1-\frac{D}{k} 
        %\ \text{(with unique maximizer at $x=0$)}
        \Bigr\}.
    \]
    Then an incidence matrix $B\in \FF_2^{M\times N}$ drawn from the Erd\H{o}s-R\'enyi ensemble $\mathbb{G}_\ER(N,k,D=kM/N)$ in \cref{def:erdos_renyi} has full row rank with high probability if $D< K_*(k)$, and fails to have full row rank with high probability if $D> K_*(k)$.
\end{theorem}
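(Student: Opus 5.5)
The plan is to reduce full row rank of $B$ to the non-existence of a nonzero dependency among its rows, that is, a nonzero $\dv\in\FF_2^M$ with $B^T\dv=\mathbf 0$. Equivalently, every system $B\xv=\vv$ must be solvable, which is the random $k$-XORSAT satisfiability question. I would split the possible dependencies into \emph{small} ones, with $|\dv|\le \eta M$, and \emph{linear} ones, with $|\dv|>\eta M$, and treat them by different arguments. The constant $K_*(k)$ should only enter through the linear regime.

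\textbf{Small dependencies.} A set $S$ of rows with $\sum_{i\in S} B_i=\mathbf 0$ forces every vertex covered by $S$ to be covered at least twice. So $S$ spans at most $k|S|/2$ vertices. A first-moment bound gives
\[
\EE\#\{S: |S|=s,\ S \text{ dependent}\}\le \binom{M}{s}\binom{N}{ks/2}\Bigl(\tfrac{ks}{2N}\Bigr)^{ks}.
\]
Since $k\ge 3$ we have $ks/2>s$, so this sum is $O(1/N^{c})$ summed over $1\le s\le \eta M$ for small enough $\eta=\eta(k,D)$. The duplicated-row case $s=2$ is handled directly: its probability is $O(M^2/\binom{N}{k})=o(1)$. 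This step is routine.

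\textbf{Linear dependencies: peel to the 2-core.} A row containing a vertex of degree one in the hypergraph cannot occur in any dependency. I would therefore strip such rows repeatedly; this leaves the kernel of $B^T$ unchanged in the relevant sense. Standard core analysis (differential equations, or the Poisson branching heuristic made rigorous by Pittel–Sorkin style arguments) shows the 2-core has $N'\approx N(1-e^{-Dx^{k-1}}-Dx^{k-1}e^{-Dx^{k-1}})$ vertices and $M'\approx (D/k)Nx^k$ rows. Here $x$ is the largest fixed point of $x=1-e^{-Dx^{k-1}}$. Conditioned on its degree sequence, the core is a uniform configuration model with minimum degree $2$.

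\textbf{First moment on the core, and the main obstacle.} In the core I count weight-$\rho M'$ vectors $\dv$ with $B_{\rm core}^T\dv=\mathbf 0$ by a saddle-point computation over the configuration model. The requirement is that, for every $\rho\in[\eta,1]$, the exponential growth rate is negative unless $D<K_*(k)$. I expect to show that this rate reduces to $\max_{x}\Phi_{D,k}(x)-(1-D/k)$ evaluated away from the trivial maximizer $x=0$. The point of the comparison is that $N\cdot\max\Phi$ is the predicted nullity, while $N(1-D/k)$ is the nullity of a full-row-rank matrix. For the converse direction, if $D>K_*$ then $\max\Phi>1-D/k$. Here I would use the fact that the rank satisfies $\rank B\le N(1-\max\Phi+o(1))$, which is the lower bound on nullity, obtained either by a second-moment argument on the core's kernel or by the Aizenman–Sims–Starr/interpolation bound of Coja-Oghlan et al. This forces $\rank B<M$ whp.

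The main obstacle is this optimization. The multivariate saddle point over vertex-degree profiles must collapse exactly to the one-dimensional functional $\Phi_{D,k}$. The first moment must also not be spoiled by rare core degree sequences, which requires conditioning on a typical core and proving concentration. I would try first to follow Dubois–Mandler's bound on the core excess with the explicit variational parametrization, taking $x$ as the density of a planted fixed point.
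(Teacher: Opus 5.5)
The paper does not prove this statement. It quotes \cite[Theorem~2]{coja2024k} after checking that the premise there ($M$ i.i.d.\ uniform $k$-subsets, deterministic row count) is exactly $\GG_\ER(N,k,D)$, and it points to \cite{dubois20023,pittel2016satisfiability} for the earlier proofs. Your outline reconstructs the Dubois--Mandler/Pittel--Sorkin route, and its preliminary steps are sound. Your union bound simplifies to $[C_{k,D}(s/N)^{k/2-1}]^s$ and kills all dependencies of size at most $\eta M$. Every dependency uses only 2-core rows, and the core sizes you quote are correct.

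The converse direction is much easier than you suggest. We have $\Phi_{D,k}'(x)=D(k-1)x^{k-2}(1-x-e^{-Dx^{k-1}})$, so the maximum over $[0,1]$ is attained at $0$ or at the largest root $x_2$ of $x=1-e^{-Dx^{k-1}}$. Using the fixed-point equation, $\Phi_{D,k}(x_2)-\Phi_{D,k}(0)=-x_2+Dx_2^{k-1}-\tfrac{D(k-1)}{k}x_2^k$, which is exactly the limiting value of $(M'-N')/N$ for the core. So $D>K_*(k)$ means the core has linearly more rows than vertices, and its rows are dependent for dimension reasons alone. No interpolation or second-moment bound is needed.

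\textbf{The gap.} The genuine gap is the linear-weight first moment on the core. This step carries the whole content of the theorem, and you only announce it. The mechanism you announce is also not quite right. The growth rate is a function $F(\rho)$ of the dependency weight, and only at $\rho=1/2$ does it relate to $\Phi_{D,k}$. There the count is $\binom{M'}{M'/2}\,2^{-N'+o(N)}$, so $F(1/2)=\ln 2\,(\Phi_{D,k}(x_2)-\Phi_{D,k}(0))$.

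What you must prove is that $\sup_{\rho\in[\eta,1]}F(\rho)=F(1/2)$ for every $D<K_*(k)$. In words, no other weight may produce exponentially many dependencies before the core excess turns positive. This is a global optimization over $\rho$ and the saddle-point parameter, uniformly over typical truncated-Poisson core degree sequences. It does not follow from any property of $\Phi_{D,k}$. It is precisely the delicate analysis done in \cite{dubois20023} for $k=3$ and in \cite{pittel2016satisfiability} for general $k$, which \cite{coja2024k} replaces with a shorter argument.

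Until you supply that step, your argument proves full row rank only up to whatever $D$ a crude bound on $F$ certifies, not up to $K_*(k)$. Also, your stated requirement ``negative unless $D<K_*(k)$'' has the implication reversed.
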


\begin{lemma}
    \label{cor:dk-asymptotics}
    We have
    \[
        \lim_{k\to\infty}\frac{K_*(k)}{k}=1.
    \]
\end{lemma}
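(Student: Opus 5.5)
We prove matching bounds $\limsup_k K_*(k)/k \le 1$ and $\liminf_k K_*(k)/k \ge 1$, working directly with the variational definition \eqref{eq:admissible}. Note first that $\Phi_{d,k}(0)=1-d/k$, so $d$ is admissible exactly when $\Phi_{d,k}(x)\le \Phi_{d,k}(0)$ for all $x\in[0,1]$.

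\emph{Upper bound.} Evaluate at $x=1$:
\[
\Phi_{d,k}(1)=e^{-d}+d-\tfrac{d(k-1)}{k}-\tfrac dk=e^{-d}.
\]
Admissibility therefore forces $e^{-d}\le 1-d/k$, and in particular $d<k$. Hence $K_*(k)\le k$, which gives $\limsup_k K_*(k)/k\le 1$. (To be careful, I will check that the supremum is over a set containing small $d$, so that it is nonempty. This also follows from the lower-bound analysis.)

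\emph{Lower bound.} Fix $\eta\in(0,1)$ and set $d=(1-\eta)k$. I will show that $\Phi_{d,k}(x)\le 1-d/k$ on $[0,1]$ for all $k$ large. Substitute $t=dx^{k-1}\in[0,d]$, so that $x=(t/d)^{1/(k-1)}$. Then
\[
\Phi_{d,k}(x)-\Bigl(1-\tfrac dk\Bigr)=e^{-t}-1+t-\tfrac{k-1}{k}\,t\,x .
\]
So admissibility amounts to showing
\[
g(t):=e^{-t}-1+t\Bigl(1-\tfrac{k-1}{k}x\Bigr)\le 0 .
\]

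I split $[0,1]$ into three regimes.

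\begin{enumerate}
\item \emph{Small $x$, say $x\le 1/2$.} Here $t\le d2^{-(k-1)}$ is exponentially small. Then $e^{-t}-1+t\le t^2/2$, while $t\bigl(1-\tfrac{k-1}{k}x\bigr)\ge t/2$ is only linear in $t$. The crude bound fails, because $g\le t^2/2+t$ is the wrong sign. The correct comparison is that $e^{-t}-1+t\approx t^2/2$ is dominated by the negative contribution coming from the $x^k$ term. Rewriting in $x$, the expression equals $e^{-dx^{k-1}}-1+dx^{k-1}-\tfrac{d(k-1)}{k}x^k$, which is $\approx d^2x^{2k-2}/2-dx^{k}$ up to the positive term $dx^{k-1}(1-x)\cdot$. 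This shows the expansion has to be organized as $g(t)=(e^{-t}-1+t)-t\,\tfrac{k-1}{k}x+t/k\cdot$ with care.
\item \emph{Middle range.}
\item \emph{$x$ close to $1$.}
\end{enumerate}

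\emph{Cleaner route.} The small-$x$ bookkeeping above is getting muddled, so I will instead study the derivative:
\[
\Phi'_{d,k}(x)=(k-1)dx^{k-2}\bigl(1-e^{-dx^{k-1}}-x\bigr).
\]
Its sign is that of $h(x)=1-x-e^{-dx^{k-1}}$. Near $0$, $h>0$, so $\Phi$ initially increases. This means the maximum is never uniquely at $0$ in a neighborhood; admissibility is really the condition that the global max value does not exceed $\Phi(0)$.

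The critical points solve $1-x=e^{-dx^{k-1}}$. For $d=(1-\eta)k$ these are a tiny root near $0$ (where $\Phi$ is a local min) and possibly a root $x^*$ near $1$, where $dx^{k-1}$ is of order one. At such a root, write $y=1-x^*$, so $e^{-t}=y$ with $t=dx^{*\,k-1}$. Then
\[
\Phi-\Phi(0)=y-1+t-\tfrac{k-1}{k}t(1-y)=y-1+\tfrac tk+\tfrac{k-1}{k}ty .
\]
Because $t\le d<k$, we have $t/k\le 1-\eta$. Also $ty=te^{-t}\le 1/e$. Together these make the expression at most $y-1+(1-\eta)+O(y)$, which is negative unless $y$ is not small. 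When $y$ is not small, $x^*\le 1-c$ and $t\approx dx^{*\,k-1}$ is exponentially small, contradicting $e^{-t}=y<1$.

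\emph{Main obstacle.} The delicate step is making this last case analysis rigorous: showing that every interior critical point has $\Phi\le\Phi(0)$ uniformly for large $k$, including the endpoint $x=1$, where $\Phi(1)=e^{-d}\le 1-d/k$ holds for large $k$. Once that is done, $K_*(k)\ge(1-\eta)k$ eventually, and letting $\eta\to 0$ gives $\liminf_k K_*(k)/k\ge 1$.
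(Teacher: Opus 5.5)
\textbf{Upper bound.} This part is correct and more self-contained than the paper's. Since $\Phi_{d,k}(1)=e^{-d}>0$, any admissible $d$ must satisfy $1-d/k>0$. The paper instead argues via \cref{thm:coja-fullrank} and the fact that $B$ cannot have full row rank when $M>N$.

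\textbf{Lower bound.} Your final strategy is a genuinely different route from the paper's: compare $\Phi_{d,k}(0)$ with the value at every interior critical point and at $x=1$. The paper instead bounds $G_{D,k}$ pointwise, first on $x\le 1-k^{-2/3}$ and then on the window $x=1-y/k$ with several subcases. As written, though, your lower bound has a real gap, located exactly where the argument lives.
\begin{itemize}
\item The sign claim is wrong. Using $e^{-u}\ge 1-u$ gives $h(x)\le x\,(dx^{k-2}-1)<0$ near $0$. So $\Phi_{d,k}$ initially \emph{decreases}, and $x=0$ is a strict local maximum. If $\Phi_{d,k}$ increased initially, no $d$ would be admissible, which contradicts what you are trying to prove.
\item There is no ``tiny root near $0$''. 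The same bound shows $h<0$ on all of $(0,1-c]$ as soon as $(1-\eta)k(1-c)^{k-2}<1$. For large $k$ the interior critical points are a local minimum near $1-(\ln k)/k$ and a local maximum within roughly $e^{-d}$ of $1$. At the latter, $t\approx d$ is of order $k$, not order one.
\item This matters because your estimate ``$y-1+(1-\eta)+O(y)$'' only helps when $y=e^{-t}$ is small. Also, $\tfrac{k-1}{k}ty$ is not $O(y)$ uniformly, since $t$ can be of order $k$. So critical points with bounded $t$ must be excluded separately. ``Contradicting $e^{-t}=y<1$'' does not do this, because any small $t>0$ is consistent with $e^{-t}<1$.
\end{itemize}

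\textbf{How to close the gap.} The missing step is short. At an interior critical point, write $t=dx_*^{k-1}\le (1-\eta)k$, so that $1-x_*=e^{-t}$. Your identity then gives
\[
\Phi_{d,k}(x_*)-\Phi_{d,k}(0)=e^{-t}-1+\frac{t}{k}+\frac{k-1}{k}\,te^{-t}\le (1+t)e^{-t}-\eta .
\]
Fix $T=T(\eta)$ with $(1+T)e^{-T}<\eta$. Since $(1+t)e^{-t}$ is decreasing, the right-hand side is negative whenever $t>T$. If instead $t\le T$, then $x_*=1-e^{-t}\le 1-e^{-T}$. For large $k$ this lies in the region where $h<0$ (take $c=e^{-T}$), so no such critical point exists. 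Finally, $\Phi_{d,k}(1)=e^{-(1-\eta)k}<\eta=\Phi_{d,k}(0)$ for large $k$. Hence the maximum is attained only at $x=0$.

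\textbf{Comparison with the paper.} Completed this way, your route is cleaner than the paper's. Factoring $\Phi'_{d,k}$ and using the critical-point equation reduces everything to the single function $(1+t)e^{-t}$, with explicit thresholds. It needs neither the expansion $x^{k-1}=e^{-y}(1+O(y^2/k))$ nor the $o(1)$ bookkeeping near $x=1$. The paper's pointwise approach avoids locating critical points, but pays for it with the region and subcase analysis. You should also delete the abandoned three-regime sketch; it is not part of the proof.
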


\begin{proof}
The upper bound $K_*(k)\le k$ is immediate. Indeed, if $K_*(k)>(1+\eps) k$ for any $\eps>0$, then one can choose $D=(1+\eps/2)k < K_*(k)$, but the matrix $B$ cannot have full row rank when $M=ND/k = (1+\eps/2)N > N$, yielding a contradiction.

It remains to prove the matching lower bound. 
We note from \cite[Fact 2.8]{coja2024k} that the maximizer of $\Phi_{D,k}(x)$ is unique and located at $x=0$ for $D< K_*(k)$.
Fix $\delta\in(0,1)$ and set $D=(1-\delta)k$ large. Define
\begin{equation}
    G_{D,k}(x):=\Phi_{D,k}(x)-\Phi_{D,k}(0) = e^{-D x^{k-1}} + D x^{k-1} - \frac{D(k-1)}{k}x^k -1 .
\end{equation}
We must show that $G_{D,k}(x)\le 0$ for all $x\in[0,1]$, with strict inequality for $x>0$, once $k$ is sufficiently large.\\

\noindent
To proceed, let us split $x \in (0,1]$ into two regions.\\

\noindent
\textbf{Region 1: $x\in (0,1-k^{-2/3}]$.} Using the fact that $e^{-u} \le 1-u+u^2/2$ for $u\ge0$, we have
\begin{equation}
    G_{D,k}(x) \le \frac{D^2}{2} x^{2k-2} - \frac{D(k-1)}{k}x^k = D x^k \Bigl(\frac{D}{2} x^{k-2} - \frac{k-1}{k}\Bigr).
\end{equation}
In this region, $x^{k-2} \le e^{-k^{1/3}/2}$ for sufficiently large $k$. With $D=(1-\delta)k$, we have $\frac{D}{2}x^{k-2}\le (1-\delta)\frac{k}{2}e^{-k^{1/3}/2}\xrightarrow{k\to\infty}0$. Therefore, $G_{D,k}(x) < 0$ throughout this region for sufficiently large $k$. \\

\noindent
\textbf{Region 2: $x\in [1-k^{-2/3},1]$.} Let us parameterize $x=1-y/k$ with $y\in[0,k^{1/3}]$. Then $x^{k-1}=e^{-y}(1+O(y^2/k))$. With $D=(1-\delta) k$, we have for sufficiently large $k$ that
\begin{equation}
    \label{eq:exp-d-x-k-1-bound}
    e^{-Dx^{k-1}} = \exp\Bigl[-(1-\delta) k e^{-y} \times (1+o(1) )\Bigr].
\end{equation}
Furthermore,
\begin{equation}
    D x^{k-1} - \frac{D(k-1)}{k}x^k = D x^{k-1} \Bigl[1- \Big(1-\frac{1}{k}\Big)\Big(1-\frac{y}{k}\Big)\Bigr] 
    = (1-\delta)(1+y) e^{-y} + O(e^{-y} (y+y^3)/k).
\end{equation}
Then
\begin{equation}
    \label{eq:Gdk-approx}
    G_{D,k}(x) = e^{-Dx^{k-1}} - 1 + (1-\delta)(1+y) e^{-y}  (1+ o(1)).
\end{equation}
To show $G_{D,k}(x) < 0$ in this region, we consider two further subcases:
\begin{itemize}
    \item When $y \le \frac12 \ln k$, we estimate the term \eqref{eq:exp-d-x-k-1-bound} via $k e^{-y} \ge \sqrt{k}$. Furthermore, note $(1+y)e^{-y} \le 1$ for all $y\ge 0$. Thus 
    \[
    G_{D,k}(x) \le -1 + (1-\delta) + o(1) = -\delta + o(1) < 0
    \]
    for sufficiently large $k$.
    \item When $\frac12 \ln k \le y \le k^{1/3}$, we have $ke^{-k^{1/3}} \le k e^{-y} \le \sqrt{k}$. Furthermore, 
    \[
    (1+y)e^{-y} \le (1+k^{1/3})/\sqrt{k} = o(1).
    \]
    If $ke^{-y}>\Omega(1)$ when $k$ is large, then recalling \eqref{eq:exp-d-x-k-1-bound} gives 
    \[
    e^{-Dx^{k-1}}-1 < -\Omega(1) 
    \]
    and plugging into \eqref{eq:Gdk-approx} gives $G_{D,k}(x) < 0$. If $ke^{-y}\leq o(1)$, then 
    \[
    e^{-Dx^{k-1}} - 1 = -Dx^{k-1} (1+o(1))=-(1-\delta)ke^{-y} (1+o(1)).
    \]
    Since $k \gg 1+y$, grouping terms in \eqref{eq:Gdk-approx} again implies that $G_{D,k}(x)<0$.
\end{itemize}

We have shown that $G_{D,k}(x)<0$ for all $x\in (0,1]$ for sufficiently large $k$, if $D=(1-\delta)k$ for any fixed $\delta\in (0,1)$. Consequently,
\begin{equation}
    \max_{x\in[0,1]}\Phi_{D,k}(x)=\Phi_{D,k}(0)=1-\frac{D}{k},
\end{equation}
so $D=(1-\delta)k$ satisfies the condition \eqref{eq:admissible} of $K_*(k)$ for all large $k$. Since $\delta>0$ was arbitrary, this shows
\begin{equation}
    \liminf_{k\to\infty}\frac{K_*(k)}{k}\ge 1.
\end{equation}
Combining with the upper bound $K_*(k)\le k$ completes the proof.
\end{proof}

We want to apply \cref{thm:coja-fullrank} to the submatrix $B_S$ sampled under the erasure channel to show that the USD-based decoder succeeds with high probability.
However, there is a mild discrepancy we mentioned earlier:  $B_S$ has random row count, whereas the premise of \cref{def:erdos_renyi} and \cref{thm:coja-fullrank} assumes an ensemble of matrices with deterministic row count.
We reconcile this by a one-sided coupling argument, exploiting that ``full row rank'' is preserved under deletion of rows.

\begin{proof}[Proof of \cref{thm:gamma_USD}]
Let $C^\perp$ be a code whose parity-check matrix's transpose $B$ is drawn from the Erd\H{o}s-R\'enyi hypergraph ensemble $\GG_\ER(N,k,D)$, so that the codewords of $C^\perp$ have length $M=\lfloor ND/k\rfloor$. Fix an erasure rate $\hat e$ with $\hat e D<K_*(k)$, and recall from the discussion above that decoding succeeds precisely when the row-submatrix $B_S$ indexed by the erased set $S$, with $|S|\sim\text{Bin}(M,\hat e)$, has full row rank. We now reduce this to \cref{thm:coja-fullrank} by a one-sided coupling that absorbs the random row count.

Since $\hat e D<K_*(k)$ strictly, we may fix $\eta>0$ small enough that $(1+\eta)\,\hat e D<K_*(k)$, and set $m^+:=\big\lfloor(1+\eta)\hat e M\big\rfloor$, so that the associated effective degree satisfies $k m^+/N=(1+\eta)\hat e D\,(1+o(1))<K_*(k)$ for large $N$. Draw $m^+$ i.i.d.\ uniform $k$-subsets of $V$, forming a matrix $B^+\sim\GG_\ER\!\big(N,k,km^+/N\big)$ with the deterministic row count required by \cref{thm:coja-fullrank}. Couple $B_S$ to $B^+$ by taking $B_S$ to be the first $|S|$ rows of $B^+$ on the event $\{|S|\le m^+\}$; this is a valid coupling, since conditioned on $|S|\le m^+$ both $B_S$ and the first $|S|$ rows of $B^+$ are $|S|$ i.i.d.\ uniform $k$-subsets of $V$. On the event $\{|S|\le m^+\}$ the rows of $B_S$ are a subset of the rows of $B^+$, and because full row rank is preserved under deletion of rows,
\[
    \{B^+\text{ has full row rank}\}\cap\{|S|\le m^+\}\ \subseteq\ \{B_S\text{ has full row rank}\}.
\]
Consequently
\[
    \PP\big[B_S\text{ not full row rank}\big]\ \le\ \PP\big[B^+\text{ not full row rank}\big]+\PP\big[|S|>m^+\big].
\]
The first term is $o_N(1)$ by \cref{thm:coja-fullrank}, applied to $B^+$ at effective degree $(1+\eta)\hat e D<K_*(k)$. The second term is $o_N(1)$ by a Chernoff bound:
 $|S|\sim\text{Bin}(M,\hat e)$ has mean $\hat e M= \Theta(N)$ for any fixed $\hat e$, and the threshold is $m^+ = (1+\eta) \hat{e} M+O(1)$, giving $\PP[|S| > m^+] \le e^{-\Omega(N)}$.
Hence, under the joint law of $B$ and $S$,
\[
    \PP_{B,S}\!\left[
        B_S\text{ is not full row rank}
    \right]
    =
    o_N(1).
\]
Since this probability is taken jointly over $B$ and $S$,
Markov's inequality applied to the conditional failure
probability gives, for every fixed $\delta>0$,
\[
    \PP_B\!\left[
        \PP_S\!\left(
            B_S\text{ is not full row rank}
            \,\middle|\,B
        \right)>\delta
    \right]
    \leq
    \frac{
        \PP_{B,S}\!\left[
            B_S\text{ is not full row rank}
        \right]
    }{\delta}
    =
    o_N(1).
\]
Thus, with probability $1-o_N(1)$ over $B$, the USD decoder
has failure probability at most $\delta$, and hence solves the
$(B,\gamma,\delta)$-quantum decoding problem.
Combining the condition $\hat e D<K_*(k)$ with the USD erasure
rate \cref{eq:erasure_rate}, we conclude that, for every fixed
$\delta>0$, the USD decoder solves the
$(B,\gamma,\delta)$-quantum decoding problem with probability
$1-o_N(1)$ over $B$ whenever
\begin{equation}
    \label{eq:gamma_USD_implicit}
    2\sqrt{\gamma(1-\gamma)}
    <
    \frac{K_*(k)}{D}.
\end{equation}
For the relevant range $\gamma\in[0,1/2]$, define the threshold
$\gamma_\USD(k,D)$ by equality in
\cref{eq:gamma_USD_implicit}; namely,
\begin{equation}
    \label{eq:gamma_USD}
    \gamma_\USD(k,D)
    =
    \frac{1}{2}
    -
    \frac{1}{2}
    \sqrt{
        1-\left(\frac{K_*(k)}{D}\right)^2
    }.
\end{equation}
Thus the USD decoder succeeds for every
$\gamma<\gamma_\USD(k,D)$, as claimed in
\cref{thm:gamma_USD}.

Substituting \cref{eq:gamma_USD} into
\cref{eq:betamax} gives
\begin{equation}
    \label{eq:betamax_val}
    \beta_{\ALG}(k,D)
    =
    \tanh^{-1}\!\left(\frac{K_*(k)}{D}\right).
\end{equation}
Applying \cref{thm:gamma_vs_beta} then gives the Gibbs-sampling
conclusion of \cref{thm:gamma_USD}.
\end{proof}

\section{Stable Algorithms, Shattering, and Hardness of Sampling}
\label{sec:stable-hardness}

In the remainder of this paper, we formalize the framework of stable algorithms and shattering, and prove our main hardness-of-sampling result, \cref{thm:shatter-hardness}.
We first formalize what it means for an algorithm to approximately sample from a Gibbs measure (\cref{subsec:stable-sampling}),
then introduce the shattering property and show how it implies disorder chaos and the failure of stable algorithms (\cref{sec:shattering}).
Finally, we illustrate the generality of our stability conditions through low-degree polynomial algorithms, and apply shattering to obstruct Glauber dynamics via a bottleneck argument (\cref{subsec:concrete-hardness}).

In what follows, we will often write
\begin{equation}
\label{eq:f-i-def}
H(\zv) = -\sum_{e\in E} J_{e} z_{e_1} z_{e_2} \cdots z_{e_k} =: -\sum_{i=1}^M f_i(\zv),
\end{equation}
i.e. $f_i(\zv)=1$ when $\zv$ satisfies clause $e^{(i)}\in E$ and otherwise $f_i(\zv)=-1$.

\subsection{Stable Sampling Algorithms}
\label{subsec:stable-sampling}

We consider an ``algorithm'' $\cA$ to be a function that takes as input a Hamiltonian $H$ on the set of $N$-bit strings $\bSig_N = \{+1,-1\}^N$, and outputs $\cA(H)$ which is a probability distribution over $\bSig_N$.
This encompasses the classical notion of randomized algorithms by interpreting $\cA$ as the conditional law of the output given the input $H$.
In the context of quantum algorithms, $\cA(H)$ corresponds to the diagonal of the output state's density matrix in the measurement basis. 

To formalize the notion of approximation in sampling tasks, we use the normalized Wasserstein distance, which is defined for probability measures $\mu,\nu$ on the Boolean cube as
\begin{equation}
    W_1(\mu,\nu) =  \inf_{\pi \in \Pi(\mu,\nu)} \frac{1}{N} 
    \E_{(\bX,\bY) \sim \pi} \Big[\big\|\bX - \bY\big\|_1\Big],
    \label{eq:W1-def}
\end{equation}
where the infimum is over all couplings $(\bX,\bY) \sim \pi$ with marginals $\bX \sim \mu$ and 
$\bY \sim \nu$, and $\|\cdot\|_1$ is the $\ell_1$-norm on $\RR^N$. Note that since $\bX, \bY \in \{+1,-1\}^N$, we have $\|\bX - \bY\|_1 = 2\, d_H(\bX,\bY)$, where $d_H$ denotes Hamming distance. The normalization by $N$ ensures $W_1 \in [0,2]$.

Then we define the following notion of approximate sampling algorithm:
\begin{definition}
\label{def:sampling}
We say an algorithm $\cA$ is an \textbf{$\eta$-approximate sampling algorithm} for an ensemble $\GG$ at specified temperature $\beta$ if it obeys the averaged Wasserstein estimate:
\begin{equation}
\label{eq:sampling-guarantee}
\EE_{H\sim \GG}[W_1(\mu_{\beta,H},\cA(H)) ]\leq \eta.
\end{equation}
\end{definition}

For the purpose of showing computational hardness, it is useful to consider the notion of \emph{stable algorithms} that was used for provable hardness results for optimization~\cite{gamarnik2020low,huang2022tight} and Gibbs sampling~\cite{AMS22sampling,alaoui2023sampling} for various dense spin glass models.
For the sparse models we consider in this paper, we define stability under changing a single clause in $H$.

\begin{definition}
\label{def:stability}
Let $H=-\sum_{i=1}^M f_i$ be a Hamiltonian as in \eqref{eq:f-i-def}, where each clause $f_i$ is chosen i.i.d. from some distribution.
Let $H' = H + f_M - f'_M$ be a Hamiltonian obtained by replacing a single clause $f_M$ of $H$ with another random i.i.d. copy $f'_M$.
We say an algorithm $\cA$ is $\iota$-\textbf{stable} if for $(H,H')$ chosen as above:
\[
\mathbb E
W_1\big(
\cA(H), \cA(H')
\big)
\leq \iota.
\]
\end{definition}

This notion of stability is rather lenient: it only requires that it is rare for a macroscopic fraction of the output to be affected by a single input clause change. 
For example, for fixed $(k, D)$, a local algorithm whose output on a given vertex depends only on its neighborhood of fixed radius is $O(1/N)$-stable, since resampling one clause can only affect $O(1)$ output coordinates in expectation. More generally, many natural classical algorithms---including message-passing algorithms, low-depth Boolean circuits, and low-degree polynomial algorithms---are stable. 
We will make this precise for low-degree polynomials in \cref{subsec:concrete-hardness}.

Our main hardness of sampling result, which was already mentioned as part of \cref{thm:shatter-hardness} is as follows.

\begin{theorem}[Failure of stable algorithms]
\label{thm:main-hardness}
Consider $H\sim\GG_{\rm ER}(N,k,D)$. Fix any sufficiently small $\eps>0$. 
For all $k \ge k_0(\eps)$, $D\ge k$, and $\beta$ satisfying
\begin{equation} \label{eq:shatter-range}
    \sqrt{\frac{2\ln k}{D}}(1+\eps) \le \beta \le  \sqrt{\frac{k}{2D}}
\end{equation}
stable algorithms fail to sample from the Gibbs measure $\mu_{\beta,H}$ with vanishing $o_N(1)$ expected normalized Wasserstein distance.
More precisely, there exists $\iota_* = \iota_*(k,D,\beta,\eps) > 0$ such
that if $\cA_N$ is $\iota$-stable for some $\iota\in (0,\iota_*]$ independent of $N$, then
    \[
    \liminf_{N\to\infty}
    \EE_{H\sim\GG_\ER(N,k,D)} \big[W_1(\mu_{\beta,H}, \cA_N(H))\big] >0.
    \]
\end{theorem}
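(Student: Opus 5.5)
The plan is the standard ``shattering $\Rightarrow$ disorder chaos $\Rightarrow$ failure of stable samplers'' argument of \cite{alaoui2023sampling,alaoui2024near}, adapted to single-clause resampling. Set $H^{(0)}=H$ and let $H^{(t)}$ be obtained from $H^{(t-1)}$ by resampling clause $t$ (relabelling so each step resamples a fresh clause). After $M$ steps, $H^{(M)}$ is independent of $H^{(0)}$, and each $H^{(t)}\sim\GG_\ER(N,k,D)$. Fix a small constant $\theta>0$ and look at $T=\lfloor \theta N\rfloor$ steps. By the triangle inequality and $\iota$-stability, $\EE\, W_1(\cA(H^{(0)}),\cA(H^{(T)}))\le T\iota$. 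This is not $N$-independent, so I will instead compare consecutive pairs along the path. The contradiction will come from showing that $\EE\,W_1(\mu_{\beta,H},\mu_{\beta,H'})\ge c>0$ for a \emph{single} clause resampling $(H,H')$, where $c$ is independent of $N$.

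Granting that, the conclusion follows. If $\EE\,W_1(\mu_{\beta,H},\cA(H))\le\eta$, then by the triangle inequality
\[
c\le \EE\, W_1(\mu_{\beta,H},\mu_{\beta,H'})\le 2\eta+\iota ,
\]
since $H'$ has the same law as $H$. Setting $\iota_*=c/2$ forces $\eta\ge c/4$, which gives the $\liminf>0$ claim.

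The core step is one-clause disorder chaos. I plan to obtain it from shattering in the interval \eqref{eq:shatter-range}, using the planted-model route described after \cref{thm:shatter-hardness}.

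\textbf{Contiguity.} In the replica-symmetric range (the upper endpoint $\sqrt{k/(2D)}$ is there to ensure annealed behaviour), I expect the pair $(H,\sigma)$ with $\sigma\sim\mu_{\beta,H}$ to be contiguous to the planted pair. In the planted pair, $\sigma$ is uniform and each clause is satisfied by $\sigma$ independently with probability $(1+\tanh\beta)/2$.

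\textbf{Soft OGP.} In the planted model, a first-moment computation over configurations at normalized distance in an annulus $[r_1,r_2]$ from $\sigma$ should show that their Gibbs mass is $e^{-\Omega(N)}$ relative to the mass near $\sigma$. Choosing $r_1\sim 1/k$ and $r_2$ of order $1$, the condition $\beta\ge(1+\eps)\sqrt{2\ln k/D}$ is exactly what makes the energy cost $\approx D\beta^2 r k\cdot(\ldots)$ beat the entropy $r\ln(1/r)$ when $r\approx 1/k$. This yields shattering into clusters of diameter $\le r_1 N$ that are pairwise separated by $\ge r_2 N$, each carrying $e^{-\Omega(N)}$ mass.

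\textbf{Chaos from clusters.} I need clause resampling to move the Gibbs measure between clusters at macroscopic $W_1$ scale. With a single clause change, the measure reweights by a bounded factor $e^{\pm2\beta}$, so $\mu_{\beta,H'}$ is absolutely continuous with density in $[e^{-4\beta},e^{4\beta}]$ relative to $\mu_{\beta,H}$. That alone does not give chaos. This is the main obstacle.

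My first attempt is to argue over $T=\theta N$ steps instead. Take two replicas, $\sigma\sim\mu_{\beta,H^{(0)}}$ and $\sigma'\sim\mu_{\beta,H^{(T)}}$. By the planted and contiguity analysis, their overlap concentrates near $0$ (the annealed value), so they lie at distance $\approx N/2>r_2N$. Meanwhile, for the stable algorithm, $\cA(H^{(t)})$ changes by $\le\iota$ per step on average. Picking $\iota$ small relative to $r_2-r_1$ and using the soft-OGP annulus as a forbidden band gives the contradiction: along the path, the algorithm's output, coupled to Gibbs samples within $\eta$, would have to cross the annulus at some step $t$. At that step, $W_1(\mu_{H^{(t)}},\mu_{H^{(t+1)}})$ would need to be large, or the algorithm's output would place macroscopic mass in the annulus, contradicting $\eta$-closeness plus soft OGP via Markov.

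Making this ``no jump across the gap'' argument rigorous uniformly in $t$ is where I expect the technical work to concentrate. The required ingredients are a union bound over the $\theta N$ intermediate Hamiltonians (which the $e^{-\Omega(N)}$ soft-OGP bounds easily absorb) and an averaging step to convert the expected stability into a per-step bound on $W_1$.
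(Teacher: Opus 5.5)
Your reduction of the theorem to single-clause disorder chaos, $\liminf_N\EE\,W_1(\mu_{\beta,H},\mu_{\beta,H'})\ge c>0$, together with the closing triangle inequality $c\le 2\eta+\iota$, is exactly the paper's argument. The gap is the disorder-chaos input itself, which you never establish.

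You set aside the one-clause reweighting $\mu_{\beta,H'}(\zv)\propto\mu_{\beta,H}(\zv)\,e^{\beta J\chi_E(\zv)}$, with $\chi_E(\zv)=\prod_{a\in E}z_a$, as an obstacle. Bounded density alone indeed gives nothing, but combined with shattering it is precisely the mechanism.
\begin{itemize}
\item Each cluster has Hamming diameter at most $\eps N/k$. So for a uniformly random $k$-set $E$, outside an $O(\sqrt{\eps})$-probability set of $E$, the sign $\chi_E$ equals a fixed value $s_\ell$ on all but an $O(\sqrt{\eps})$ fraction of the cluster's Gibbs mass. Here $s_\ell$ is the value of $\chi_E$ at the cluster's weighted majority representative.
\item Two independent Gibbs replicas are nearly orthogonal (\cref{cor:most-pairs-orthogonal}), so their restrictions to $O(1)$ random coordinates are asymptotically uniform. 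Hence for two clusters $i,j$ drawn by Gibbs weight, $s_i\neq s_j$ with probability close to $1/2$.
\item On that event the added clause multiplies $\mu_{\beta,H}(\cC_i)/\mu_{\beta,H}(\cC_j)$ by $e^{\pm2\beta}(1+O(\beta\sqrt{\eps}))$, and the two sign choices $J=\pm1$ differ by a factor $e^{\pm4\beta}$. So the compressed cluster measures stay a constant distance apart in total variation.
\item Because clusters are $(\eps/k)^3N$ apart, this becomes a constant lower bound on $W_1$.
\end{itemize}
This is \cref{prop:ratio-anticoncentrates,prop:compressed-TV-to-Wasserstein}, run with $M-1$ clauses plus two independent extra clauses and a Jensen step.

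The $T=\theta N$ path argument you offer instead cannot replace this, for three reasons.
\begin{itemize}
\item $\iota$-stability with $\iota$ fixed controls a single step in expectation only, so over $\theta N$ steps it allows $\Theta(N)$ total movement. You note this yourself, and nothing later repairs it.
\item The fact that $\sigma\sim\mu_{\beta,H^{(0)}}$ and $\sigma'\sim\mu_{\beta,H^{(T)}}$ have overlap near $0$ already holds at $T=0$, since two replicas of one replica-symmetric measure are nearly orthogonal. It says nothing about $W_1$, which is computed under the optimal coupling rather than the independent one.
\item The ``no jump across the gap'' step is the single-output OGP argument from optimization. A distribution-valued output can carry a small fraction of its mass across the gap at each step at small $W_1$ cost, so no step need be large and no macroscopic mass need sit in the annulus.
\end{itemize}

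Separately, your soft-OGP geometry is off. Near the lower endpoint $\beta\approx(1+\eps)\sqrt{2\ln k/D}$, the large-deviation rate $\approx N\,\frac{1-(1-2r)^k}{2}(1+\eps)^2\frac{\ln k}{k}$ beats the entropy $\approx Nr\ln(1/r)$ only for normalized distance $r\lesssim 2\eps/k$. One therefore gets a thin annulus at $r\approx\eps/k$ of width $(\eps/k)^2$, and clusters separated by only $(\eps/k)^3N$. You do not get a gap $[r_1,r_2]$ with $r_2$ of order one; indeed, typical Gibbs samples sit at distance about $N/2$ from one another.
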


We defer the proof of Theorem~\ref{thm:main-hardness} to the end of \cref{sec:shattering}, after establishing the necessary shattering and disorder chaos results.

\subsection{Shattering of Gibbs Measure and Disorder Chaos}
\label{sec:shattering}

The strategy to prove \cref{thm:main-hardness} has two steps:
first, we show that the Gibbs measure \emph{shatters} into exponentially many well-separated clusters;
second, we use shattering to establish \emph{disorder chaos}, i.e., extreme sensitivity of the Gibbs measure to single-clause perturbations.
See also recent work establishing shattering and hardness in related dense spin glass models~\cite{alaoui2023sampling, gamarnik2023shattering, alaoui2024near}.

\begin{definition}[Shattering]
\label{def:shattering}
For any $\eps>0$, we say $\mu_{\beta,H}$ is an $\eps$-\emph{shattered} Gibbs measure if for some constant $c>0$ (possibly depending on $\eps$) and w.h.p. with respect to randomness of $H$ as $N\to\infty$, there exist non-empty subsets $\cC_1,\ldots, \cC_K \subseteq \bSig_N$ called clusters such that:
\begin{enumerate}[label=\sf{S}\arabic*]
    \item 
    \label{it:each-cluster-small}
    (No dominant cluster)
    $\max_i \mu_{\beta,H}(\cC_i)\leq e^{-cN}$.
    \item 
    \label{it:diameter-bound}
    (Small cluster diameter)
    Each cluster has diameter $\diam(\cC_i)\leq 2\sqrt{\eps N}$ with respect to Euclidean distance on $\{+1,-1\}^N$, or Hamming distance $\eps N$.
    \item 
    \label{it:separation-bound}
    (Well-separated clusters)
    For any $i\neq j$, the Euclidean distance between the nearest points in $\cC_i$ and $\cC_j$ satisfies $d(\cC_i,\cC_j)\geq 2\sqrt{\eps^3 N}$.
    In terms of Hamming distance, this means $d_H(\cC_i,\cC_j) \ge \eps^3 N$.
    \item 
    \label{it:cover-gibbs-measure}
    (Clusters carry all mass)
    $\mu_{\beta,H}\bigl(\bigcup_{i=1}^K \cC_i\bigr)\geq 1-e^{-cN}$.
\end{enumerate}
\end{definition}

\begin{remark}
In the above, we use the correspondence between Euclidean and Hamming distances for $\{+1,-1\}^N$ vectors: $\|\zv - \zv'\|_2^2 = 4\, d_H(\zv,\zv')$, so $\|\zv - \zv'\|_2 = 2\sqrt{d_H(\zv,\zv')}$.
The separation bound \ref{it:separation-bound} in Euclidean distance $2\sqrt{\eps^3 N}$ thus corresponds to Hamming distance $\eps^3 N$.
Note also that properties \ref{it:each-cluster-small} and \ref{it:cover-gibbs-measure} together force the number of clusters $K$ to be exponentially large in $N$.
\end{remark}

As foreshadowed in \cref{thm:shatter-hardness}, we will show that the random XORSAT problem has a shattered Gibbs measure in a certain temperature range.

\begin{theorem}[Shattering in Max-XORSAT]
\label{thm:shattering}
Consider $H\sim\GG_{\rm ER}(N,k,D)$. For any $\eps>0$ and $(k,D,\beta)$ satisfying the hypotheses of \cref{thm:main-hardness},
the Gibbs measure $\mu_{\beta,H}$ is $(\eps/k)$-shattered.     
\end{theorem}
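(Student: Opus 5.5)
The plan is to follow the strategy outlined after \cref{thm:shatter-hardness}. We pass to a planted model, prove a soft overlap gap property there, and then convert it into clusters, in the spirit of \cite{alaoui2024near}.

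\textbf{Step 1 (planting and contiguity).} Define the planted ensemble: draw $\sigma^*\in\bSig_N$ uniformly and the hypergraph $E$ as in \cref{def:erdos_renyi}. Then draw each weight $J_i$ independently, with $J_i=\prod_{j\in e_i}\sigma^*_j$ with probability $e^{\beta}/(2\cosh\beta)$ and the opposite sign otherwise. The pair $(H,\sigma^*)$ in the planted model has the law of $(H,\zv)$ with $H\sim\GG_\ER$ and $\zv\sim\mu_{\beta,H}$, provided $Z_{\beta,H}$ concentrates around its annealed value $2^N(\cosh\beta)^M$. Because every $k$-spin term has mean zero over $\zv$, this annealed value is exact in expectation. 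I would establish contiguity through a second moment computation in the replica-symmetric regime. The second moment equals $\sum_{\zv,\zv'}\prod_i(1+\tanh^2\beta\, R(\zv,\zv')^k)$ up to normalization, where $R$ is the overlap. The exponent $\ln 2-H_2$-type entropy plus $\frac{D}{k}\ln(1+\tanh^2\beta\,q^k)$ must be maximized at $q=0$. The condition $\beta\le\sqrt{k/(2D)}$ gives $\frac Dk\tanh^2\beta\le\frac12$, which should suffice for large $k$; the small-$q$ and $q\to1$ ends need separate checks. With small subgraph conditioning or a simple Markov argument, this shows that events of exponentially small planted probability are also rare for a typical Gibbs sample under the null model.

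\textbf{Step 2 (soft OGP in the planted model).} Fix an overlap window $R(\zv,\sigma^*)\in[1-2\eps/k,\,1-2\eps^3/k\cdot(\text{const})]$, i.e.\ Hamming distance $d\in[\eps^3N/k,\eps N/k]$. I would show that with probability $1-e^{-cN}$, every $\zv$ in this annulus has $H(\zv)\ge H(\sigma^*)+\Omega(dk\ldots)$. More usefully, the Gibbs mass of the annulus relative to $\sigma^*$, namely $\sum_{\zv}e^{-\beta(H(\zv)-H(\sigma^*))}$, is $e^{-cN}$. This follows from a first moment computation. Flipping a set of $d=\theta N$ spins touches roughly $M(1-(1-\theta)^k)\approx D\theta N$ clauses when $\theta\ll1/k$. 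Each touched clause, being satisfied with bias $\tanh\beta$, contributes a factor $\EE e^{-2\beta f}$ equal to $\cosh\beta\cdot\sech\beta$-type terms $=1/\cosh^2\beta\cdot(\ldots)$. The net exponent is roughly $\theta N[\ln(e/\theta)-D\cdot\Theta(\beta^2)]$. With $\theta\approx\eps/k$ and $\beta^2D\ge(1+\eps)^2 2\ln k$, the energy term $D\beta^2/2\cdot\ldots$ beats the entropy $\ln(k/\eps)$. This is where the threshold $\sqrt{2\ln k/D}$ and $k\ge k_0(\eps)$ enter. The upper constraint on $\beta$ keeps the expansion in $\beta$ valid, since $\beta\lesssim 1/\sqrt2$ when $D\ge k$.

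\textbf{Step 3 (from soft OGP to clusters).} Transfer the Step 2 statement via Step 1: w.h.p.\ over $H$, a $(1-e^{-cN})$ fraction of $\mu_{\beta,H}$-mass consists of \emph{good} $\zv$, meaning those for which $\mu$ of the annulus around $\zv$ is at most $e^{-cN}\mu(B(\zv,\eps^3N/k))$. A further first moment step shows that a ball of radius $\eps N/k$ has mass at most $e^{-c'N}$; this holds because the planted posterior has entropy of order $N$ at these temperatures. Build the clusters greedily, as in \cite{alaoui2024near}: pick a good $\zv$, let its cluster be the inner ball with the thin shell removed, delete it, and repeat. Choosing the inner radius randomly inside the gap, then averaging, ensures separation $\ge\eps^3N/k$ while losing only $e^{-cN}$ mass, which yields \ref{it:each-cluster-small}--\ref{it:cover-gibbs-measure} with $\eps/k$.

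The main obstacle is Step 1 together with the lower-order corrections in Step 2. Contiguity over the whole range up to $\sqrt{k/(2D)}$ requires the second moment exponent to stay maximized at zero overlap, including at intermediate overlaps. Step 2 also needs a uniform bound over the full annulus, not just near $d\approx\eps N/k$. I would first attempt both with crude union bounds and large-$k$ asymptotics.
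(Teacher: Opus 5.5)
There is a genuine gap between your Steps 2 and 3. You claim that $\sum_{\zv\in\mathrm{ann}}e^{-\beta(H(\zv)-H(\sigma^*))}\le e^{-cN}$ ``follows from a first moment computation.'' That argument fails.

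For a clause containing an odd number of flipped spins, $f_i(\zv)=-f_i(\sigma^*)$. Under the planted law,
\[
\EE^{\PP}\big[e^{-2\beta f_i(\sigma^*)}\big]=\frac{e^{\beta}e^{-2\beta}+e^{-\beta}e^{2\beta}}{2\cosh\beta}=1
\]
exactly; this is a Nishimori identity. So the first moment of the annulus weight ratio equals the number of annulus points, which is exponentially large.

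The exponent $\theta N[\ln(e/\theta)-D\beta^2/2]$ that you quote comes from a different computation. It is the union bound on the \emph{number} of annulus points with $H(\zv)\le H(\sigma^*)$, using the Chernoff factor $\EE^{\PP}[e^{-\beta f_i(\sigma^*)}]=\sech\beta\approx e^{-\beta^2/2}$. That yields an energy gap, not a mass bound.

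Nor can you convert the gap into a mass bound crudely. Near the lower end $\beta\approx(1+\eps)\sqrt{2\ln k/D}$, the gap that survives the union bound is only an $O(\eps)$ fraction of the mean cost $2\tanh\beta\,D\theta N$. Hence $|\mathrm{ann}|\,e^{-\beta\cdot\mathrm{gap}}$ is still exponentially large. Your ``good'' points in Step 3 are defined by exactly this unproved mass inequality, so the greedy cluster construction has no foundation. A level-by-level truncated first moment would repair it: apply Markov to the count at each energy level, then weight by $e^{-\beta y}$. That is a further argument you have not supplied.

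The paper avoids mass bounds entirely. It takes $\hat S$ to be the set of $\zv$ obeying the soft OGP of \cref{def:soft-OGP}. That is, $|H(\zv)-M\Gamma|\le\delta M$, and $H(\zv')\ge M\Gamma+2\delta M$ for every $\zv'$ in the thin annulus $|d_H(\zv',\zv)/N-\eps/k|\le(\eps/k)^2$. Annulus points fail the first condition, so they are not in $\hat S$. The clusters are the connected components of $\hat S$ under steps of size at most $(\eps/k)^3N$, and no such path can cross an annulus of width $2(\eps/k)^2N$. This gives \ref{it:diameter-bound}, and \ref{it:separation-bound} holds by construction. \ref{it:cover-gibbs-measure} follows from the planted soft OGP via exponential contiguity and Markov. \ref{it:each-cluster-small} follows from the two-replica overlap bound \cref{cor:most-pairs-orthogonal}. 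Only your first (energy-gap) claim is needed, proved by the union bound with Bennett's inequality, and only on a thin annulus. So the uniformity over $[\eps^3N/k,\eps N/k]$ that you worry about never arises.

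Two minor points on Step 1. Contiguity needs the \emph{lower} tail $Z_{\beta,H}\ge e^{-bN}\EE Z_{\beta,H}$ with probability $1-e^{-tN}$, which Markov cannot give. The paper obtains it from Paley--Zygmund plus McDiarmid concentration of $\ln Z_{\beta,H}$. Also, the second-moment maximization at $R=0$ holds for all $k\ge2$ whenever $\frac Dk\tanh^2\beta\le\frac12$, via $h(R)\le-R^2/2$, $\ln(1+x)\le x$ and $R^k\le R^2$, so no separate endpoint checks are needed.
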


The onset of shattering can then be used to show that the Gibbs measure of $H$ is extremely sensitive to changes in the clauses, a phenomenon known as \emph{disorder chaos}.

\begin{theorem}[Disorder Chaos]
\label{thm:single-clause-disorder-chaos}
Fix any $\eps$ and $(k,D,\beta)$ satisfying the hypotheses of \cref{thm:main-hardness}.
Let $H,H'$ be sampled to share $M-1$ clauses (i.e. $f_i=f_i'$ for $1\leq i\leq M-1$) and have the last clauses $f_M,f_M'$ chosen independently, so that $H,H'$ are coupled with the same Erd\H{o}s-R\'enyi distribution $\GG_\ER(N,k,D)$.
Then 
\[
\liminf_{N\to\infty} \E W_1(\mu_{\beta,H},\mu_{\beta,H'})>0.
\]
\end{theorem}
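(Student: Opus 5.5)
The plan is to compare both $\mu_{\beta,H}$ and $\mu_{\beta,H'}$ to the Gibbs measure $\nu:=\mu_{\beta,H^-}$ of the common Hamiltonian $H^- := -\sum_{i=1}^{M-1} f_i$. The key observation is that a single clause only tilts the measure by a bounded factor:
\[
\mu_{\beta,H}(\zv)=\frac{\nu(\zv)\,e^{\beta f_M(\zv)}}{\la e^{\beta f_M}\ra_\nu},\qquad
\mu_{\beta,H'}(\zv)=\frac{\nu(\zv)\,e^{\beta f'_M(\zv)}}{\la e^{\beta f'_M}\ra_\nu},
\]
and all these densities lie in $[e^{-2\beta},e^{2\beta}]$. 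Consequently, the clusters $\cC_1,\dots,\cC_K$ given by \cref{thm:shattering} for $\mu_{\beta,H}$ (with parameter $\eps/k$) also work for $\nu$ and for $\mu_{\beta,H'}$. Properties \ref{it:each-cluster-small} and \ref{it:cover-gibbs-measure} change only by a factor $e^{4\beta}$, which is absorbed into $e^{-cN}$. Properties \ref{it:diameter-bound} and \ref{it:separation-bound} are purely geometric. Alternatively, since $H^-$ is itself (up to $M-1$ versus $M$ clauses) an Erd\H{o}s--R\'enyi instance, one can invoke \cref{thm:shattering} for it directly.

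Next I would show that the fresh clause $f'_M$ is essentially constant on each cluster and is roughly balanced across clusters. For \emph{constancy}, each cluster has Hamming diameter at most $\eps N/k$. Fix a reference point $\zv^{(i)}\in\cC_i$. The sign of $f'_M$ can differ from $f'_M(\zv^{(i)})$ at some $\zv\in\cC_i$ only if the random hyperedge $e'_M$ contains a coordinate on which $\zv$ and $\zv^{(i)}$ disagree. Averaging over the uniform $k$-set $e'_M$, which is independent of $\nu$, and over $\zv\sim\nu$, this has probability at most $k\cdot \eps/k=\eps$. So outside a set of $\nu$-mass $O(\eps)$ in expectation, $f'_M(\zv)=f'_M(\zv^{(\mathrm{cl}(\zv))})$, where $\mathrm{cl}(\zv)$ denotes the cluster containing $\zv$. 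The same holds for $f_M$, since $\nu$ is independent of $(e_M,J_M)$.

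For \emph{balance}, the sign $J'_M$ is a fresh Rademacher variable independent of everything else. Let $A'_+$ be the union of clusters on which $f'_M\equiv+1$ (up to the $O(\eps)$ exceptional mass). Because flipping $J'_M$ exchanges the roles of $A'_+$ and $A'_-$, the expected $\nu$-mass of $A'_+$ is about $1/2$. The same argument applies to $A_+$ defined from $f_M$. In fact the four groups indexed by the signs of $(f_M,f'_M)$ each carry $\nu$-mass $\approx 1/4$ in expectation, because $J_M$ and $J'_M$ are independent fair signs. I expect the main obstacle to be getting this balance with probability bounded below, rather than only in expectation. One route is to use the replica-symmetric/planted analysis already underlying \cref{thm:shattering}, which gives $\la z_{e}\ra_\nu\approx 0$ for a typical hyperedge $e$. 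A simpler route is to use only the sign symmetry in $J'_M$, which gives a bounded-probability event, and that suffices for a positive $\liminf$ of the expectation.

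Finally, I would turn the mass discrepancy into a Wasserstein lower bound. On the above event, the $\mu_{\beta,H'}$-mass of $A'_+$ is $p'e^{\beta}/(p'e^{\beta}+(1-p')e^{-\beta})$, where $p'=\nu(A'_+)\approx 1/2$. Its $\mu_{\beta,H}$-mass is close to $p'$, because the tilt by $f_M$ is roughly independent of the partition $A'_\pm$, by the four-group balance. The difference is at least of order $\tanh\beta$ minus $O(\eps)$. Any coupling of $\mu_{\beta,H}$ and $\mu_{\beta,H'}$ must therefore move mass $\Omega(\tanh\beta)-O(\eps)-e^{-cN}$ between distinct clusters. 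By \ref{it:separation-bound} with parameter $\eps/k$, every such move costs Hamming distance at least $(\eps/k)^3N$, so
\[
W_1(\mu_{\beta,H},\mu_{\beta,H'}) \ge \frac{2}{N}\,(\eps/k)^3N\cdot\big(\Omega(\tanh\beta)-O(\eps)\big).
\]
Here one must check that $\tanh\beta$ dominates the $O(\eps)$ error. If it does not, I would shrink the shattering diameter parameter, which \cref{thm:shattering} permits for any $\eps>0$. The resulting bound is a positive constant independent of $N$ on an event of probability bounded below, which gives $\liminf_{N}\E W_1>0$.
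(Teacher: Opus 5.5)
Your architecture matches the paper's: tilt both measures off the common $(M-1)$-clause Gibbs measure $\nu$, show the new parity is nearly constant on clusters, show it splits cluster mass nontrivially, then turn a compressed-TV gap into a $W_1$ bound via separation. The genuine gap is the ``simpler route'' to balance. Write $\chi_{E'}(\zv)=\prod_{a\in E'}z_a$ for the support $E'$ of the fresh clause. Up to the constancy error, $p'=\nu(A'_+)=\tfrac12\bigl(1+J'_M\la\chi_{E'}\ra_\nu\bigr)$, and flipping $J'_M$ only sends $p'\mapsto 1-p'$. That makes the law of $p'$ symmetric about $\tfrac12$, but it gives no lower bound on $p'(1-p')$, and your signal $\mu_{\beta,H'}(A'_+)-p'$ is proportional to $p'(1-p')\tanh\beta$. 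If $\chi_{E'}$ were nearly constant on the support of $\nu$, then $p'\in\{0,1\}$ with probability $\tfrac12$ each, the clause would move no cluster weight, and the sign symmetry would still hold. Shattering does not rescue this either: exponentially many clusters of diameter $\eps N/k$ and separation $(\eps/k)^3N$ can all lie within Hamming distance $(\eps/k)^2N$ of one point, which gives $\EE_{E'}[1-\la\chi_{E'}\ra_\nu^2]=O(\eps^2/k)$, below your $O(\eps)$ constancy error. So the replica-symmetric route you list is mandatory, not optional. By \eqref{eq:pk-hypergeometric} and \cref{cor:most-pairs-orthogonal} applied to the $(M-1)$-clause instance, $\EE_{E'}\la\chi_{E'}\ra_\nu^2=\la R(\zv,\zv')^k\ra_{\nu^{\otimes 2}}+O(k^2/N)\to 0$. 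This is exactly how the paper gets $\PP[s_i\neq s_j]\ge\tfrac13$. Your four-group balance has the same defect, since you only have it in expectation over $(J_M,J'_M)$. Either use $\EE_{E,E'}\la\chi_E\chi_{E'}\ra_\nu^2\approx\la R^{2k}\ra_{\nu^{\otimes 2}}\to0$, or note that the cross term $\mathrm{Cov}_\nu(f_M,f'_M)$ flips sign with $J_M$ while the main term does not; the latter is the analogue of the paper's ``one of the two signs'' step.

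Second, your fallback for when $\tanh\beta$ does not dominate the $O(\eps)$ error is unavailable. Shattering at a smaller scale $\eps'/k$ needs $k\ge k_0(\eps')$, because \eqref{eq:cond-eps-k-D-delta} ties $\eps$ to $k$. Here $k$ is fixed and $D\ge k$ is arbitrary, so $\beta\asymp\sqrt{\ln k/D}$ can lie far below every admissible $\eps$. The fix is bookkeeping. From $e^{\beta f}=\cosh\beta\,(1+f\tanh\beta)$ you get $\mu_{\beta,H'}(A)-\nu(A)=\tanh\beta\,\mathrm{Cov}_\nu(f'_M,\Id_A)/(1+\tanh\beta\la f'_M\ra_\nu)$, and likewise for $\mu_{\beta,H}$. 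The constancy error therefore enters as $O(\sqrt\eps\tanh\beta)$, and the discrepancy on $A'_+$ is at least $\tanh\beta\,(\tfrac12-O(\sqrt\eps)-o_N(1))$ on an event of probability bounded below. This mirrors the paper's multiplicative estimate $m_\ell(J)=e^{\beta Js_\ell}(1+O(\beta\sqrt\eps))$. Also take the clusters from $H^-$, as in your alternative, not from $\mu_{\beta,H}$: clusters built from $H$ depend on $e_M$, which breaks the independence your constancy bound for $f_M$ relies on. With these repairs, your direct comparison through a single union of clusters works and gives an explicit bound of order $(\eps/k)^3\tanh\beta$. The paper instead compares $\mu_{\beta,H'}$ with the averaged baseline $\EE[\mu_{\beta,H'}\mid H]$ via Jensen, which avoids two-clause decorrelation, and certifies non-vanishing compressed TV through the pairwise likelihood-ratio criterion.
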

We defer the proofs of \cref{thm:shattering} and \cref{thm:single-clause-disorder-chaos} to \cref{sec:shattering-main-proofs}.

Disorder chaos leads to the failure of stable algorithms, since it implies the true Gibbs measure must change substantially upon perturbation whereas the output distributions of stable algorithms change slowly by definition.
This yields the following proof of our main hardness result against stable algorithms:
\begin{proof}[Proof of Theorem~\ref{thm:main-hardness}]
Consider $H,H'$ with a single resampled clause.
By Theorem~\ref{thm:single-clause-disorder-chaos}, there exists a constant $\iota_* = \iota_*(k,D,\beta,\eps)>0$ such that
\[
\liminf_{N\to\infty}\E W_1(\mu_{\beta,H},\mu_{\beta,H'})\geq 3\iota_*
\]
Now let $\cA$ be an $\iota$-stable algorithm with $\iota \le \iota_*$, meaning $\E [W_1(\cA(H), \cA(H'))] \le \iota$.
By the triangle inequality for $W_1$:
\[
W_1(\mu_{\beta,H}, \cA(H))
+
W_1(\mu_{\beta,H'}, \cA(H'))
\geq 
W_1(\mu_{\beta,H},\mu_{\beta,H'})
-
W_1(\cA(H), \cA(H')).
\]
The right-hand side has $\liminf$ at least $3\iota_* - \iota\ge 2\iota_*$ in expectation, while the left-hand side consists of two identically distributed random variables, so we deduce that 
\[
\liminf_{N\to\infty}
\E W_1(\mu_{\beta,H}, \cA(H))
\geq \frac{3\iota_*-\iota}{2} \ge \iota_*>0
\]
as desired.
\end{proof}

\subsection{Obstructions to Concrete Algorithms: Low-Degree Polynomials and Glauber Dynamics}
\label{subsec:concrete-hardness}

Here we prove stability of sampling algorithms based on low-degree polynomials, and then use shattering directly to prove slow mixing of Glauber dynamics.

For this discussion, we view $H=(B,J)$ as an ordered list of $M$ independent clauses. Let $(\Omega, \PP_\omega)$ be a probability space, where $\omega$ represents randomness fed as input to the algorithm independent of $H$. 
We say a map $\mathbf X^\circ(H,\omega)\in\RR^N$ is a degree-$d$ clause polynomial if, for every fixed $\omega \in \Omega$, it has an orthogonal decomposition
\begin{equation}
\label{eq:clause-polynomial-decomposition}
\mathbf X^\circ(H,\omega)=\sum_{S\subseteq[M],\,|S|\le d}\mathbf X_S^\circ((f_i)_{i\in S},\omega),
\end{equation}
where each nonconstant term has conditional mean zero in each clause on which it depends.
This definition does not require a specific numerical encoding of a hyperedge. We assume the map is unchanged by a permutation of the clauses, since it must be when its input is the Hamiltonian $H$ rather than an ordered list. (If $\mathbf X$ isn't permutation-invariant, then we can symmetrize it via a random clause permutations, which cannot increase the expected sampling error as the clauses are i.i.d.)
We call the polynomial $C$-regular if
\[
\EE^{H,\omega}\|\mathbf X^\circ(H,\omega)\|_2^2\leq CN.
\]

Let $\vec U=(U_1,\dots,U_N)\stackrel{\mathrm{i.i.d.}}{\sim}\Unif([-1,1])$ be independent of $(H,\omega)$, and define the bitstring-valued output
\begin{equation}
\label{eq:round-stable-alg}
\mathbf X(H,\omega,\vec U)
:=
\round_{\vec U}(\mathbf X^{\circ}(H,\omega))\in\bSig_N.
\end{equation}
Here we define the coordinate-wise function $\round_{\vec U}(\xv)=\big(\round_{U_1}(x_1),\dots,\round_{U_N}(x_N)\big)$ for
\[
\round_U(x)=
\begin{cases}
1,&x\ge U,\\
-1,&x<U.
\end{cases}
\]
The associated distribution-valued algorithm is $\cA(H):=\Law\big(\mathbf X(H,\omega,\vec U)\mid H\big).$

\begin{proposition}
\label{prop:LDP-stable}
The distribution-valued algorithm $\cA$ above satisfies
\begin{equation}
\label{eq:ldp-stability-bound}
\EE W_1(\cA(H),\cA(H'))\leq \sqrt{\frac{2Cd}{M}},
\end{equation}
where $H'$ is obtained by resampling one clause as in \cref{def:stability}. In particular, it is $\iota$-stable whenever $d\leq \iota^2M/(2C)$.
\end{proposition}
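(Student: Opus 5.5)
We couple the two output distributions by running the algorithm on $H$ and $H'$ with the \emph{same} internal randomness $\omega$ and the \emph{same} rounding thresholds $\vec U$. Any such coupling upper bounds the Wasserstein distance, so
\[
W_1(\cA(H),\cA(H'))\le \frac1N\,\EE_{\omega,\vec U}\big\|\mathbf X(H,\omega,\vec U)-\mathbf X(H',\omega,\vec U)\big\|_1 .
\]
For a single coordinate, $\round_U(x)\neq\round_U(x')$ exactly when $U$ lies between $x$ and $x'$ (after clipping both to $[-1,1]$). This happens with probability at most $|x-x'|/2$, and a mismatch contributes $2$ to the $\ell_1$ norm. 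Hence
\[
W_1(\cA(H),\cA(H'))\le \frac1N\,\EE_\omega\|\mathbf X^\circ(H,\omega)-\mathbf X^\circ(H',\omega)\|_1\le \frac{1}{\sqrt N}\Big(\EE_\omega\|\mathbf X^\circ(H,\omega)-\mathbf X^\circ(H',\omega)\|_2^2\Big)^{1/2},
\]
where the last step is Cauchy--Schwarz. Taking expectation over $(H,H')$ and applying Jensen, it suffices to show
\[
\EE\|\mathbf X^\circ(H,\omega)-\mathbf X^\circ(H',\omega)\|_2^2\le \frac{2Cd}{M}\,N .
\]

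Next we expand both outputs with the orthogonal decomposition \eqref{eq:clause-polynomial-decomposition}. The terms $\mathbf X_S^\circ$ with $M\notin S$ are identical for $H$ and $H'$, so
\[
\mathbf X^\circ(H)-\mathbf X^\circ(H')=\sum_{S\ni M}\big(\mathbf X_S^\circ(f_S)-\mathbf X_S^\circ(f_{S\setminus M},f'_M)\big).
\]
Write $Y=\sum_{S\ni M}\mathbf X_S^\circ(f_S)$ and let $Y'$ be the same sum with $f_M$ replaced by $f'_M$. Each term has conditional mean zero in $f_M$, and $f_M,f'_M$ are independent given the other clauses, so $\EE\langle Y,Y'\rangle=0$. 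Since $Y$ and $Y'$ have the same law,
\[
\EE\|Y-Y'\|^2=2\,\EE\|Y\|^2=2\sum_{S\ni M}\EE\|\mathbf X_S^\circ\|^2 ,
\]
where the last equality uses orthogonality of the decomposition.

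Finally we use permutation invariance. Because the map is symmetric in the clauses, the quantity $\sum_{S\ni i}\EE\|\mathbf X_S^\circ\|^2$ is the same for every $i\in[M]$. Averaging over $i$ gives
\[
\sum_{S\ni M}\EE\|\mathbf X_S^\circ\|^2=\frac1M\sum_S|S|\,\EE\|\mathbf X_S^\circ\|^2\le\frac dM\,\EE\|\mathbf X^\circ\|^2\le\frac{dCN}{M},
\]
using $|S|\le d$ and $C$-regularity. Substituting back yields $\EE W_1\le\sqrt{2Cd/M}$, which is \eqref{eq:ldp-stability-bound}. The stability claim follows by setting $\sqrt{2Cd/M}\le\iota$.

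The step needing the most care is the cross-term cancellation $\EE\langle Y,Y'\rangle=0$. Here we condition on $\omega$ and on $(f_i)_{i<M}$, then use the conditional independence of $f_M$ and $f'_M$. This conditional independence is exactly where the i.i.d. clause structure is used.
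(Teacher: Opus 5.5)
Your proof is correct and follows essentially the same route as the paper. You use the same shared-$(\omega,\vec U)$ coupling, the same rounding bound $\EE_U|\round_U(x)-\round_U(y)|\le|x-y|$ followed by Cauchy--Schwarz, and the same orthogonality computation showing that the single-clause $L^2$ influence equals $2\sum_{S\ni M}\EE\|\mathbf X_S^\circ\|_2^2$, combined with clause-permutation invariance. The only cosmetic difference is in how symmetry is applied: the paper sums the resampling influences over all $M$ clauses and uses permutation invariance to equate the summands, while you apply symmetry to $\sum_{S\ni i}\EE\|\mathbf X_S^\circ\|_2^2$. These are equivalent, since by your own computation each such sum is half the $i$-th resampling influence.
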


\begin{proof}
For $r\in[M]$, let $H^{(r)}$ be obtained by replacing $f_r$ by an independent copy, while keeping $\omega$ fixed. Orthogonality of the terms in \eqref{eq:clause-polynomial-decomposition} gives
\[
\sum_{r=1}^M
\EE\big\|\mathbf X^\circ(H,\omega)-\mathbf X^\circ(H^{(r)},\omega)\big\|_2^2
=2\sum_S |S|\,\EE\|\mathbf X_S^\circ\|_2^2
\le 2d\,\EE\|\mathbf X^\circ\|_2^2
\le 2dCN.
\]
Clause-permutation invariance makes the $M$ summands equal. Hence, for the pair $(H,H')$ in \cref{def:stability},
\begin{equation}
\label{eq:single-clause-L2-influence}
\EE\big\|\mathbf X^\circ(H,\omega)-\mathbf X^\circ(H',\omega)\big\|_2^2
\le \frac{2dCN}{M}.
\end{equation}

Couple the two rounded outputs by using the same $(\omega,\vec U)$. For any $x,y\in\RR$ and $U\sim\Unif[-1,1]$,
\[
\EE_U\big|\round_U(x)-\round_U(y)\big|
=\big|[-1,1]\cap[\min(x,y),\max(x,y)]\big|
\le |x-y|.
\]
The definition of $W_1$, Cauchy--Schwarz, and \eqref{eq:single-clause-L2-influence} now give
\begin{align*}
\EE W_1(\cA(H),\cA(H'))
&\le \frac1N\EE\big\|\mathbf X(H,\omega,\vec U)-\mathbf X(H',\omega,\vec U)\big\|_1\\
&\le \frac1N\EE\big\|\mathbf X^\circ(H,\omega)-\mathbf X^\circ(H',\omega)\big\|_1\\
&\le \sqrt{\frac1N\EE\big\|\mathbf X^\circ(H,\omega)-\mathbf X^\circ(H',\omega)\big\|_2^2}
\le \sqrt{\frac{2Cd}{M}}.
\qedhere
\end{align*}
\end{proof}

\begin{corollary}
\label{cor:SLDH}
Fix parameters $(k, D, \beta)$ as in \cref{thm:main-hardness}. Let $d_N=o(N)$ and let $\mathbf X_N^\circ$ be degree-$d_N$ clause polynomial that is invariant under clause permutations and $C$-regular with $C=O(1)$. Then its rounded distribution-valued algorithm $\cA_N$ is not an $o_N(1)$-approximate sampler. 
More precisely, for some $\eta=\eta(k,D,\beta)>0$ and all sufficiently large $N$,
\[
\EE_{H\sim\GG_\ER(N,k,D)}
W_1\big(\mu_{\beta,H},\cA_N(H)\big)\ge \eta.
\]
\end{corollary}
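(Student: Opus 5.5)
The plan is to combine \cref{prop:LDP-stable} with \cref{thm:main-hardness}. The only difficulty is that \cref{thm:main-hardness} is stated for a fixed stability level $\iota\le\iota_*$ that does not depend on $N$, while here the degree $d_N$ varies with $N$. So we first show that the stability parameter of $\cA_N$ tends to zero.

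\textbf{Step 1: stability tends to zero.} Since $M=\lfloor ND/k\rfloor=\Theta(N)$ for fixed $(k,D)$, $d_N=o(N)$ and $C=O(1)$, \cref{prop:LDP-stable} gives
\[
\iota_N:=\EE\, W_1\big(\cA_N(H),\cA_N(H')\big)\le \sqrt{2Cd_N/M}=o_N(1).
\]
Let $\iota_*=\iota_*(k,D,\beta,\eps)>0$ be the constant supplied by \cref{thm:main-hardness}. Then $\iota_N\le\iota_*$ for all $N\ge N_0$.

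\textbf{Step 2: rerun the triangle-inequality argument.} We do not cite \cref{thm:main-hardness} as a black box for a fixed $\iota$. Instead we repeat its short proof, which only uses that the stability bound holds for each large $N$ and is independent of the algorithm's structure. By \cref{thm:single-clause-disorder-chaos}, for all sufficiently large $N$ we have $\EE\, W_1(\mu_{\beta,H},\mu_{\beta,H'})\ge 2\iota_*$; we take the liminf constant $3\iota_*$ and lose a little for finite $N$. The triangle inequality gives
\[
W_1(\mu_{\beta,H},\cA_N(H))+W_1(\mu_{\beta,H'},\cA_N(H'))\ge W_1(\mu_{\beta,H},\mu_{\beta,H'})-W_1(\cA_N(H),\cA_N(H')).
\]
We then take expectations. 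The pair $(H,H')$ has identical marginals $\GG_\ER(N,k,D)$, and the resampled clause sits in a permutation-invariant position, so the two terms on the left have equal expectation. This yields
\[
\EE\, W_1(\mu_{\beta,H},\cA_N(H))\ge \tfrac12\,(2\iota_*-\iota_N)\ge \iota_*/2=:\eta
\]
for all large $N$.

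\textbf{Main obstacle.} The delicate point is the passage from a liminf statement to a bound valid for every large $N$, together with confirming that \cref{def:stability} matches the coupling used in \cref{prop:LDP-stable}. On the first issue, the disorder-chaos liminf being at least $3\iota_*$ already gives the lower bound $2\iota_*$ eventually, so this is harmless. On the second, \cref{def:stability} resamples the last clause $f_M$, whereas \cref{prop:LDP-stable} averages over the replaced clause $r$. Clause-permutation invariance of $\mathbf X_N^\circ$ makes these the same quantity, and \cref{prop:LDP-stable} already uses this, so everything reduces to the earlier results.
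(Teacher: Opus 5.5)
Your proposal is correct and follows the paper's argument. \cref{prop:LDP-stable} makes the stability parameter $o_N(1)$, hence eventually at most $\iota_*$, and the triangle-inequality argument behind \cref{thm:main-hardness} then gives the lower bound. You unpack that argument to get an explicit $\eta=\iota_*/2$ for every large $N$, whereas the paper cites the theorem directly; citing it directly is already valid, because its conclusion is a $\liminf$ and $\cA_N$ is $\iota_*$-stable for all large $N$.
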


\begin{proof}
Because $M=\lfloor ND/k\rfloor=\Theta(N)$, \cref{prop:LDP-stable} gives
$\EE W_1(\cA_N(H),\cA_N(H'))=o_N(1)$. For all sufficiently large $N$, this is at most the constant $\iota_*$ in \cref{thm:main-hardness}, which then supplies the stated positive lower bound.
\end{proof}

To understand the implications of this result, we recall the  \emph{low-degree ansatz} of \cite{hopkins2018statistical,kunisky2019notes}.
This heuristic states that degree $d$ polynomial algorithms are often as powerful as all algorithms with running time at most $e^{\tilde O(d)}$. 
This ansatz has been disproved via counterexamples based in coding theory \cite{holmgren_wein,buhai2025quasi,mao2026polynomial} yet has made a number of highly plausible concrete predictions.
It naively suggests via Corollary~\ref{cor:SLDH} that sampling from $\mu_{\beta}$ requires $e^{\tilde\Omega(N)}$ time within the shattered phase.
However, as we show in this paper, DQI and Prange's algorithm only require $\poly(N)$ time.

The onset of shattering also directly creates barriers for Markov chain Monte Carlo methods. Recall that the Glauber dynamics for $\mu_{\beta, H}$ is the Markov chain on $\bSig_N$ that, at each step, selects a uniformly random coordinate $i \in [N]$ and resamples $z_i$ from the conditional distribution $\mu_{\beta}(\cdot \mid z_{-i})$.
This chain is reversible with respect to $\mu_{\beta}$.

\begin{theorem}
\label{thm:slow-mixing}
Fix any $\eps \in (0,\frac12)$. Suppose $\mu_{\beta,H}$ is $\eps$-shattered as in \cref{def:shattering} with high probability over $H \sim \GG_\ER(N,k,D)$.
Then the Glauber dynamics chain for $\mu_{\beta,H}$ has mixing time $e^{\Omega(N)}$ with high probability.
\end{theorem}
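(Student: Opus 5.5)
The plan is the standard bottleneck (conductance) argument. Fix an instance $H$ on the high-probability event that $\mu=\mu_{\beta,H}$ is $\eps$-shattered with clusters $\cC_1,\dots,\cC_K$ and constant $c>0$. We will build a set $A\subseteq\bSig_N$ with $\mu(A)\in[\tfrac14,\tfrac34]$ (up to $e^{-cN}$ corrections) whose edge boundary under single-spin flips carries exponentially small Gibbs mass. The conductance (Cheeger) bound for reversible chains,
\[
t_{\rm mix}\;\ge\;\frac{1}{4\Phi(A)},\qquad \Phi(A)=\frac{Q(A,A^c)}{\mu(A)\mu(A^c)},
\]
then gives $t_{\rm mix}\ge e^{\Omega(N)}$. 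Here $Q(A,A^c)=\sum_{x\in A,y\notin A}\mu(x)P(x,y)$.

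\textbf{Constructing $A$.} Order the clusters arbitrarily and add them greedily until the accumulated mass first exceeds $\tfrac12\mu(\bigcup_i\cC_i)$. By \ref{it:each-cluster-small} every cluster has mass at most $e^{-cN}$, so the union $U_A$ of the chosen clusters satisfies $\mu(U_A)=\tfrac12\pm 2e^{-cN}$. Let $U_B$ be the union of the remaining clusters, so $\mu(U_B)\ge \tfrac12-3e^{-cN}$ by \ref{it:cover-gibbs-measure}.

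We then take $A$ to be a Hamming neighborhood rather than $U_A$ itself. Set $r=\lfloor \eps^3N/2\rfloor$ and $A=\{x: d_H(x,U_A)< r\}$. By \ref{it:separation-bound}, every point of $U_B$ lies at distance at least $\eps^3 N$ from $U_A$, so $A\cap U_B=\emptyset$. Hence $\mu(A)\ge\tfrac12-o(1)$ and $\mu(A^c)\ge \tfrac12-o(1)$.

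\textbf{Bounding the boundary flow.} Glauber dynamics only changes one coordinate per step. So any transition from $A$ to $A^c$ starts at a point $x$ with $d_H(x,U_A)=r-1$. Such an $x$ is at distance at least $r-1\ge 1$ from $U_A$ (for $N$ large) and at distance at least $\eps^3N-r>0$ from $U_B$. It therefore lies outside $\bigcup_i\cC_i$. Consequently
\[
Q(A,A^c)\le \mu\Big(\bSig_N\setminus \textstyle\bigcup_i\cC_i\Big)\le e^{-cN}
\]
by \ref{it:cover-gibbs-measure}, and so $\Phi(A)\le 4e^{-cN}(1+o(1))$. Since this holds on the high-probability event over $H$, the mixing-time bound $e^{\Omega(N)}$ holds with high probability.

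\textbf{Main obstacle.} The only subtle point is that the bottleneck set must separate clusters by a distance exceeding one step, so that the boundary lies entirely in the uncovered region. This is why we use the $r$-neighborhood of $U_A$ rather than $U_A$ itself. The diameter bound \ref{it:diameter-bound} and the restriction $\eps<\tfrac12$ are not really needed, beyond guaranteeing $\eps^3N\ge 2$ for large $N$. The remaining task is to check that the standard conductance lower bound on mixing time applies to the (lazy, reversible) Glauber chain. It does, and in the form $t_{\rm mix}(1/4)\ge (1-2\cdot\tfrac14)/(2\Phi_*)$ for the bottleneck ratio $\Phi_*=Q(A,A^c)/\mu(A)$ with $\mu(A)\le \tfrac12$, as in Levin--Peres--Wilmer. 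Applying it after swapping $A$ and $A^c$ if needed yields the claim.
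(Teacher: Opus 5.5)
Your proof is correct, but it is organized differently from the paper's.

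The paper keeps the cluster decomposition and refines it so that every retained cluster is individually a trap. It discards each $\cC_i$ with $\mu_{\beta}(\cC_i)<e^{cN/10}\mu_{\beta}(\cN_{r_b}(\cC_i))$, where $r_b=\eps^3N/3$. By \ref{it:separation-bound} these shells are pairwise disjoint and lie in the uncovered set, so by \ref{it:cover-gibbs-measure} the discarded clusters carry only $e^{-\Omega(N)}$ mass. It then cites \cite[Corollaries 2.8--2.9]{alaoui2023shattering} to pass from per-cluster bottlenecks to slow mixing.

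You instead build one global bottleneck:
\begin{itemize}
\item \ref{it:each-cluster-small} lets you split the clusters into two groups of mass $\tfrac12\pm O(e^{-cN})$.
\item Thickening one group by $r<\eps^3N$ keeps it disjoint from the other group.
\item The inner boundary of the thickening misses every cluster, so the boundary flow is at most the uncovered mass.
\item The Levin--Peres--Wilmer bottleneck bound, which uses only stationarity and single-site moves, then finishes the argument.
\end{itemize}

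Your route is more elementary and self-contained. It needs no pruning step and no external metastability result, and it gives an explicit rate $t_{\rm mix}\gtrsim e^{cN}$. It proves exactly the worst-case mixing-time statement the theorem asserts.

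The paper's per-cluster refinement buys a stronger metastability conclusion. The chain started from the Gibbs measure restricted to any retained cluster, and hence from a typical Gibbs sample, stays within distance $r_b$ of that cluster for exponentially many steps. This is more informative than a lower bound on the worst-case mixing time.
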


\begin{proof}
For any Gibbs measure exhibiting the shattering property as in \cref{def:shattering}, we can refine it so that all clusters have an additional \emph{bottleneck} property.
Letting $\cN_r(\cC)$ be the set of points within Hamming distance $r$ of a set $\cC$ which are not themselves in $\cC$, the bottleneck property requires that
\[
\mu_{\beta}(\cC_i)\geq 
e^{cN/10}
\mu_{\beta}\big(\cN_{r_b}(\cC_i)\big),
\]
for $r_b=\eps^3 N/3$.
The refinement simply removes all clusters where the above inequality does not hold.
Indeed suppose WLOG that the removed ones are the first $J\leq K$ clusters, then
\[
\mu_{\beta}(\cup_{j\leq J} \cC_j)
\leq 
e^{cN/10}
\mu_{\beta}\lt(\bSig_N\backslash\cup_{\ell \leq K}\cC_\ell \rt)
\leq 
e^{-cN/10}.
\]
Since only an exponentially small amount of Gibbs mass is removed this way, it follows that the refined clustering obeys all shattering properties from before in addition to the bottleneck property.
Now slow mixing is immediate, see e.g., \cite[Corollary 2.8, Corollary 2.9]{alaoui2023shattering}.
\end{proof}

Note that Theorems~\ref{thm:main-hardness}, \ref{thm:shattering}, and \ref{thm:slow-mixing} together imply \cref{thm:shatter-hardness}.

\section{Shattering and Disorder Chaos of Max-$k$-XORSAT}
\label{sec:shattering-main-proofs}

The goal of this section is to prove Theorems~\ref{thm:shattering} and \ref{thm:single-clause-disorder-chaos}.
The first one, \cref{thm:shattering}, shows that in the temperature range \eqref{eq:beta-k-def}, the Gibbs measure $\mu_{\beta,H}$ shatters into exponentially many well-separated clusters.
\cref{thm:single-clause-disorder-chaos} then deduces disorder chaos from shattering.

\paragraph{Parameter Hierarchy.} We will work with several parameters, chosen such that
\begin{equation}
\label{eq:eps-to-zero-slowly}
    0<1/N\ll c\ll \delta\ll 1/D \le 1/k\ll \eps\ll 1.
\end{equation}
Here $a\ll b$ means $a$ is chosen sufficiently small depending on the parameters already fixed. (In particular, this is stronger than merely requiring that $a/b$ tends to zero.)
We first choose $\eps$ sufficiently small, then $k\ge k_0(\eps)$ sufficiently large, then fix any $D\ge k$ and an inverse temperature $\beta$ in the range  \eqref{eq:beta-k-def}. Then we choose $\delta$ sufficiently small depending on $(\eps,k,D)$, then $c$, and finally consider the $N\to\infty$ limit with other parameters fixed.
For clarity, we make the error bounds used below explicit by choosing
\begin{equation}
\label{eq:cond-eps-k-D-delta}
\frac{\ln(e/\eps)}{\ln k } \ll \eps,
\qquad
\sqrt{\frac{\ln k}{k}} \ll \eps, 
\quad \text{ and } \quad
\delta  \ll \min\bigg\{\Big(\frac{\eps}{k}\Big)^2, \eps^2 \sqrt{\frac{\ln k}{D}}\bigg\}.
\end{equation}
Note this is consistent with \eqref{eq:eps-to-zero-slowly} as the first two requirements concern choosing $k$ large given $\eps$, and the last requirement applies when choosing $\delta$ after $(\eps,k,D)$.
We consider inverse-temperatures within the range
\begin{equation}
\label{eq:beta-k-def}
(1+\eps)\sqrt{\frac{2\ln k}{D}}
\leq \beta
\le \sqrt{\frac{k}{2D}}
\le 
\beta_\RS(k,D).
\end{equation}
The threshold $\beta_\RS(k,D)$ marks the onset of replica-symmetry breaking, so that \eqref{eq:beta-k-def} is contained within the so-called replica-symmetric phase.  We show $\beta_\RS(k, D) \ge \sqrt{k/(2D)}$ in \cref{lem:annealed-FE}.

We note that the value $\sqrt{\frac{2\ln k}{k}}\cdot (1+o_{k\to\infty}(1))$ is the predicted onset of shattering for the Ising $k$-spin glasses~\cite{alaoui2023sampling}. In a sense, the Ising $k$-spin glasses describe the $D\to\infty$ limit of Max-$k$-XORSAT on random degree-$D$ hypergraphs, provided one rescales the Hamiltonian via $\tilde{H}=(\sqrt{k/D})H$.
The existence of shattering of the Ising $k$-spin glasses above this threshold was shown recently by \cite{alaoui2024near} and implies hardness of sampling for stable algorithms.
While suggestive, \cite{alaoui2024near} does not have direct implications for Max-$k$-XORSAT at moderate $D$.
Relatedly, \cite{huang2025hardness} proved a hardness result for $k=2$, using an interpolation argument that translates properties of dense Ising spin glasses to sparse random graphs for $\beta >\beta_\RS$.
As far as we can tell, their technique does not work for the regime $\beta<\beta_\RS$ considered in this paper.

In this paper, we directly show that a similar shattering property holds for general Max-$k$-XORSAT within the range \eqref{eq:beta-k-def} and obstructs a large class of sampling algorithms.

\subsection{Replica-Symmetric Free Energy}
\label{sec:RS}

Our first step is to show the typical value of the free energy ($\ln Z_{\beta,H}$) approaches its ``annealed'' value ($\ln \EE Z_{\beta,H}$) in the $N\to\infty$ limit in certain temperature regime. This is the so-called replica-symmetric phase, where many calculations can be simplified.

\begin{definition}[Replica-symmetric threshold]
\label{def:beta-RS}
For any sequence of ensembles $\GG=(\GG_N)_{N\ge 1}$ of Hamiltonians $H$, the \emph{replica-symmetric threshold} $\beta_\RS(\GG)$ is the maximum value such that for all $0\le \beta \le \beta_\RS(\GG)$, 
\begin{equation} \label{eq:annealed-FE}
    \lim_{N\to\infty} \frac{1}{N}\big|\EE \ln Z_{\beta,H} - \ln \EE Z_{\beta,H}\big|=0.
\end{equation}
\end{definition}
For $\GG_N = \GG_\ER(N,k,D)$ with fixed $(k,D)$, we abbreviate this threshold as $\beta_\RS(k,D)$.

\begin{lemma} \label{lem:annealed-FE}
For the ensemble $\GG_\ER(N,k,D)$ at any $k\ge 2$, we have 
$\beta_{\RS}(k,D) \geq \sqrt{k/(2D)}$. In other words, \eqref{eq:annealed-FE} holds for all $0\le \beta \leq \sqrt{k/(2D)}$.
\end{lemma}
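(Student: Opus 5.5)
We would prove \cref{lem:annealed-FE} by the second moment method combined with concentration of the free energy. Jensen's inequality gives $\EE\ln Z\le\ln\EE Z$ for free, so it suffices to prove the matching lower bound $\frac1N\EE\ln Z\ge\frac1N\ln\EE Z-o(1)$.

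\textbf{Step 1: first moment.} Since the $J_i$ are independent and uniform, $\EE_J e^{\beta J_i\prod z}=\cosh\beta$ for every fixed $\zv$. Hence
\[
\EE Z=2^N(\cosh\beta)^M .
\]

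\textbf{Step 2: second moment.} For $\zv,\zv'$ with overlap $q=\frac1N\sum_j z_jz'_j$, one clause contributes
\[
\EE e^{\beta J(\prod_e z+\prod_e z')}=\cosh^2\beta\,\big(1+\tanh^2\beta\cdot\EE_e\textstyle\prod_{j\in e} z_jz'_j\big).
\]
Averaging over a uniformly random $k$-subset, the inner expectation is $q^k+O(1/N)$. Summing over pairs by overlap gives
\[
\frac{\EE Z^2}{(\EE Z)^2}\approx\sum_q \binom{N}{N(1+q)/2}2^{-N}\big(1+\tanh^2\beta\, q^k\big)^{M}
\approx \exp\Big(N\max_q\Big[-I(q)+\tfrac Dk\ln(1+\tanh^2\beta\,q^k)\Big]\Big),
\]
where $I(q)=\frac{1+q}{2}\ln(1+q)+\frac{1-q}{2}\ln(1-q)\ge q^2/2$ is the rate function. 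The task is to show the exponent is maximized at $q=0$, so the ratio is subexponential.

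Using $\ln(1+x)\le x$, $\tanh\beta\le\beta$ and $|q|^k\le q^2$ for $|q|\le1$, $k\ge2$, the exponent is at most
\[
-\frac{q^2}{2}+\frac Dk\beta^2 q^2\le 0 \qquad\text{whenever }\beta^2\le\frac{k}{2D}.
\]
This is exactly where the threshold $\sqrt{k/(2D)}$ comes from. At the boundary equality case we keep the slack from $\tanh\beta<\beta$ (valid for $\beta>0$), or from $I(q)>q^2/2$, to retain a strict maximum at $q=0$. We then need the local analysis near $q=0$: the Gaussian sum over $q\in\frac{2}{N}\ZZ$ gives a ratio of order $O(1)$ (and $O(\poly N)$ would also suffice), because the curvature at $0$ is bounded below: $-\tfrac12+\frac Dk\tanh^2\beta<0$ when $k\ge3$, and for $k=2$ this same strict inequality is needed. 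The $O(1/N)$ corrections from sampling $k$-sets without replacement contribute $(1+O(1/N))^M=O(1)$ and are harmless. Note also that $q=-1$ has the same value as $q=1$ when $k$ is even, so it is covered by the same bound.

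\textbf{Step 3: from second moment to free energy.} Paley–Zygmund gives
\[
\PP\big(Z\ge\tfrac12\EE Z\big)\ge \frac{(\EE Z)^2}{4\EE Z^2}\ge e^{-o(N)}.
\]
Next, $\ln Z$ is a function of $M$ independent clauses, and changing one clause moves it by at most $2\beta$. McDiarmid's inequality then gives concentration at scale $O(\sqrt N)$ with Gaussian tails $e^{-t^2/(8\beta^2 M)}$. If $\EE\ln Z$ were below $\ln\EE Z-\eta N$, the event $\{Z\ge \frac12\EE Z\}$ would have probability $e^{-\Omega(N)}$, which contradicts the subexponential lower bound. Hence $\frac1N|\EE\ln Z-\ln\EE Z|\to0$.

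\textbf{Main obstacle.} We expect Step 2 to be the hardest: obtaining a uniform, rigorous bound on the overlap sum, including the precise combinatorics of random $k$-sets with replacement and the boundary case $\beta=\sqrt{k/(2D)}$. Only a subexponential bound is needed, so the crude inequalities above should suffice; the $k=2$ boundary case is the only place where extra care may be required.
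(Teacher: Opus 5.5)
Your proposal is correct and follows essentially the same route as the paper. The paper also computes the first moment $2^N(\cosh\beta)^M$, decomposes the second moment by overlap, and bounds the exponent using $h(R)\le -R^2/2$ (your $I(q)\ge q^2/2$), $\ln(1+x)\le x$, $|R|^k\le R^2$ and $\tanh^2\beta\le\beta^2$ to reach the condition $\frac{D}{k}\tanh^2\beta\le\frac12$, and then concludes with Paley--Zygmund and a McDiarmid contradiction argument. The local Gaussian analysis near $q=0$ that you flag as the main obstacle (including the $k=2$ boundary case) is unnecessary: each of the $N+1$ overlap terms is already at most $e^{O(1)}$, which gives $\EE Z^2/(\EE Z)^2\le e^{o(N)}$ directly.
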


The strategy to prove this lemma is the ``concentration-enhanced second moment method''.
Indeed it is known since \cite{talagrand1998sherrington} that to show results such as \cref{eq:annealed-FE}, it suffices to prove that for $N$ large,
\begin{equation} \label{eq:2nd-moment-match}
\frac{\EE [Z_{\beta,H}^2]}{(\EE [Z_{\beta,H}])^2}\leq e^{o(N)}
\end{equation}
thanks to concentration of the free energy  and the Paley--Zygmund inequality; see also \cite{frieze1990independence}.

We begin by computing the first moment. Since each $f_i(\zv)$ is an independent Rademacher random variable for any fixed $\zv$, we have by direct calculation that 
\begin{align}
\E [Z_{\beta,H}]
&=
\sum_{\zv} \E \prod_{i=1}^M e^{\beta f_i(\zv)}
= \sum_\zv (\cosh\beta)^M 
= 2^N (\cosh\beta)^M, \nonumber \\
(\E[Z_{\beta,H}])^2
&=
4^N (\cosh\beta)^{2M}.
\end{align}

To obtain the second moment, we decompose the sum
\begin{equation}
\label{eq:2nd-moment}
\EE [Z_{\beta,H}^2]=
\EE \sum_{\zv,\zv'} e^{-\beta(H(\zv)+H(\zv'))}
\end{equation}
by the overlap $R(\zv,\zv') := \zv^\top \zv'/N$.

Consider the contribution to Eq.~\eqref{eq:2nd-moment} from $\zv,\zv' \in \bSig_N$ which have overlap $R=R(\zv, \zv')\in [-1,1]$, so that the two configurations disagree on exactly $\frac{1-R}{2}N$ coordinates. A given clause takes the same truth value on $\zv$ and $\zv'$ if and only if its hyperedge contains an even number of disagreeing coordinates. If the $k$ vertices of a hyperedge were sampled i.i.d.\ uniformly from $V$, each would independently land on a disagreeing coordinate with probability $\frac{1-R}{2}$, and the probability that the clause agrees would be
\[
\sum_{\ell \text{ even} } \binom{k}{\ell} \Big(\frac{1-R}{2}\Big)^\ell \Big(\frac{1+R}{2}\Big)^{k-\ell}=
\frac{1+R^k}{2}.
\]
For the $\GG_\ER(N,k,D)$ ensemble in \cref{def:erdos_renyi}, each hyperedge is instead a uniformly random $k$-subset of $V$ (the vertices are sampled \emph{without} replacement), so the number of disagreeing coordinates in the edge is hypergeometric rather than binomial. The two sampling schemes can be coupled by drawing $k$ i.i.d.\ vertices and redrawing in the event of a collision, which has probability at most $\binom{k}{2}/N$; hence the exact agreement probability satisfies
\begin{equation}
\label{eq:pk-hypergeometric}
p_k(R)
=\frac{1}{\binom Nk}
\sum_{\ell\text{ even}}
\binom{(1-R)N/2}{\ell}\binom{(1+R)N/2}{k-\ell}
= \frac{1+R^k}{2} + O\Big(\frac{k^2}{N}\Big),
\qquad\text{uniformly in } R.
\end{equation}
For such $(\zv,\zv')$, each clause contributes
\[
\mathbb E[e^{\beta(f_i(\zv)+f_i(\zv'))}]
=
p_k \cosh(2\beta)
+
(1-p_k)
=
1+ p_k \cdot (\cosh(2\beta)-1).
\]

Note that the probability for a uniformly random pair $(\zv,\zv')$ of $\{\pm1\}$-vectors to have overlap $R$ is $e^{N(h(R)\pm o(1))}$, where 
\[
h(R)=
-\Big(\frac{1+R}{2}\Big)\ln(1+R)
-
\Big(\frac{1-R}{2}\Big)\ln(1-R)
\]
is the rescaled entropy on $(-1,1)$.
Then
\[
\EE[Z_{\beta,H}^2]
= 4^N \exp\lt(
N
\max_{-1\leq R\leq 1}
\lt[
\frac{D}{k}
\log\Big(1 + p_k(R) \cdot (\cosh(2\beta)-1)\Big)
+ h(R) +o(1)
\rt]
\rt).
\]
Then the desired condition \eqref{eq:2nd-moment-match} holds provided the variational expression inside the bracket above is uniquely maximized at $R=0$.
Indeed, this quantity at $R=0$ equals $(2D/k)\log\cosh\beta$, and thus the right-hand-side matches $(\E[Z_{\beta,H}])^2$ up to $e^{o(N)}$. 
In other words, it is enough to have
\begin{equation}
\label{eq:2nd-moment-condition}
\argmax_{-1\leq R\leq 1}
\lt(
\frac{D}{k}
\log\lt[
1+\Big(\frac{1+R^k}{2}\Big)
\cdot 
\big(\cosh(2\beta)-1\big)
\rt]
+
h(R)
\rt)
=0.
\end{equation}

\begin{lemma}
\label{lem:check-2nd-moment-condition}
For any $(k,D)$ where $k\ge 2$, the condition \eqref{eq:2nd-moment-condition} holds if $\beta^2\leq k/(2D)$.
\end{lemma}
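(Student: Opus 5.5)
Let $a:=\cosh(2\beta)-1=2\sinh^2\beta$ and define
\[
F(R):=\frac{D}{k}\log\Big[1+\tfrac{a}{2}(1+R^k)\Big]+h(R),\qquad R\in[-1,1].
\]
We need $F(R)<F(0)$ for every $R\neq 0$. The plan is to compare $F(R)-F(0)$ with the quadratic $-R^2/2$, using the elementary bound $h(R)\le -R^2/2$. This bound follows from the Taylor series $h(R)=-\sum_{m\ge1}\frac{R^{2m}}{2m(2m-1)}$; at $R=\pm1$ it reads $-\ln 2\le -1/2$, which also holds.

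\textbf{Step 1: bound the energy term.} Since $\log(1+u)\le u$,
\[
\frac{D}{k}\Big(\log\big[1+\tfrac a2(1+R^k)\big]-\log\big[1+\tfrac a2\big]\Big)=\frac Dk\log\Big(1+\frac{aR^k/2}{1+a/2}\Big)\le \frac Dk\cdot\frac{a R^k}{2+a}.
\]
Now $\frac{a}{2+a}=\frac{2\sinh^2\beta}{2\cosh^2\beta}=\tanh^2\beta\le\beta^2$. Hence the energy increment is at most $\frac{D}{k}\beta^2 R^k\le \frac{1}{2}R^k$ under the hypothesis $\beta^2\le k/(2D)$. When $R^k<0$ (odd $k$, $R<0$) the increment is negative, which only helps.

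\textbf{Step 2: combine.} Steps 1 and the entropy bound give
\[
F(R)-F(0)\le \tfrac12 |R|^k - \tfrac12 R^2 - \sum_{m\ge2}\frac{R^{2m}}{2m(2m-1)}.
\]
For $k\ge 3$ and $0<|R|<1$ we have $|R|^k<R^2$, so the right-hand side is strictly negative. At $|R|=1$ the right-hand side is $\tfrac12-\ln 2<0$.

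\textbf{Step 3: the case $k=2$.} Here $|R|^k=R^2$ and the quadratic terms cancel exactly. Strictness must then come from the slack in the other inequalities. The remaining term $-\sum_{m\ge2}\frac{R^{2m}}{2m(2m-1)}$ is strictly negative for $R\ne0$. In addition, $\log(1+u)<u$ for $u>0$ and $\tanh\beta<\beta$ for $\beta>0$. So $F(R)<F(0)$ for all $R\neq0$ in this case as well.

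This is the only delicate point, and it is a marginal one. The threshold $\beta^2=k/(2D)$ is exactly where the quadratic parts cancel for $k=2$; for $k\ge3$ there is room to spare. Hence $R=0$ is the unique maximizer of \eqref{eq:2nd-moment-condition} for all $k\ge2$ whenever $\beta^2\le k/(2D)$.

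Finally, the $O(k^2/N)$ correction in \eqref{eq:pk-hypergeometric} changes the exponent by only $o(1)$ uniformly in $R$. So it does not affect the conclusion \eqref{eq:2nd-moment-match}.
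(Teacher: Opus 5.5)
Your proposal is correct and follows essentially the paper's proof: bound $h(R)\le -R^2/2$, rewrite the energy increment as $\frac{D}{k}\log(1+R^k\tanh^2\beta)$, linearize via $\log(1+x)\le x$, and finish with $\tanh^2\beta\le\beta^2\le k/(2D)$ and $R^k\le R^2$. Your one addition is to track strictness through the higher-order terms $-\sum_{m\ge2}R^{2m}/(2m(2m-1))$ of $h$. This makes $R=0$ the unique maximizer even at $k=2$ and at the endpoint $\beta^2=k/(2D)$, which the paper leaves implicit.
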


\begin{proof}
    Note that $h(R)\leq -R^2/2$ for all $R\in (-1,1)$, since $h(0)=h'(0)=0$ and $h''(R)=\frac{-1}{1-R^2}\leq -1$ for all $R$.
    Also,
    \[
    1+\Big(\frac{1+R^k}{2}\Big)
    \cdot 
    \big(\cosh(2\beta)-1\big)
    =
    \cosh^2(\beta)
    +
    R^k \sinh^2(\beta)
    \]
    Thus it suffices to show that
    \[
    g(R)
    \equiv 
    \frac{D}{k}
    \log\lt(1 +  R^k \tanh^2(\beta)
    \rt)
    -
    \frac{R^2}{2}
    \]
    is maximized at $R=0$.
    Since $\log(1+x)\leq x$ we have 
    \[
    g(R)
    \leq 
    \frac{D}{k}
    \cdot R^k \tanh^2(\beta)
    - \frac{R^2}{2},
    \]
    again with equality at $R=0$.
    Finally $R^k\leq R^2$ for all $k\ge 2$ and $R\in [-1,1]$ with equality at $0$.
    Tracing back, the condition \eqref{eq:2nd-moment-condition} holds as long as
    \begin{equation}
    \frac{D}{k}
    \tanh^2(\beta)
    \le 1/2.
    \end{equation}
Since $\beta^2 \ge \tanh^2(\beta)$, it also suffices to have the simpler inequality $\beta^2 \le \frac{k}{2D}$ as claimed.
\end{proof}

\begin{proof}[Proof of \cref{lem:annealed-FE}]
We now complete the proof. 
Due to \cref{lem:check-2nd-moment-condition}, $\beta^2 \le k/(2D)$ is a sufficient condition for \eqref{eq:2nd-moment-condition}, which in turn implies \eqref{eq:2nd-moment-match}. The latter, together with Paley--Zygmund inequality yields
\[
\PP[Z_{\beta,H} \ge \EE[Z_{\beta,H}]/2] \ge \frac14 \frac{(\E Z_{\beta,H})^2}{\E[Z_{\beta,H}^2]} \ge \frac14 e^{-o(N)}
\]
Note the leftmost probability is equal to $\PP[\ln Z_{\beta,H} \ge \ln \E Z_{\beta,H} - \ln 2]$.

We also need concentration of the free energy $F_{\beta,H}=\frac{1}{N} \ln Z_{\beta,H}$. Since changing a single clause will change $\ln Z_{\beta,H}$ by at most $2\beta$, we have from  McDiarmid's inequality 
\[
\PP\Big[\Big|\ln Z_{\beta,H} - \E[\ln Z_{\beta,H}]\Big| > t N\Big] 
\le 2\exp\!\Big({-}\frac{2t^2N^2}{M(2\beta)^2}\Big)
\le 2\exp\!\Big({-}\frac{k t^2N }{2D\beta^2}\Big),
\]
where the last inequality uses $M=\lfloor ND/k\rfloor \le ND/k$.

Applying Jensen's inequality to the logarithm, we have $\EE \ln Z_{\beta, H} \le \ln \EE Z_{\beta,H}$. 
Suppose for the sake of contradiction against \eqref{eq:annealed-FE} that for some $t>0$,  $\EE \ln Z_{\beta, H} \le \ln \EE Z_{\beta,H} - 2t N$ for arbitrarily large $N$.
Then
\begin{align*}
    2\exp\!\Big({-}\frac{k t^2N }{2D\beta^2}\Big) &\ge \PP\Big[\Big|\ln Z_{\beta,H} - \E[\ln Z_{\beta,H}]\Big| > t N\Big] \\
    &\ge \PP[\ln Z_{\beta,H} \ge \ln \EE [Z_{\beta,H}] - \ln 2] \\
    &\ge \frac14 e^{-o(N)},
\end{align*}
where the second inequality follows from observing that the event in the first line is implied by the event in the second line under the supposition.
For sufficiently large $N$, $2\exp\!\Big({-}\frac{k t^2N }{2D\beta^2}\Big) \not\ge\frac14 e^{-o(N)}$, yielding a contradiction as desired.
\end{proof}

We next observe the following corollary which will be applied later this section.

\begin{corollary}
\label{cor:most-pairs-orthogonal}
Suppose $\beta^2 \le k/(2D)$.
For any $\delta>0$, there exists $c>0$ such that with probability at least $1-e^{-cN}$ over $H\sim\GG_\ER(N,k,D)$, we have
\[
\mu_{\beta,H}^{\otimes 2}\!\left[|\zv^\top \zv'|/N\geq \delta\right]\leq e^{-cN}.
\]
\end{corollary}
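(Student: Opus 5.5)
Since $\mu_{\beta,H}^{\otimes 2}[A] = Z_{\beta,H}^{-2}\sum_{(\zv,\zv')\in A} e^{-\beta(H(\zv)+H(\zv'))}$, the plan is to bound the numerator in expectation by the restricted second-moment computation, bound the denominator from below by the concentration argument in the proof of \cref{lem:annealed-FE}, and combine the two with Markov's inequality.

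\emph{Numerator.} Let $A_\delta=\{(\zv,\zv'):|\zv^\top\zv'|/N\ge\delta\}$ and
\[
Y:=\sum_{(\zv,\zv')\in A_\delta} e^{-\beta(H(\zv)+H(\zv'))}.
\]
The overlap decomposition preceding \cref{lem:check-2nd-moment-condition}, with the maximum restricted to $|R|\ge\delta$, gives
\[
\EE Y \le 4^N \exp\Big(N\max_{|R|\ge\delta}\big[\Psi(R)+o(1)\big]\Big),
\qquad
\Psi(R):=\tfrac{D}{k}\log\big(\cosh^2\beta+R^k\sinh^2\beta\big)+h(R).
\]
Write $\Psi(R)=\tfrac{2D}{k}\log\cosh\beta+g_0(R)$, where
\[
g_0(R)=\tfrac{D}{k}\log\big(1+R^k\tanh^2\beta\big)+h(R).
\]
The chain of inequalities in the proof of \cref{lem:check-2nd-moment-condition} yields
\[
g_0(R)\le \Big(\tfrac{D}{k}\tanh^2\beta-\tfrac12\Big)R^2+\big(h(R)+\tfrac{R^2}{2}\big).
\]
The first term is nonpositive when $\beta^2\le k/(2D)$. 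The second term uses the strict bound $h(R)+R^2/2\le -R^4/12$, which follows from the Taylor series $h(R)=-\sum_{j\ge1}R^{2j}/(2j(2j-1))$. Hence $g_0(R)\le -\delta^4/12$ on $|R|\ge\delta$. This strict margin is needed because at the endpoint $\beta^2=k/(2D)$ the quadratic term alone gives no gap.

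Next, the $O(k^2/N)$ error in \eqref{eq:pk-hypergeometric} contributes only $e^{O(k^2M/N)}=e^{O(1)}$ overall. There are $O(N)$ possible overlap values, so the polynomial counting factor is subexponential. Together these give $\EE Y\le (\EE Z)^2 e^{-N\delta^4/13}$ for large $N$. Markov's inequality then yields $Y\le(\EE Z)^2e^{-N\delta^4/26}$ with probability at least $1-e^{-N\delta^4/26}$.

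\emph{Denominator.} The proof of \cref{lem:annealed-FE} gives $\EE\ln Z\ge \ln\EE Z-o(N)$. By McDiarmid's inequality as stated there,
\[
\PP\big[\ln Z<\ln\EE Z-tN\big]\le 2\exp\Big(-\frac{k t^2 N}{8D\beta^2}\Big)
\]
for large $N$. Taking $t=\delta^4/100$, this probability is exponentially small, and on the complement $Z^2\ge(\EE Z)^2e^{-\delta^4N/50}$.

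\emph{Combination.} On the intersection of the two good events,
\[
\mu_{\beta,H}^{\otimes2}[A_\delta]=\frac{Y}{Z^2}\le e^{-N\delta^4(1/26-1/50)}.
\]
Choosing $c$ smaller than $\delta^4/100$ and smaller than the McDiarmid exponent (which depends on $k,D,\beta,\delta$) makes both the failure probability and the bound at most $e^{-cN}$ for large $N$; small $N$ is absorbed by shrinking $c$.

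The main obstacle is obtaining the strict negative margin of $\Psi$ off $R=0$ uniformly up to the endpoint $\beta^2=k/(2D)$. The quartic remainder of $h$ handles this; the remaining steps reuse computations already made in the proof of \cref{lem:annealed-FE}.
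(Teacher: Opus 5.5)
Your proposal is correct and is essentially the paper's proof: you bound the expected restricted sum using the second-moment overlap decomposition limited to $|R|\ge\delta$, lower-bound $Z_{\beta,H}$ via \cref{lem:annealed-FE} plus McDiarmid, and combine the two with Markov's inequality. The only difference is that you make the gap explicit through $h(R)+R^2/2\le -R^4/12$, whereas the paper just invokes the unique maximizer at $R=0$ (valid even at $\beta^2=k/(2D)$ since $h''(R)<-1$ for $R\neq0$) to obtain some $c_1>0$; also, the statement should be read for sufficiently large $N$ only, because small $N$ cannot be absorbed by shrinking $c$ (at $N=1$ every pair has $|R|=1$).
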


\begin{proof}
Let
\[
A_H(\delta):=\sum_{\substack{\zv,\zv'\in \bSig_N\\ | \zv^ \top\zv'|/N\ge \delta}} e^{-\beta(H(\zv)+H(\zv'))}.
\]
Since the variational problem in \eqref{eq:2nd-moment-condition} is uniquely maximized at $R=0$, there exists $c_1=c_1(\delta,\beta,k,D)>0$ such that
\[
\E[A_H(\delta)]\le e^{-c_1N}(\E Z_{\beta,H})^2.
\]
By \cref{lem:annealed-FE} and McDiarmid's inequality, there exists $c_2>0$ such that with probability at least $1-e^{-c_2N}$ we have
\[
Z_{\beta,H}\ge e^{-c_1N/4}\E Z_{\beta,H}.
\]
On this event,
\[
\mu_{\beta,H}^{\otimes 2}\!\left[\frac{|\la \zv,\zv' \ra|}{N}\ge \delta\right]
=
\frac{A_H(\delta)}{Z_{\beta,H}^2}
\le
e^{c_1N/2}\frac{A_H(\delta)}{(\E Z_{\beta,H})^2}.
\]
Markov's inequality and the bound on $\E[A_H(\delta)]$ therefore imply
\[
\mathbb P\!\left[\mu_{\beta,H}^{\otimes 2}\!\left[\frac{|\la \zv,\zv' \ra|}{N}\ge \delta\right]\ge e^{-c_1N/4}\right]\le e^{-cN}
\]
for some $c>0$, which yields the claimed estimate after adjusting constants.
\end{proof}

\subsection{Planting and Exponential-Scale Contiguity}
We next introduce the planted trick, which is a proof technique to show shattering. This technique was used in several prior works including \cite{achlioptas2008algorithmic,alaoui2023sampling,alaoui2023shattering, alaoui2024near}.
The idea is to consider two distributions over pairs $(H, \zv^*)$: one that first draws $H$ and then samples $\zv^*$ from the Gibbs measure (the ``null model''), and one that first draws $\zv^*$ and then constructs $H$ biased towards $\zv^*$ (the ``planted model'').
Under the condition \eqref{eq:annealed-FE} where the annealed free energy is a good approximation, these models are ``exponentially contiguous,'' 
such that events with exponentially small probability under the planted model also have exponentially small probability under the null model.
This trick allows us to transfer desired properties in the planted model, which is easier to study, to the null model.

\begin{definition}[Null and planted models]
The \emph{null model} $\QQ$ over $(H,\zv^*)$ is defined by:
\begin{enumerate}
\item Draw $H = (G, J) \sim \GG_\ER(N,k,D)$.
\item Draw $\zv^*\sim\mu_{\beta,H}$.
\end{enumerate}
\noindent
The \emph{planted model} $\PP$ over $(H,\zv^*)$ is defined by:
\begin{enumerate}
\item Draw $\zv^*\in \{-1,1\}^N$ uniformly at random.
\item Draw a random $(G, J^0) \sim \GG_\ER(N,k,D)$, and keep only the hypergraph $G$.
\item For each hyperedge $e\in E$, independently set
\[
J_e = 
\begin{cases}
z^*_{e_1} z^*_{e_2} \cdots z^*_{e_k} & \text{with probability } \frac{e^{\beta}}{2\cosh(\beta)},\\[5pt]
-z^*_{e_1} z^*_{e_2} \cdots z^*_{e_k} & \text{with probability } \frac{e^{-\beta}}{2\cosh(\beta)}.
\end{cases}
\]
Equivalently, each clause $f_i$ satisfies $\PP[f_i(\zv^*)=1 \mid \zv^*] = e^{\beta}/(2\cosh\beta)$.
\end{enumerate}
\end{definition}

The Radon--Nikodym derivative between these distributions is given exactly by 
\[
\frac{d\PP}{d\QQ}(H, \zv^*)
=
\frac{Z_{\beta,H}}{\EE Z_{\beta,H}}.
\]
We say a property has \textbf{exponentially good probability} if there exists a constant $c>0$, such that the property holds with probability at least $1-e^{-cN}$ for all sufficiently large $N$.
The main transference property we use is as follows.

\begin{proposition}
\label{prop:contiguity}
Suppose $\beta\in [0,\beta_\RS(k, D)]$. 
If a property of $(H,\zv^*)$ has exponentially good probability under the planted model $\PP$, then the same holds under the null model $\QQ$. 
\end{proposition}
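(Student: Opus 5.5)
Let $\mathcal{E}$ denote the bad event, i.e.\ the complement of the property, so that $\PP[\mathcal{E}]\le e^{-c_0N}$ for some $c_0>0$ and all large $N$. The plan is to use the Radon--Nikodym identity $\frac{d\PP}{d\QQ}=Z_{\beta,H}/\EE Z_{\beta,H}$ and split the null probability according to whether the partition function is atypically small. For a threshold $t>0$ to be chosen, write
\[
\QQ[\mathcal{E}]\le \QQ\big[\mathcal{E},\,Z_{\beta,H}\ge e^{-tN}\EE Z_{\beta,H}\big]+\QQ\big[Z_{\beta,H}< e^{-tN}\EE Z_{\beta,H}\big].
\]
On the first event the density $d\PP/d\QQ$ is at least $e^{-tN}$. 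Hence the first term is at most $e^{tN}\,\PP[\mathcal{E}]\le e^{-(c_0-t)N}$. This is exponentially small once we fix $t=c_0/2$.

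The second term depends only on $H$, because under $\QQ$ the marginal of $H$ is exactly $\GG_\ER(N,k,D)$. So it equals $\PP_{H\sim\GG_\ER}\big[\ln Z_{\beta,H}<\ln\EE Z_{\beta,H}-tN\big]$, and the task is a lower-tail bound on the free energy at exponential scale.

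For $\beta\le\beta_\RS(k,D)$, \cref{def:beta-RS} gives $\EE\ln Z_{\beta,H}\ge \ln\EE Z_{\beta,H}-(t/2)N$ for all large $N$. Changing one clause moves $\ln Z_{\beta,H}$ by at most $2\beta$, and $M\le ND/k$. McDiarmid's inequality, as in the proof of \cref{lem:annealed-FE}, then gives
\[
\PP\big[\ln Z_{\beta,H}<\EE\ln Z_{\beta,H}-(t/2)N\big]\le \exp\!\Big(-\frac{k t^2 N}{8D\beta^2}\Big).
\]
Combining the two estimates gives $\QQ[\mathcal{E}]\le e^{-c_0N/2}+e^{-c'N}$ with $c'=kc_0^2/(32D\beta^2)>0$. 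This yields the exponentially good probability claimed in \cref{prop:contiguity}.

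The only delicate point is the choice of $t$, which must be made before invoking the replica-symmetric condition. The order is: fix $c_0$, then $t=c_0/2$, then take $N$ large enough that the $o(N)$ gap in \eqref{eq:annealed-FE} is below $tN/2$. Uniformity in $N$ holds because \eqref{eq:annealed-FE} is a genuine limit, so the gap is eventually smaller than any fixed multiple of $N$. At the endpoint $\beta=\beta_\RS$ the definition still asserts \eqref{eq:annealed-FE}, so nothing extra is needed there. If $\beta=0$, the case is trivial: the two models coincide, since $\PP$ and $\QQ$ then both draw $H$ from $\GG_\ER$ and $\zv^*$ uniformly.
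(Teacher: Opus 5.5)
Your proof is correct and is essentially the paper's argument. You split $\QQ(\mathcal E)$ according to whether $Z_{\beta,H}<e^{-tN}\EE Z_{\beta,H}$, bound the typical part by $e^{tN}\PP(\mathcal E)$ using $d\QQ/d\PP=\EE Z_{\beta,H}/Z_{\beta,H}$, and control the atypical part by the replica-symmetric condition plus McDiarmid, exactly as the paper does; the only difference is that the paper takes an arbitrary $b\in(0,a)$ where you fix $t=c_0/2$, which is immaterial.
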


\begin{proof}
    This can be shown using the same ``contiguity at exponential scale'' argument as in \cite[Section 3.1]{alaoui2023shattering}, which we adopt for the present ensemble as follows.
Let $A$ be the event that the property fails, and suppose $\PP(A)\le e^{-aN}$. The displayed Radon--Nikodym derivative gives
\[
\frac{d\QQ}{d\PP}(H,\zv^*)=\frac{\EE Z_{\beta,H}}{Z_{\beta,H}}.
\]
Fix any $b\in(0,a)$.  Since \eqref{eq:annealed-FE} is implied by $\beta\in [0,\beta_\RS(k,D)]$, for all large $N$ we have
$\EE\ln Z_{\beta,H}\ge \ln\EE Z_{\beta,H}-bN/2$. McDiarmid's inequality, applied exactly as in the proof of \cref{lem:annealed-FE}, then gives a constant $t_b>0$ such that under the null model
\[
\QQ\left[Z_{\beta,H}<e^{-bN}\EE Z_{\beta,H}\right]\le e^{-t_bN}.
\]
On the complementary event, $d\QQ/d\PP\le e^{bN}$. Therefore
\[
\QQ(A)\le e^{bN}\PP(A)+e^{-t_bN}
\le e^{-(a-b)N}+e^{-t_bN},
\]
which proves the proposition.
\end{proof}

\subsection{A Soft OGP Implies Shattering and Disorder Chaos}
\label{sec:OGP-shatter-disorder}

\subsubsection{Soft OGP}

To prove shattering and disorder chaos, we adapt the notion of a \emph{soft overlap gap property} (soft OGP) from \cite{alaoui2024near}. Soft OGP asserts that around each typical Gibbs sample $\zv^*$, there is an annular ``gap'' in Hamming distance where no configuration achieves energy close to that of $\zv^*$. This can be used to obtain an explicit decomposition of the hypercube into separated clusters, such that almost all of $\mu_{\beta,H}$ is supported within this set. This will used to show that $\mu_{\beta,H}$ is shattered.

Let us denote the expected value of the per-clause energy 
\begin{equation}
\Gamma := -\tanh(\beta) = \E_{(H,\zv^*)\sim \PP}\lt[\frac{1}{M}H(\zv^*)\rt]
\end{equation}
 of a Gibbs sample $\zv^*$ in the planted model.
\begin{definition}[Soft OGP]
\label{def:soft-OGP}
Fix any $\eps >0$, and $0< \delta \ll (\eps/k)^2$ independent of $N$.
We say $(H,\zv^*)$ obeys the $(\eps,\delta)$-\emph{soft OGP} if
\begin{enumerate}[label=(\roman*)]
    \item 
    \label{it:energy-typical}
    $|H(\zv^*)-M\Gamma|\leq M\delta $.
    \item 
    \label{it:OGP-soft}
    For all $\zv$ satisfying
    \[
    \left|\frac{d_H(\zv,\zv^*)}{N}  - \frac{\eps}{k} \right|\le  \lt(\frac{\eps}{k}\rt)^2
    \]
    we have 
    \[
    H(\zv) - M \Gamma \ge 2 M \delta .
    \]
\end{enumerate}
\end{definition}
In other words, property~\ref{it:energy-typical} says $\zv^*$ has energy close to the expected value in the planted model. Property~\ref{it:OGP-soft} says all configurations at Hamming distance $\eps N/k$ from $\zv^*$ have strictly higher energy.

We will prove that soft OGP applies to both the planted and the null model. 
Recall from \eqref{eq:f-i-def} that $H=-\sum_{i=1}^M f_i$, and each $f_i$ is the $\pm 1$-valued indicator that clause $i$ is satisfied.
As preparation, we first show the following:

\begin{lemma}
\label{prop:pairs-of-terms}
In the planted model, condition on $\zv^*$ and choose $\zv$ with  Hamming distance $d_H(\zv,\zv^*) =\eps_* N/k$, such that $H$ and $\zv$ are conditionally independent given $\zv^*$.
Then the law of $(f_1(\zv^*),f_1(\zv))$ is given by $\PP[(f_1(\zv^*),f_1(\zv))=(a,b)]=p(a,b)$ where
\begin{align*}
    p(+1,+1)
    &=
    (1-\tep)
    \cdot \frac{e^{\beta}}{2\cosh(\beta)},
    &\qquad
    p(+1,-1)
    &=
    \tep
    \cdot \frac{e^{\beta}}{2\cosh(\beta)},
    \\
    p(-1,+1)
    &= 
    \tep
    \cdot \frac{e^{-\beta}}{2\cosh(\beta)},
    &\qquad
    p(-1,-1)
    &= 
    (1-\tep)
    \cdot \frac{e^{-\beta}}{2\cosh(\beta)},
\end{align*}
where
\begin{equation}
\label{eq:tilde-eps-def}
\tep = \frac{1-(1-2\eps_*/k)^k}{2} + O\lt(\frac{k^2}{N}\rt)= \eps_* - \Big(1-\frac{1}{k}\Big) \eps_*^2 + O\lt(\eps_*^3 + \frac{k^2}{N}\rt).
\end{equation}
\end{lemma}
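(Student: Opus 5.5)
The plan is to compute the joint law of the pair $(f_1(\zv^*), f_1(\zv))$ by conditioning on the hyperedge $e_1$ and factoring it through two independent pieces of randomness. The first is the sign relation $f_1(\zv^*)$, which the planted model fixes through $J_{e_1}$. The second is the parity of $|e_1\cap \Delta|$, where $\Delta=\{j: z_j\neq z^*_j\}$ is the disagreement set, with $|\Delta| = \eps_*N/k$.

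First I observe that $f_1(\zv) = J_{e_1}\prod_{j\in e_1} z_j = f_1(\zv^*)\cdot (-1)^{|e_1\cap\Delta|}$. This holds because flipping the coordinates in $\Delta$ multiplies the monomial by $-1$ once for each flipped coordinate inside the edge.

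Next I use the structure of the planted model. The hypergraph $G$ is drawn independently of $\zv^*$. Given $(G,\zv^*)$, the sign $f_1(\zv^*)$ equals $+1$ with probability $e^\beta/(2\cosh\beta)$, independently of the edge $e_1$. Since $\zv$ is chosen conditionally independently of $H$ given $\zv^*$, the set $\Delta$ is independent of $e_1$ given $\zv^*$. So I condition on $\zv^*$ and $\zv$, treating $\Delta$ as a fixed set of size $\eps_* N/k$. Then $e_1$ is a uniform $k$-subset of $[N]$, independent of the $J$-coin.

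Set $\tep := \PP[|e_1\cap\Delta| \text{ odd}]$. By independence the joint law factorizes as $p(a,b) = \PP[f_1(\zv^*)=a]\cdot \PP[(-1)^{|e_1\cap\Delta|}=ab]$. This gives exactly the four stated entries: the pairs with $a=b$ carry the factor $1-\tep$, and the pairs with $a\neq b$ carry the factor $\tep$.

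It remains to evaluate $\tep$. For a hypergeometric count this is the computation behind \eqref{eq:pk-hypergeometric}. Take that formula with overlap $R = 1-2\eps_*/k$, for which the disagreement fraction is $(1-R)/2=\eps_*/k$. The probability of even parity is then $\frac{1+R^k}{2}+O(k^2/N)$, so
\[
\tep = \frac{1-(1-2\eps_*/k)^k}{2}+O(k^2/N).
\]
The underlying coupling argument is the one already given there: draw the $k$ vertices i.i.d. and redraw on collision, which changes probabilities by at most $\binom{k}{2}/N$. For the i.i.d. draw, the generating function gives $\E(-1)^{|e\cap \Delta|}=(1-2\eps_*/k)^k$.

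Finally I expand the product binomially:
\[
(1-2\eps_*/k)^k = 1 - 2\eps_* + \binom k2\frac{4\eps_*^2}{k^2} - O(\eps_*^3) = 1-2\eps_*+2(1-1/k)\eps_*^2 - O(\eps_*^3).
\]
The cubic and higher terms are bounded uniformly in $k$, since $\binom{k}{j}(2\eps_*/k)^j\le (2\eps_*)^j/j!$. Halving $1-(\cdot)$ yields $\eps_*-(1-1/k)\eps_*^2+O(\eps_*^3)$.

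The only subtle step is justifying the independence between $e_1$ and $\Delta$ and the $J$-coin, and this follows directly from the way the planted model and $\zv$ are specified. Beyond that the argument is routine.
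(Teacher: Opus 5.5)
Your proof is correct and follows essentially the same route as the paper. Both arguments factor the joint law using the independence of the planted sign coin from the hyperedge, identify $1-\tep$ as the even-parity probability of $|e_1\cap\Delta|$, and evaluate it by the collision coupling to an i.i.d.\ draw. Your identity $f_1(\zv)=f_1(\zv^*)(-1)^{|e_1\cap\Delta|}$ and your uniform-in-$k$ binomial expansion for the second equality in \eqref{eq:tilde-eps-def} are small details the paper leaves implicit.
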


\begin{proof} Here, we consider only the planted model $\PP$.
Note that given $\zv$ and $\zv^*$ at Hamming distance $\eps_* N/k$, the clause agrees as long as the two strings differ by an even number of bits involved in the clause.
Hence,
\[
\PP\big[f_1(\zv^*)=f_1(\zv)~|~(\zv,\zv^*)\big]
= \frac{1}{\binom{N}{k}} \sum_{\ell \text{ even}} \binom{\eps_*N/k}{\ell}\binom{N-\eps_*N/k}{k-\ell} =: 1-\tep.
\]
Following the same collision coupling scheme in Eq.~\eqref{eq:pk-hypergeometric}, we can approximate the distribution of the number of disagreeing coordinates as binomial instead of hypergeometric (by drawing the vertices in each clause hyperedge i.i.d.~with replacement), with error at most $O(k^2/N)$.
This yields
\begin{align*}
\tep
&= 1-\sum_{\ell \text{ even}} \binom{k}{\ell} \lt(\frac{\eps_*}{k}\rt)^\ell \lt(1-\frac{\eps_*}{k}\rt)^{k-\ell} + O(k^2/N) =
\frac{1-(1-2\eps_*/k)^k}{2} + O(k^2/N).
\end{align*}
Note that the event in question does not depend on the choice of weight $J_{e^{(1)}}$.
Due to this independence, we see that this event is independent of the event that $f_1(\zv^*)=1$, and so multiplying probabilities finishes the proof.
\end{proof}

\begin{proposition}
\label{prop:soft-OGP-planted}
For parameters as in \eqref{eq:eps-to-zero-slowly}, satisfying \eqref{eq:cond-eps-k-D-delta} and \eqref{eq:beta-k-def}, the $(\eps,\delta)$-soft OGP holds with exponentially good probability under the planted model $\PP$. 
Hence by Proposition~\ref{prop:contiguity}, the same holds under the null model $\QQ$.
\end{proposition}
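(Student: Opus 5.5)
Property (i) is a concentration statement and property (ii) a first-moment union bound over the annulus, both under the planted model $\PP$. The null-model statement then follows from \cref{prop:contiguity}, since the range \eqref{eq:beta-k-def} lies inside $[0,\beta_\RS(k,D)]$ by \cref{lem:annealed-FE}.

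\textbf{Property (i).} Under $\PP$, conditional on $\zv^*$, the values $f_i(\zv^*)$ are i.i.d.\ with mean $\tanh\beta=-\Gamma$. Hoeffding's inequality then gives $|H(\zv^*)-M\Gamma|\le M\delta$ with probability $1-2e^{-M\delta^2/2}=1-e^{-\Omega(N)}$.

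\textbf{Property (ii), setup.} Fix $\eps_*$ with $|\eps_*-\eps|\le \eps^2/k$, and fix $\zv$ at distance $\eps_*N/k$ from $\zv^*$. By \cref{prop:pairs-of-terms}, the $f_i(\zv)$ are i.i.d.\ given $\zv^*$ and $\zv$ (the clauses are independent), and
\[
\E f_i(\zv)=\tanh\beta\,(1-2\tep).
\]
So $\E H(\zv)=M\Gamma+2M\tep\tanh\beta$.

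The event $H(\zv)-M\Gamma\le 2M\delta$ requires the empirical mean of the $f_i(\zv)$ to fall below its expectation by roughly $2\tep\tanh\beta-2\delta$. Since $\delta\ll (\eps/k)^2$ while $\tep\approx\eps_*\approx\eps$, this gap is essentially $2\tep\tanh\beta$.

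\textbf{Property (ii), union bound.} I would bound the probability of this event by a Chernoff bound, i.e.\ a KL divergence between Bernoulli laws. The number of $\zv$ in the annulus is at most $\exp\big(N H_2(\eps_*/k)\big)\lesssim \exp\big(N\tfrac{\eps}{k}\ln\tfrac{ek}{\eps}\big)$. The union bound succeeds if
\[
M\cdot \mathrm{KL} > N\,\frac{\eps}{k}\ln\frac{ek}{\eps}\,(1+o(1)).
\]
The $f_i(\zv)$ have mean $\tanh\beta(1-2\tep)$, and we ask for a downward deviation of $2\tep\tanh\beta$. Since $\tanh\beta$ is small for large $k$, the Gaussian approximation to the KL is accurate, giving
\[
\mathrm{KL}\approx \frac{(2\tep\tanh\beta)^2}{2(1-\tanh^2\beta)}\approx 2\tep^2\beta^2.
\]
With $M=ND/k$, the left side is about $N(D/k)\,2\eps^2\beta^2$. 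Using $\beta^2\ge (1+\eps)^2\,2\ln k/D$, this is at least $N\,4(1+\eps)^2\eps^2\ln k/k$.

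\textbf{Main obstacle.} Comparing with the right side $N(\eps/k)(\ln k+\ln(e/\eps))$, the union bound needs $4\eps\ln k\gtrsim \ln k$, which fails for small $\eps$. So the naive first moment at distance $\eps N/k$ is too weak by a factor of about $\eps$.

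I therefore expect this to be the hard step, and the fix is to exploit that the exponent's dependence on $\tep$ is not quadratic in the regime that matters. The deviation is not small relative to $\tanh\beta$: it asks a $1-\tep$ fraction of flips' worth of excess unsatisfied clauses to vanish. The precise large-deviation rate should be computed through the pair law of \cref{prop:pairs-of-terms}, conditioning on $H(\zv^*)$. The event requires the clauses that distinguish $\zv$ from $\zv^*$ — about $\tep M$ of them, mostly satisfied by $\zv^*$ — to be balanced between satisfied and unsatisfied under $\zv$.

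Concretely, among the roughly $\tep M$ clauses touching the flipped set, each unsatisfied-by-$\zv$ indicator has probability $e^{\beta}/(2\cosh\beta)$. The event needs a shift of about $\tep M\tanh\beta$ in their sum. Over $\tep M$ terms this costs
\[
\tep M\cdot \mathrm{KL}\Big(\tfrac12\,\Big\|\,\tfrac{e^{\beta}}{2\cosh\beta}\Big)=\tep M\ln\cosh\beta\approx \tep M\beta^2/2.
\]
That is of order $\eps N(D/k)\beta^2/2\ge (1+\eps)^2\,\eps N\ln k/k$, which beats the entropy $(\eps/k)N(\ln k+\ln(e/\eps))$ exactly by the factor $(1+\eps)^2$ together with the condition $\ln(e/\eps)/\ln k\ll\eps$ from \eqref{eq:cond-eps-k-D-delta}.

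So the plan is to do the Chernoff bound jointly on the vector of pair types $(f_i(\zv^*),f_i(\zv))$, using the four-cell law of \cref{prop:pairs-of-terms} and the multinomial large-deviation exponent. The $O(\eps^2)$ corrections in $\tep$, the $\delta$ slack, and the $(\eps/k)^2$ annulus width are absorbed by the margin coming from $(1+\eps)^2$. Finally, I union bound over the $O(N)$ admissible integer distances.
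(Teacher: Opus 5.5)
Your proposal is correct and follows the paper's proof. In both, property (i) reduces (ii) to the lower tail of $H(\zv)-H(\zv^*)$, whose summands vanish with probability $1-\tep$. This gives an exponent $\approx M\tep\beta^2/2$ instead of the naive $\approx 2M\tep^2\beta^2$ you correctly rule out, and it beats the annulus entropy $N\frac{\eps_*}{k}\ln\frac{ek}{\eps_*}$ through the $(1+\eps)^2$ margin, even after the loss $\tep/\eps_*\approx 1-\eps$.

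The only difference is the concentration tool. The paper uses Bennett's inequality with per-clause variance at most $4\tep$, while your Chernoff bound on the pair types gives the per-clause rate $-\ln\bigl(1-\tep(1-\sech\beta)\bigr)$. One small fix: this rate is increasing in $\beta$, so you can evaluate it at $\beta_0=(1+\eps)\sqrt{2\ln k/D}$, as the paper does. Your approximation $\ln\cosh\beta\approx\beta^2/2$ assumes $\beta$ small, which fails near the top of the range when $D\approx k$.
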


\begin{proof}
Property~\ref{it:energy-typical} is immediate from the definition of $\PP$, and just amounts to exponential concentration for sums of i.i.d. Bernoulli variables via e.g., Hoeffding's inequality.

For property~\ref{it:OGP-soft}, we consider all $\zv$ with $d_H(\zv,\zv^*)=\eps_*N/k$ where $\eps_*$ varies in the set 
\begin{equation} \label{eq:eps-star-range}
\eps_*\in \Big[\eps-\frac{\eps^2}{k},\eps+\frac{\eps^2}{k}\Big]\cap (N/k)^{-1}\ZZ.
\end{equation}
We will estimate the maximum value of $H(\zv^*)-H(\zv)$ over such $\zv$ via a union bound.
The number of such points $\zv$ for any fixed $\eps_*$ is
\begin{align}
\binom{N}{\eps_* N/k} 
\le \exp\lt(\frac{N \eps_*}{k} \ln \frac{e k}{\eps_*}\rt)
\le 
\exp\lt(N \Big(1 + \frac{\eps}{2}\Big) \frac{\eps_* \ln(k)}{k}\rt),
\label{eq:entropy-of-nearby-points}
\end{align}
where the last inequality follows from \eqref{eq:cond-eps-k-D-delta} and $\eps_*=\eps+O(\eps^2/k)$.
For each such $\zv$, consider the energy difference
\begin{equation}
\label{eq:H-diff-sum}
H(\zv)-H(\zv^*)
=
\sum_{i=1}^M [f_i(\zv^*)-f_i(\zv)]
.
\end{equation}
\cref{prop:pairs-of-terms} shows that each summand term equals $0$ with probability $1-\tep$, and has expectation
\[
\EE^{\PP}[f_i(\zv^*)-f_i(\zv)]
=
2\tanh(\beta) \tep
\]
where $\tep=\eps_*\pm O(\eps_*^2 + k^2/N)$.
Thus 
\[
\EE^{\PP}[H(\zv)-H(\zv^*)] 
=
2M \tanh(\beta) \tep .
\]

Next we want to show $H(\zv) - H(\zv^*) \ge  3\delta M$ with high probability.
Along with property~\ref{it:energy-typical} which means that $H(\zv^*)\ge (\Gamma-\delta)M$, this would imply $H(\zv) \ge (\Gamma + 2\delta)M$ and prove property~\ref{it:OGP-soft}.
To do so we use Bennett's inequality, which allows us to leverage the reduced variance in each term (and is relatively sharp because the large deviation event we consider has small constant factors).
Note that 
\[
S(\zv)=H(\zv^*) - H(\zv) - 2\Gamma\tep M = \sum_{i=1}^M (f_i(\zv)-f_i(\zv^*)+2\tanh(\beta)\tep)
\]
is a sum of i.i.d. mean zero terms valued in $[-2-2\tep,2+2\tep]$ (since $|\tanh(\beta)|\leq 1$).
Furthermore, the variance of each term is at most $\mathbb E[(f_1(\zv)-f_1(\zv^*))^2]=4\tep$ by \cref{prop:pairs-of-terms}.
Therefore with $g(u)=(1+u)\ln(1+u)-u$, Bennett's inequality yields
\begin{align}
& \mathbb{P}[H(\zv^*) > H(\zv) - 3\delta M] = 
\mathbb P\Big[S(\zv) > \Big(2\tanh(\beta)\tep-3\delta\Big) M\Big] \nonumber\\
&\qquad 
\leq 
\exp\lt(
-\frac{4M\tep}{4(1+\tep)^2}
g\lt(\frac{2(1+\tep)(2\tanh(\beta)\tep-3\delta) M}{4M\tep}\rt)
\rt)\nonumber\\
&\qquad 
=
\exp\lt(
-\frac{M\tep}{(1+\tep)^2}
g\Big(
 (1+\tep)\Big[\tanh (\beta) - \frac{3\delta}{2\tep}\Big]
\Big)
\rt).
\label{eq:bennett-bound}
\end{align}
Let $\beta_0=(1+\eps)\sqrt{2\ln k/D}$. Note $\tanh\beta_0 > 3\delta/(2\tep)$ by \eqref{eq:cond-eps-k-D-delta}. The functions $g$ and $\tanh$ are increasing on $[0,\infty)$, so we may lower bound this exponent by replacing $\beta$ with $\beta_0$.
Since $\beta_0=O(\sqrt{\ln k / k})$ for $D \ge k$, and $g(u)=u^2/2+O(u^3)$ as $u\to0$, we have
\[
\frac{M\tep}{(1+\tep)^2} g\Big( (1+\tep)\Big[\tanh (\beta_0) - \frac{3\delta}{2\tep}\Big]\Big) = \frac{M\tep \beta_0^2}{2} \Big[1+O(\beta_0 + \frac{\delta}{\tep\beta_0})\Big]
\]
By condition \eqref{eq:cond-eps-k-D-delta}, both errors are small compared to $\eps$.
Furthermore, for $\eps_*$ in the allowed range \eqref{eq:eps-star-range}, \eqref{eq:tilde-eps-def} gives
\[
\frac{\tep}{\eps_*} = 1 - \Big(1-\frac{1}{k}\Big) \eps_* + O(\eps_*^2) + o_N(1) = 1-\eps + O(\eps^2) + o_N(1)
\]
Using $M=\lfloor N D/k\rfloor$, we conclude that uniformly for $\beta\ge\beta_0$ and over allowed $\eps_*$, the exponent in \eqref{eq:bennett-bound} satisfies
\begin{align}
\frac{M\tep}{(1+\tep)^2} 
g\Big( (1+\tep)\Big[\tanh (\beta) - \frac{3\delta}{2\tep}\Big]\Big)
&\ge 
N\frac{\eps_*\ln k}{k}(1+\eps)^2\Big[1-\eps - O\big(\eps^2 + \beta_0 + \frac{\delta}{\eps\beta_0}\big)-o_N(1)\Big] \nonumber \\
&
\ge N(1+0.99\eps)\frac{\eps_*\ln k}{k},
\label{eq:beta-bound}
\end{align}
for all sufficiently small $\eps$, followed by sufficiently large $k$ and then $N$.
This rate exceeds the entropy rate in \eqref{eq:entropy-of-nearby-points} by $\Omega(N\eps^2\ln k/k)$. To complete the proof, we apply a union bound over the configurations and the at most $N+1$ possible distances, which then shows property~\ref{it:OGP-soft} fails with probability at most $e^{-cN}$ for a sufficiently small $c>0$. 
\end{proof}

\subsubsection{From Soft OGP to Shattering (Proof of \cref{thm:shattering})}
\label{sec:softOGP-to-shattering}

Now we are ready to prove the shattering asserted in \cref{thm:shattering}.
To do this, we apply the soft OGP to obtain an explicit decomposition of the hypercube into separated clusters, such that this set carries almost all of $\mu_{\beta}$.

\begin{proposition}
\label{prop:soft-ogp-most-gibbs-measure}
    Given $H$, let $\hat S\subseteq \bSig_N$ denote the set of $\zv$ such that $(H,\zv)$ obeys the $(\eps,\delta)$-soft OGP as in Definition~\ref{def:soft-OGP}.
    Then for some $c>0$ independent of $N$, but possibly depending on all other parameters in \eqref{eq:eps-to-zero-slowly}, we have that for the null model $\QQ$
    \[
    \QQ[\mu_{\beta,H}(\hat S)\geq 1-e^{-cN/2}]\geq 1-e^{-cN/2}
    \]
\end{proposition}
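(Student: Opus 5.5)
This is a Markov-inequality transfer from the joint law $\QQ$ to the conditional Gibbs measure. By \cref{prop:soft-OGP-planted}, the $(\eps,\delta)$-soft OGP holds with exponentially good probability under $\QQ$. So there is a constant $c>0$ (depending on $\eps,k,D,\beta,\delta$ but not on $N$) such that, for all large $N$,
\[
\QQ\big[(H,\zv^*) \text{ fails the soft OGP}\big] \le e^{-cN}.
\]
Under $\QQ$, $H\sim\GG_\ER(N,k,D)$ and, conditionally on $H$, $\zv^*\sim\mu_{\beta,H}$. By definition of $\hat S$ (which depends on $H$), the event that $(H,\zv^*)$ obeys the soft OGP is exactly $\{\zv^*\in\hat S\}$.

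Next, rewrite the failure probability as the expectation of a conditional probability:
\[
\QQ[\zv^*\notin \hat S] = \EE_{H}\big[\mu_{\beta,H}(\bSig_N\setminus \hat S)\big] \le e^{-cN}.
\]
This uses the tower property. It needs measurability of $\hat S$ as a function of $H$, which is trivial here because everything is finite.

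Then apply Markov's inequality to the nonnegative random variable $X_H:=\mu_{\beta,H}(\bSig_N\setminus\hat S)$ at threshold $e^{-cN/2}$:
\[
\PP_H\big[X_H \ge e^{-cN/2}\big] \le \frac{\EE X_H}{e^{-cN/2}} \le e^{-cN/2}.
\]
On the complement, $\mu_{\beta,H}(\hat S) > 1-e^{-cN/2}$. Since the event $\{\mu_{\beta,H}(\hat S)\ge 1-e^{-cN/2}\}$ depends only on $H$, its $\QQ$-probability equals its probability under the $H$-marginal. This gives the claim.

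There is no real obstacle; the only point needing care is bookkeeping. The constant $c$ must be the one from the exponentially-good-probability statement, which holds for all $N$ beyond some $N_0$. For $N<N_0$ the bound can fail, so I will interpret the claim as holding for sufficiently large $N$, or shrink $c$ if needed. All the substantive work, namely the Bennett and union-bound estimate in the planted model and the exponential contiguity of \cref{prop:contiguity}, was already done in \cref{prop:soft-OGP-planted}.
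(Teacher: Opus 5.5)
Your proposal is correct and follows essentially the same route as the paper: transfer the soft OGP to $\QQ$ via \cref{prop:soft-OGP-planted}, rewrite the failure probability as $\EE_H[\mu_{\beta,H}(\bSig_N\setminus\hat S)]\le e^{-cN}$, and apply Markov's inequality at threshold $e^{-cN/2}$. Your caveat that the bound is meant for sufficiently large $N$ is consistent with the paper's definition of ``exponentially good probability.''
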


\begin{proof}
    By \cref{prop:soft-OGP-planted}, we know that $\zv\in\hat S$ holds with exponentially good probability $1-e^{-cN}$ under $\QQ$, i.e.
    \[
    \EE^{\QQ}
    [\mu_{\beta,H}(\hat S)]
    \geq 
    1-e^{-cN}.
    \]
    Applying Markov's inequality to upper bound $\QQ[1-\mu_{\beta,H}(\hat{S}) \ge e^{-c N/2}]$ completes the proof.
\end{proof}

Given any Hamiltonian $H:\bSig_N\to\RR$ and soft-OGP parameters $(\eps,\delta)$, we construct a partition 
\begin{equation}
\label{eq:partition}
\hat S
=
\bigcup_{i=1}^K
\cC_i
\end{equation}
of the set $\hat S$  of $(\eps,\delta)$-soft-OGP bit strings into disjoint components as follows.
Each $\cC_i$ is a connected component of the subgraph induced by $\hat S \subseteq \bSig_N$, under the graph structure in which $\zv,\zv'\in \hat S $ are connected if and only if their Hamming distance is at most $(\eps/k)^3 N$.

The following proposition proves the shattering asserted in \cref{thm:shattering}.
\begin{proposition}
\label{prop:shattering}
For random $H\sim \GG_\ER(N,k,D)$ and parameters satisfying \eqref{eq:eps-to-zero-slowly} and \eqref{eq:beta-k-def}, $\mu_{\beta,H}$ is $(\eps/k)$-shattered as in \cref{def:shattering}, with the clusters given in \eqref{eq:partition}.
\end{proposition}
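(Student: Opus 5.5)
The plan is to check the four conditions of \cref{def:shattering}, with shattering parameter $\eps/k$, for the clusters $\cC_i$ in \eqref{eq:partition}. Two of them come almost for free. For \ref{it:cover-gibbs-measure}, the clusters partition $\hat S$ by construction. So $\mu_{\beta,H}(\bigcup_i \cC_i)=\mu_{\beta,H}(\hat S)$, and \cref{prop:soft-ogp-most-gibbs-measure} gives $\mu_{\beta,H}(\hat S)\ge 1-e^{-cN/2}$ with probability $1-e^{-cN/2}$ over $H$. For \ref{it:separation-bound}, two distinct components of the graph on $\hat S$ cannot contain points at Hamming distance at most $(\eps/k)^3N$, since otherwise those points would be adjacent. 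Hence $d_H(\cC_i,\cC_j)>(\eps/k)^3N$, which is exactly the required separation for parameter $\eps/k$.

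For the diameter bound \ref{it:diameter-bound}, I would use a ``no path crosses the annulus'' argument. Fix $\zv\in\cC_i$ and any $\zv'\in\cC_i$, joined by a path $\zv=\wv_0,\wv_1,\dots,\wv_L=\zv'$ in $\hat S$ whose consecutive Hamming steps are at most $(\eps/k)^3N$. Along the path, $d_H(\wv_t,\zv)/N$ changes by at most $(\eps/k)^3$ per step. This is smaller than the annulus width $2(\eps/k)^2$ in \cref{def:soft-OGP}\ref{it:OGP-soft}. So if the path ever reached normalized distance greater than $\eps/k+(\eps/k)^2$ from $\zv$, some $\wv_t$ would land inside the annulus around $\zv$. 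That $\wv_t$ lies in $\hat S$, so property \ref{it:energy-typical} gives $H(\wv_t)\le M\Gamma+M\delta$. Meanwhile property \ref{it:OGP-soft} for $\zv$ gives $H(\wv_t)\ge M\Gamma+2M\delta$, a contradiction. The same reasoning rules out the path even entering the annulus. Therefore $d_H(\zv,\zv')<(\eps/k-(\eps/k)^2)N\le (\eps/k)N$, which is the required diameter bound.

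For \ref{it:each-cluster-small}, I would combine the diameter bound with \cref{cor:most-pairs-orthogonal}, which applies because $\beta^2\le k/(2D)$ by \eqref{eq:beta-k-def}. Any two points of a cluster are within Hamming distance $\eps N/k$, so their overlap is at least $1-2\eps/k\ge \delta$. Consequently
\[
\mu_{\beta,H}(\cC_i)^2\le \mu_{\beta,H}^{\otimes 2}\bigl[|\zv^\top\zv'|/N\ge\delta\bigr]\le e^{-c'N}
\]
with probability $1-e^{-c'N}$ over $H$, and this bound holds simultaneously for all $i$, since the event does not depend on $i$. This gives $\max_i\mu_{\beta,H}(\cC_i)\le e^{-c'N/2}$.

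The step I expect to need the most care is the diameter argument: the annulus and the edge threshold must be matched so that no path can jump over the annulus, and the energy contradiction must use the typical-energy bound for the intermediate point $\wv_t$ rather than for $\zv$. Once that is in place, the remaining work is bookkeeping. I would take the shattering constant to be the minimum of $c/2$ and $c'/2$, and combine the high-probability events by a union bound.
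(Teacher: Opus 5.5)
Your proposal is correct and follows the paper's proof essentially step for step: \ref{it:cover-gibbs-measure} from \cref{prop:soft-ogp-most-gibbs-measure}, \ref{it:separation-bound} from the connected-component construction, \ref{it:diameter-bound} from the annulus bottleneck that steps of size $(\eps/k)^3N$ cannot cross, and \ref{it:each-cluster-small} from \cref{cor:most-pairs-orthogonal} using the large overlap of same-cluster pairs. Your diameter argument is more explicit than the paper's one-line version: you show that no point of $\hat S$ can lie in the annulus of another point of $\hat S$, by combining property~\ref{it:energy-typical} at the intermediate point with property~\ref{it:OGP-soft} at the base point.
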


\begin{proof}
Since drawing $H$ from $\GG_\ER(N,k,D)$ is equivalent to drawing it from the null model $\QQ$, Property~\ref{it:cover-gibbs-measure} was just shown in Proposition~\ref{prop:soft-ogp-most-gibbs-measure}.
Property~\ref{it:separation-bound} follows by definition of connected components. 

Property~\ref{it:diameter-bound} holds by \cref{def:soft-OGP} of the soft OGP, since around each $z\in \hat S$ is a bottleneck at Hamming distance $\eps N/k$ with width $(\eps/k)^2N$ that cannot be crossed with step size at most $(\eps/k)^3 N$. 

Property~\ref{it:each-cluster-small} then follows from \cref{cor:most-pairs-orthogonal}, which shows for any i.i.d. Gibbs samples $\zv,\zv'$ have
\[
\PP(|\zv^\top \zv'|/N \ge \delta) \le e^{-cN}.
\]
If a single cluster $\cC_i$ had mass $\mu_{\beta,H}(\cC_i) > e^{-c N/2}$, then the two independent Gibbs samples would both land in $\cC_i$ with probability $> e^{-c N}$. But since two points in the same cluster satisfy $d_H(\zv,\zv')\le \eps N/k$, they would have overlap $|\zv^\top \zv'|/N\ge 1 - 2\eps/k \gg \delta$, contradicting the above bound.
\end{proof}

\subsubsection{From Shattering to Disorder Chaos (Proof of \cref{thm:single-clause-disorder-chaos})}

We now prove \cref{thm:single-clause-disorder-chaos} (disorder chaos) from the shattering established above, using a similar strategy as in \cite[Section 5.1]{alaoui2023shattering} applied to a different ensemble.

The idea is to consider a model with $M-1$ clauses (to which the preceding analysis applies with no changes) and show that adding another clause injects significant randomness into $\mu_{\beta}$.
The point is that ``adding two i.i.d.\ $M$-th clauses'' and comparing the results is equivalent to modifying $1$ existing clause out of $M$.
Call $H$ the original Hamiltonian with $M-1$ clauses, and $H',H''$ the perturbations with $1$ additional clause each.

Let 
\[
\tilde\mu(\zv)=\mathbb E[\mu_{\beta,H'}(\zv)|H]
\]
be the conditional expectation of $\mu_{\beta,H'}$ given $H$. In what follows, we work on the event that $\mu_{\beta, H}$, $\mu_{\beta, H'}$, and $\mu_{\beta, H''}$ are $\eps$-shattered.

Let $\cC_1, \cC_2,\ldots, \cC_K$ be a $(\eps/k)$-shattered cluster decomposition of the Gibbs measure for $H$ (with $M-1$ clauses) whose existence is guaranteed by \cref{prop:shattering}.
We note this decomposition remains valid for $H'$ and $H''$ up to adjusting $c$ by a constant factor (since adding one clause changes $H(\zv)$ by at most $1$ for any $\zv$, hence $e^{-2\beta} \le \mu_{\beta,H'}/\mu_{\beta,H} \le e^{2\beta}$ pointwise).
Let $\cC_0 = \bSig_N \setminus \bigcup_{i=1}^K \cC_i$ be the rest of the bit strings.
For the measures $\tilde{\mu}, \mu_{\beta,H'}, \mu_{\beta,H''}$, we define the following ``compressed'' measures supported on $\{0, 1,2,\dots,K\}$:
\begin{align*}
\hat\mu(\{i\}) &:=\tilde\mu(\cC_i) = \E\big[\mu_{\beta,H'}(\cC_i)\,\big|\, H\big], \\
\hat{\mu}'(\{i\}) &:= \mu_{\beta,H'}(\cC_i), \\
\hat{\mu}''(\{i\}) &:= \mu_{\beta,H''}(\cC_i).
\end{align*}

\begin{proposition}
\label{prop:compressed-TV-to-Wasserstein}
    Suppose that 
    \begin{equation}
    \label{eq:compressed-TV-condition}
    \liminf_{N\to\infty}
    \mathbb E[d_{\TV}(\hat\mu,\hat\mu')]
    >0.
    \end{equation}
    Then 
    \[
    \liminf_{N\to\infty}
    \mathbb E[W_1(\mu_{\beta,H''},\mu_{\beta,H'})]
    >0.
    \]
\end{proposition}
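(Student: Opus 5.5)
The plan is to combine two inequalities: a convexity (Jensen) step that turns the hypothesis on $d_\TV(\hat\mu,\hat\mu')$ into a lower bound on $d_\TV(\hat\mu',\hat\mu'')$, and a transport step that turns compressed total variation into normalized Wasserstein distance using the separation property \ref{it:separation-bound}. Throughout, the clusters $\cC_0,\cC_1,\dots,\cC_K$ are built from $H$ alone, so they are common to $\mu_{\beta,H'}$ and $\mu_{\beta,H''}$. We work on the event $\mathcal E$ that the decomposition is $(\eps/k)$-shattered for all three measures. This holds with probability $1-o_N(1)$ by \cref{prop:shattering} together with the pointwise comparison $e^{-2\beta}\le \mu_{\beta,H'}/\mu_{\beta,H}\le e^{2\beta}$. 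Since $W_1\le 2$ and $d_\TV\le 1$ always, the complement of $\mathcal E$ contributes only $o_N(1)$ to every expectation below, so we may ignore it.

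\textbf{Step 1 (Jensen).} Conditionally on $H$, the extra clauses of $H'$ and $H''$ are i.i.d. Hence $\hat\mu'$ and $\hat\mu''$ are conditionally i.i.d.\ random probability vectors on $\{0,1,\dots,K\}$, and
\[
\hat\mu=\E[\hat\mu''\mid H]=\E[\hat\mu''\mid H,H'].
\]
Since $\nu\mapsto d_\TV(\nu,\hat\mu')$ is convex, Jensen's inequality conditional on $(H,H')$ gives $d_\TV(\hat\mu,\hat\mu')\le \E[d_\TV(\hat\mu'',\hat\mu')\mid H,H']$. Taking expectations and applying \eqref{eq:compressed-TV-condition} yields
\[
\liminf_{N\to\infty}\E\, d_\TV(\hat\mu',\hat\mu'')\ \ge\ \liminf_{N\to\infty}\E\, d_\TV(\hat\mu,\hat\mu')\ >0 .
\]

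\textbf{Step 2 (compressed TV to $W_1$).} Fix any coupling $\pi$ of $\mu_{\beta,H'}$ and $\mu_{\beta,H''}$, and let $(\bX,\bY)\sim\pi$. Let $I(\bX),I(\bY)\in\{0,\dots,K\}$ be the labels of the cells containing $\bX$ and $\bY$. Then $(I(\bX),I(\bY))$ is a coupling of $\hat\mu'$ and $\hat\mu''$, so
\[
\pi[I(\bX)\ne I(\bY)]\ \ge\ d_\TV(\hat\mu',\hat\mu'').
\]
On $\mathcal E$, the event that either point lies in $\cC_0$ has probability at most $2e^{-cN}$ by \ref{it:cover-gibbs-measure}. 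On the remaining event $\{I(\bX)\ne I(\bY),\ I(\bX),I(\bY)\ge1\}$, property \ref{it:separation-bound} gives $d_H(\bX,\bY)\ge (\eps/k)^3N$, that is, $\|\bX-\bY\|_1\ge 2(\eps/k)^3N$. Therefore
\[
\frac1N\E_\pi\|\bX-\bY\|_1\ \ge\ 2\Big(\frac{\eps}{k}\Big)^3\Big(d_\TV(\hat\mu',\hat\mu'')-2e^{-cN}\Big).
\]
Taking the infimum over $\pi$ gives $W_1(\mu_{\beta,H''},\mu_{\beta,H'})$ on the left-hand side. Taking expectations and combining with Step 1 shows that the $\liminf$ of $\E W_1$ is at least $2(\eps/k)^3$ times a positive constant, which proves the claim.

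\textbf{Main obstacle.} The delicate point is the bookkeeping in Step 2. The compressed measures carry the exceptional cell $\{0\}$, which is not separated from the clusters. Its mass must be controlled simultaneously under both perturbed measures, using that the $H$-based decomposition remains valid for $H'$ and $H''$ after adjusting $c$. The Jensen step is routine once we note that $\hat\mu''$ is conditionally independent of $H'$ given $H$.
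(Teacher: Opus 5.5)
Your proof is correct and rests on the same two ingredients as the paper's: a conditional Jensen step that uses the fact that $H'$ and $H''$ are conditionally i.i.d.\ given $H$, and a transport bound of the form $W_1 \ge 2(\eps/k)^3\, d_\TV(\text{compressed measures}) - e^{-\Omega(N)}$, which comes from cluster separation and the small mass of $\cC_0$. The only difference is the order: the paper applies Jensen to $W_1(\cdot,\mu_{\beta,H'})$ to replace $\mu_{\beta,H''}$ by $\tilde\mu$ and then transports between $\tilde\mu$ and $\mu_{\beta,H'}$, whereas you apply Jensen to the compressed total variation first and then write out the coupling argument explicitly between $\mu_{\beta,H'}$ and $\mu_{\beta,H''}$.
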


\begin{proof}
    First, since $W_1(\cdot, \nu)$ is convex in the first argument for any fixed $\nu$, Jensen's inequality (conditionally on $H$) implies that
    \[
    \mathbb E[W_1(\tilde\mu,\mu_{\beta,H'})]
    \leq 
    \mathbb E[W_1(\mu_{\beta,H''},\mu_{\beta,H'})].
    \]
    Thus it suffices to show that 
    \[
    \liminf_{N\to\infty}
    \mathbb E[W_1(\tilde\mu,\mu_{\beta,H'})]
    >0.
    \]
    For this, Proposition~\ref{prop:shattering} easily yields
    \[
    \mathbb E[W_1(\tilde\mu,\mu_{\beta,H'})]
    \geq 
    \frac{2\eps^{3}}{k^3}
    \EE[d_{\TV}(\hat\mu,\hat\mu')]
    -
    e^{-\Omega(N)}.
    \]
    More precisely, the mass that must be ``redistributed'' between clusters is
$d_\TV(\hat{\mu}, \hat{\mu}')$, and each unit of redistributed mass travels $\ell_1$-distance at least $\|\zv-\zv'\|_1 = 2d_H(\zv,\zv') \ge 2(\eps/k)^3 N$ (the minimum inter-cluster $\ell_1$-distance). The mass that stays within the same cluster contributes $\ge 0$ to $W_1$.
The mass outside all clusters is at most $e^{-\Omega(N)}$ (by shattering, property \ref{it:cover-gibbs-measure}).
    Combined with \eqref{eq:compressed-TV-condition}, this completes the proof.
\end{proof}

Thus, it remains to establish \eqref{eq:compressed-TV-condition}, which is the focus below.
We next show that it suffices to establish anti-concentration for the effect of the new clause on the ratio between most pairs of cluster weights.

\begin{proposition}
\label{prop:pair-anticoncentration-suffices}
    Let $(\nu_N,\nu_N')_{N\ge1}$ be a sequence of pairs of discrete probability measures on a countable set $\mathcal{X}_N$.
    If  $\lim_{N\to\infty} d_{\TV}(\nu_N,\nu_N')=0$, then for $i_N,j_N$ drawn i.i.d. from $\nu_N$,  the likelihood ratio
    \begin{equation}
    \label{eq:ratio-criterion}
    \frac{\nu_N(i_N)\cdot \nu_N'(j_N)}{\nu_N(j_N)\cdot \nu_N'(i_N)} \to 1 \qquad \text{in probability}.
    \end{equation}
\end{proposition}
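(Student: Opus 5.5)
The plan is to reduce the statement to convergence of a single likelihood ratio. Set
\[
L_N(i) := \nu_N'(i)/\nu_N(i),
\]
defined $\nu_N$-almost surely. The ratio in \eqref{eq:ratio-criterion} then equals $L_N(j_N)/L_N(i_N)$. It therefore suffices to show that $L_N(i_N)\to 1$ in probability when $i_N\sim\nu_N$. Since $i_N$ and $j_N$ have the same law, the same holds for $L_N(j_N)$, and the continuous mapping theorem applied to the quotient $(x,y)\mapsto y/x$ near $(1,1)$ gives convergence of the ratio to $1$. Independence of $i_N$ and $j_N$ is not needed for this step.

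The key step is the identity
\[
\E_{i\sim\nu_N}\bigl|L_N(i)-1\bigr| = \sum_{i:\,\nu_N(i)>0}\bigl|\nu_N'(i)-\nu_N(i)\bigr| \le 2\,d_\TV(\nu_N,\nu_N').
\]
Then, for any fixed $t>0$, Markov's inequality gives
\[
\PP\bigl[|L_N(i_N)-1|>t\bigr]\le \frac{2\,d_\TV(\nu_N,\nu_N')}{t}\to 0.
\]
There is no measure-theoretic subtlety because $\mathcal X_N$ is countable, so sums replace integrals and the point masses are well defined.

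The only delicate point is the denominator $\nu_N'(i_N)$, which could vanish. This happens on an event of $\nu_N$-probability equal to $\nu_N(\{i:\nu_N'(i)=0\})\le d_\TV(\nu_N,\nu_N')\to 0$, which is also covered by the Markov bound above, since $L_N=0$ there. So the ratio is well defined with probability tending to $1$, and interpreting it arbitrarily on the exceptional event does not affect convergence in probability.

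I do not expect a real obstacle here. The substantive work comes afterwards, in using this proposition contrapositively: one shows that for $\hat\mu$ and $\hat\mu'$, the ratio of the new clause's multiplicative effect across two independently chosen clusters stays anticoncentrated away from $1$, which then forces \eqref{eq:compressed-TV-condition}.
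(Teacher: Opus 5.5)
Your proposal is correct and follows the same route as the paper: you bound $\E_{i\sim\nu_N}|\nu_N'(i)/\nu_N(i)-1|$ by $2\,d_\TV(\nu_N,\nu_N')$, deduce that each single-index ratio tends to $1$ in probability, and then pass to the quotient. You also spell out the Markov step and the case of a vanishing denominator $\nu_N'(i_N)$, both of which the paper leaves implicit.
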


\begin{proof}
By definition,
\[
d_{\TV}(\nu_N,\nu_N') =
\frac{1}{2}\sum_{x \in \mathcal{X}_N}\big|\nu_N(x) - \nu_N'(x)\big|
\ge
\frac{1}{2}\E_{x\sim \nu_N}\lt[\lt|\frac{\nu_N'(x)}{\nu_N(x)}-1\rt|\rt],
\]
where the inequality is due to the possible existence of $x\in \mathcal{X}_N$ where $\nu_N(x)=0$ but $\nu_N'(x)\neq 0$.
Thus both $\nu_N'(i_N)/\nu_N(i_N)$ and ${\nu_N'(j_N)}/{\nu_N(j_N)}$ converge to $1$ in probability, which easily yields the result.
\end{proof}

Thus it remains to consider the ratio \eqref{eq:ratio-criterion} (with $\nu_N=\hat\mu$ and $\nu_N'=\hat\mu'$) and show it does not converge to $1$ in probability (along any subsequence of $N$).

\begin{proposition}
\label{prop:ratio-anticoncentrates}
    With $\nu_N=\hat\mu$ and $\nu_N'=\hat\mu'$, the ratio \eqref{eq:ratio-criterion} does not converge to $1$ in probability (along any subsequence $N_1<N_2<\dots$).
\end{proposition}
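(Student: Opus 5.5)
We compute the ratio \eqref{eq:ratio-criterion} explicitly in terms of the new clause. Write the added clause as $f(\zv)=J\prod_{v\in e}z_v$, with $e$ a uniform $k$-subset and $J$ a Rademacher sign, independent of $H$. Pointwise $\mu_{\beta,H'}(\zv)\propto \mu_{\beta,H}(\zv)e^{\beta f(\zv)}$, so
\[
\hat\mu'(i)=\frac{\mu_{\beta,H}(\cC_i)\,\big(\cosh\beta+\sinh\beta\, m_i(e,J)\big)}{\sum_\ell \mu_{\beta,H}(\cC_\ell)\big(\cosh\beta+\sinh\beta\, m_\ell(e,J)\big)},\qquad m_i(e,J):=J\,\la \textstyle\prod_{v\in e}z_v\ra_{\mu_{\beta,H}(\cdot\mid \cC_i)}.
\]
The normalization cancels in \eqref{eq:ratio-criterion}, and the $\hat\mu$ factors only introduce the deterministic $H$-measurable ratio $\hat\mu(i)/\hat\mu(j)$ (which is a ratio of $\E[\cdot\mid H]$ of the same expressions). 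So, up to this conditional-mean factor, the ratio equals $\frac{1+\tanh\beta\, m_j}{1+\tanh\beta\, m_i}$. Since $\hat\mu(i)/\hat\mu(j)=\E_{e,J}[\ldots]$ is a deterministic function of $(i,j,H)$, it suffices to show that, for $i,j\sim\hat\mu$ i.i.d., the random variable $R_{ij}(e,J):=\frac{1+\tanh\beta\, m_i}{1+\tanh\beta\, m_j}$ has conditional (over $(e,J)$) variance bounded below with non-vanishing probability. A random variable cannot converge in probability to a deterministic constant times something with non-degenerate conditional fluctuations.

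The key step is to show that $m_i(e)-m_j(e)$ is non-negligible for a typical uniform edge $e$ when $i,j$ are distinct typical clusters. Let $\bar\zv^{(i)}$ denote the cluster-magnetization vector. I would use the second-moment identity
\[
\E_e\big[(m_i-m_j)^2\big]\approx \binom{N}{k}^{-1}\sum_{|e|=k}\Big(\la z_e\ra_i-\la z_e\ra_j\Big)^2,
\]
and expand it via within-cluster overlaps. Sampling $\zv,\zv'$ from cluster $i$ gives $\E_e\la z_e\ra_i^2 = \E[R(\zv,\zv')^k]+O(k^2/N)$, where $R$ is the overlap, and the diameter bound \ref{it:diameter-bound} gives $R\ge 1-2\eps/k$. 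Hence this term is $\ge (1-2\eps/k)^k\approx e^{-2\eps}$. The cross term $\E_e\la z_e\ra_i\la z_e\ra_j$ involves overlaps between a point of $\cC_i$ and a point of $\cC_j$. By \cref{cor:most-pairs-orthogonal}, with $\hat\mu^{\otimes2}$-probability $1-e^{-cN}$ these overlaps are at most $\delta$ in absolute value, so the cross term is $O(\delta^k)$. Thus $\E_e(m_i-m_j)^2\ge 2e^{-2\eps}-o(1)$, bounded below. Since $|m|\le1$ and $\tanh\beta>0$, this yields a constant lower bound on the conditional variance of $\log R_{ij}$ over $(e,J)$. The sign $J$ alone already flips $m_i-m_j$, so the distribution is symmetric around the conditional mean contribution.

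To finish, I would argue by contradiction. Suppose the ratio tended to $1$ in probability. Then $R_{ij}(e,J)\,\hat\mu(j)/\hat\mu(i)\to 1$ in probability jointly over $(H,i,j,e,J)$. Fixing $(H,i,j)$ and comparing $J=+1$ with $J=-1$ for the same $e$ eliminates the deterministic factor and forces $\frac{(1+t m_i)(1-t m_j)}{(1-t m_i)(1+t m_j)}\to 1$, where $t=\tanh\beta$. This in turn forces $m_i-m_j\to0$ in probability, contradicting the lower bound $\E_e(m_i-m_j)^2\ge c_0$, given that $|m_i-m_j|\le 2$. One subtlety is that $\hat\mu$ is conditioned on $H$ (with $M-1$ clauses), while the independence of the clause used above is exactly what we need. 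Another is that the shattering decomposition is for $H$ and the measures differ by at most $e^{\pm2\beta}$, which is handled as in the text.

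The main obstacle I expect is justifying the within- and cross-cluster overlap bounds under the sampling $i,j\sim\hat\mu$. In particular, I need to verify that drawing clusters according to $\hat\mu$ and then points within them reproduces $\mu^{\otimes2}$, so that \cref{cor:most-pairs-orthogonal} applies. I also need to control the $O(k^2/N)$ hypergeometric corrections, along the lines of \eqref{eq:pk-hypergeometric}.
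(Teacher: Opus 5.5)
Your proposal is correct. It shares the paper's skeleton: you write $\hat\mu'(\{i\})\propto\mu_{\beta,H}(\cC_i)\,(\cosh\beta+\sinh\beta\,J\langle\chi_E\rangle_i)$, then remove the $H$-measurable factor $\hat\mu(\{i\})/\hat\mu(\{j\})$ by comparing $J=+1$ with $J=-1$ for the same support $E$, exactly as the paper does with $R_+/R_-$.

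You differ in the key step, which shows that two typical clusters respond differently to the new clause.
\begin{itemize}
\item \emph{The paper's route.} It picks a weighted-majority representative $\hat\zv^{(\ell)}$ in each cluster. It uses the diameter bound and Markov to show that $\chi_E$ is nearly frozen at $s_\ell=\chi_E(\hat\zv^{(\ell)})$ on each cluster (event $A_1$). It then combines \cref{cor:most-pairs-orthogonal} with Talagrand's Proposition~1.4.14 on uniformity of $O(1)$-coordinate marginals to get $s_i\neq s_j$ with probability about $\frac12$ (event $A_2$). Tracking $O(\beta\sqrt{\eps})$ errors and choosing $\delta_*$ gives the explicit bound $\frac18$.
\item \emph{Your route.} You compute $\EE_E\big(\langle\chi_E\rangle_i-\langle\chi_E\rangle_j\big)^2$ directly from the identity $\EE_E[\chi_E(\zv)\chi_E(\zv')]=R(\zv,\zv')^k+O(k^2/N)$ in \eqref{eq:pk-hypergeometric}. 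The diagonal terms are at least $(1-2\eps/k)^k-o(1)$ by the diameter bound. The cross term is at most $\delta^k+o(1)$ for typical pairs by \cref{cor:most-pairs-orthogonal}. You finish by contradiction, using that $m\mapsto\tanh^{-1}(m\tanh\beta)$ has derivative at least $\tanh\beta>0$, together with bounded convergence.
\end{itemize}

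Your route is more elementary and self-contained. It needs no representatives, no perturbative bookkeeping, and no appeal to Talagrand. The decorrelation of the clause values is read off from the overlap; indeed the paper's estimate $\PP[\chi_E(\zv^{(i)})\neq\chi_E(\zv^{(j)})]=\frac12+o_N(1)$ follows from the same identity, since it equals $\frac{1-\langle R^k\rangle}{2}+O(k^2/N)$. It also only needs $\eps$ bounded away from $\frac12$, not small relative to $\beta$-dependent errors.

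The paper's route buys a more explicit structural picture: each cluster carries a nearly frozen value of the new clause, and these values behave like independent fair signs. It also yields a concrete probability bound without a contradiction step. Yours can be made quantitative as well: $\EE(m_i-m_j)^2\ge c_0$ with $|m_i-m_j|\le 2$ gives $\PP\big[|m_i-m_j|\ge\sqrt{c_0/2}\big]\ge c_0/8$.

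The technicalities you flag are routine.
\begin{itemize}
\item To pass from $\mu_{\beta,H}^{\otimes 2}$ to $(i,j)\sim\hat\mu^{\otimes2}$, use $\hat\mu(\{\ell\})\le e^{2\beta}\mu_{\beta,H}(\cC_\ell)$.
\item Apply Markov over cluster pairs, since \cref{cor:most-pairs-orthogonal} controls pairs of points, not pairs of clusters.
\item The events $i=j$ and $\{i=0\}\cup\{j=0\}$ are exponentially rare by \ref{it:each-cluster-small} and \ref{it:cover-gibbs-measure}.
\end{itemize}
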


\begin{proof}
Since the likelihood ratio in \eqref{eq:ratio-criterion} is unchanged if we rescale either measure by a positive constant, we may normalize the cluster weights on $\{1,\dots,K\}$. Let
\[
S=\bigcup_{\ell=1}^K \cC_\ell,
\qquad
\lambda(\{\ell\}) := \frac{\mu_{\beta,H}(\cC_\ell)}{\mu_{\beta,H}(S)}.
\]
By property~\ref{it:cover-gibbs-measure}, we have $\mu_{\beta,H}(S)=1-e^{-\Omega(N)}$.
Also, adding one clause changes every Gibbs weight by a factor in $[e^{-\beta},e^{\beta}]$, so after normalization there is a constant $C_\beta<\infty$ such that for every $\ell$,
\[
C_\beta^{-1}\lambda(\{\ell\}) \le \frac{\hat\mu(\{\ell\})}{\hat\mu(\{1,\ldots,K\})} \le C_\beta \lambda(\{\ell\}).
\]
Hence any event involving the label pair $(i,j)$ that has probability bounded below under $\lambda\otimes\lambda$ also has probability bounded below under the normalized version of $\hat\mu\otimes\hat\mu$. It is therefore enough to prove the proposition with $i,j$ drawn i.i.d. from $\lambda$, and we return to $\hat{\mu}$ at the end of this proof.

Now draw $i,j$ i.i.d. from $\lambda$, and conditional on $(i,j)$ let $\zv^{(i)},\zv^{(j)}$ be independent Gibbs samples from $\mu_{\beta,H}$ conditioned to lie in $\cC_i,\cC_j$. Averaged over the random labels $(i,j)\sim \lambda\otimes \lambda$, the pair $(\zv^{(i)},\zv^{(j)})$ is distributed as two i.i.d. Gibbs samples, conditioned on lying in $S$.
Since $\mu_{\beta,H}(S) = 1-e^{-\Omega(N)}$, this conditioning changes probabilities by at most $e^{-\Omega(N)}$.

Let $E=\{e_1,\dots,e_k\}$ be the random support of the new clause in $H'$, and write
\[
\chi_E(\zv):=\prod_{a=1}^k z_{e_a}.
\]
For each cluster $\cC_\ell$, let $\hat \zv^{(\ell)}\in\bSig_N$ be the coordinate-wise weighted majority vote within $\cC_\ell$, that is, $\hat{z}^{(\ell)}_a = \sgn(\sum_{\zv\in \cC_\ell} \mu_{\beta,H}(\zv) z_a)$ for $a = 1,2,\ldots,N$, with ties broken arbitrarily.
Since every pair of points in $\cC_\ell$ has Hamming distance at most $\eps N/k$ by property~\ref{it:diameter-bound}, and the majority vote minimizes the average Hamming distance, then
\[
\mathbb E\big[d_H(\zv,\hat \zv^{(\ell)}) ~\big|~ \zv \sim \mu_{\beta, H},~ \zv\in \cC_\ell\big]\le \eps N/k.
\]
This implies that on a uniformly random coordinate, $\zv$ and $\hat\zv^{(\ell)}$ differ with probability at most $\eps/k$.
For any fixed support $E$ and for each cluster $\ell$, define
\begin{equation}
\label{eq:minority-weight}
    q_\ell(E) := \PP\Big[\zv|_E \neq \hat \zv^{(\ell)}|_E \,\Big|\, \zv\sim \mu_{\beta,H},~ \zv\in \cC_\ell\Big].
\end{equation}
where $\zv|_E=\hat\zv|_E$ denotes $z_a=\hat{z}_a$ $\forall a\in E$.
For each $\ell$, over the uniformly random support $E$, a union bound over the $k$ coordinates give
\begin{equation}
    \EE_E[q_\ell (E)] \le k \cdot \frac{\eps}{k} = \eps \qquad
    \Longrightarrow \qquad \PP_E [q_\ell(E) > \sqrt{\eps}] \le \sqrt{\eps},
\end{equation}
where the implication follows from Markov's inequality.
We then define the event (depending only on $E$ and the labels $i,j$)
\begin{equation}
\label{eq:majority-vote-usually-correct}
A_1 := \Big\{q_i(E) \le \sqrt{\eps} \quad \text{and} \quad q_j(E) \le \sqrt{\eps} \Big\}, 
\qquad
\PP[A_1] \ge 1-2 \sqrt{\eps}.
\end{equation}

Next, \cref{cor:most-pairs-orthogonal} together with \cite[Proposition 1.4.14]{talagrand2011mean1} implies that for a uniformly random set of $O(1)$ coordinates, the joint marginal of two i.i.d. Gibbs samples is asymptotically uniform (up to expected total variational error tending to $0$ with $N$). Hence
\[
\PP \big[\chi_E(\zv^{(i)})\neq \chi_E(\zv^{(j)})\big] = \frac12 + o_N(1).
\]
Define
\[
s_\ell :=\chi_E(\hat{\zv}^{(\ell)})
\]
as the clause value of the majority representative (which is a deterministic function of $H,E,\ell$), and the event
\[
A_2 = \{s_i \neq s_j\}.
\]
On $A_1$ we have
\[
\chi_E\big(\zv^{(i)}\big)=s_i,
\qquad
\chi_E\big(\zv^{(j)}\big)=s_j,
\]
except with probability at most $q_i(E)+q_j(E)\le 2\sqrt\eps$, since $\zv|_E=\hat\zv^{(\ell)}|_E$ forces $\chi_E(\zv)=s_\ell$.
Therefore, for large $N$ and small $\eps$
\begin{align}
\PP[A_2] &\ge \PP\big[\chi_E(\zv^{(i)})\neq \chi_E(\zv^{(j)})\big] - 2\sqrt{\eps} - \PP[A_1^c]  \ge \frac12-O(\sqrt{\eps}) - o_N(1) \ge \frac13,
\\
\text{and}
\qquad
\PP[A_1\cap A_2] &\ge \PP[A_1] + \PP[A_2] - 1 \ge \frac13 - O(\sqrt{\eps})\ge \frac14.
\end{align}

Now fix $(H,E,i,j)$, and for $J\in\{+1,-1\}$ let $H^{(J)}=H-J\chi_E$ be the Hamiltonian obtained by adding the clause with support $E$ and sign $J$. Let $\hat\mu^{(J)}$ denote the corresponding compressed cluster measure. For each cluster $\cC_\ell$, define
\[
m_\ell(J):=\mathbb E\big[e^{\beta J\chi_E(\zv)} \,\big|\, \zv\sim \mu_{\beta,H},~ \zv\in \cC_\ell\big],
\]
where the expectation is only over the conditional Gibbs sample $\zv$.
Since $\mu_{\beta,H^{(J)}}(\zv)\propto \mu_{\beta,H}(\zv)e^{\beta J\chi_E(\zv)}$, we have
\[
\frac{\hat\mu^{(J)}(\{i\})}{\hat\mu^{(J)}(\{j\})}
=
\frac{\mu_{\beta,H}(\cC_i)\,m_i(J)}{\mu_{\beta,H}(\cC_j)\,m_j(J)}.
\]
Splitting the expectation defining $m_\ell(J)$ according to whether $\zv|_E=\hat\zv^{(\ell)}|_E$, and noting that the two possible values $e^{\pm\beta}$ of $e^{\beta J\chi_E(\zv)}$ differ by only $2\sinh\beta = O(\beta)$, we get on the event $A_1$ that
\[
m_i(J)=e^{\beta J s_i}(1+O(\beta\sqrt{\eps})),
\qquad
m_j(J)=e^{\beta J s_j}(1+O(\beta\sqrt{\eps})).
\]
Hence, on $A_1$,
\begin{equation}
\frac{\hat\mu^{(+)}(\{i\})/\hat\mu^{(+)}(\{j\})}{\hat\mu^{(-)}(\{i\})/\hat\mu^{(-)}(\{j\})}
=
e^{2\beta(s_i-s_j)}(1+O(\beta\sqrt{\eps})).
\label{eq:mu-ratio}
\end{equation}
Furthermore, on the event $A_1 \cap A_2$, we have $s_i-s_j \in \{\pm 2\}$, so the above value is $e^{\pm4\beta}(1+O(\beta\sqrt{\eps}))$. In particular, for $\eps$ sufficiently small, \eqref{eq:mu-ratio} lies outside of $[e^{-2\beta}, e^{2\beta}]$.

Choose $\delta_*>0$ so small that
\[
\frac{1+\delta_*}{1-\delta_*}<e^{2\beta}.
\]
With the uniformly small $\eps$ used above, on the event $A_1\cap A_2$ the two possible values of
\[
R_J:=\frac{\hat\mu(\{i\})\hat\mu^{(J)}(\{j\})}{\hat\mu(\{j\})\hat\mu^{(J)}(\{i\})}
\]
satisfy $\max\{R_+/R_-,R_-/R_+\}>(1+\delta_*)/(1-\delta_*)$, by \eqref{eq:mu-ratio}.
Since $J$ is an unbiased sign, it follows that conditional on $H,E,i,j$ and $A_1\cap A_2$, at least one of the two sign choices lies outside $[1-\delta_*,1+\delta_*]$, and therefore
\[
\mathbb P\big[R_{J_M}\notin [1-\delta_*,1+\delta_*] \,\big|\, H,E,i,j,A_1\cap A_2\big]\ge \frac12.
\]
Combining this with $\mathbb P[A_1\cap A_2]\ge 1/4$, we obtain
\[
\mathbb P\left[
\frac{\hat\mu(\{i\})\hat\mu'(\{j\})}{\hat\mu(\{j\})\hat\mu'(\{i\})}
\notin [1-\delta_*,1+\delta_*]
\right]\ge \frac18-o_N(1),
\]
where $\hat\mu'=\hat\mu^{(J_M)}$ is the compressed measure for the realized perturbation $H'$. By the pointwise comparability of $\hat\mu/\hat\mu(\{1,\ldots,K\})$ and $\lambda$, the same positive lower bound (up to a constant factor depending only on $\beta$) holds when $(i,j)$ are sampled from the normalized version of $\hat\mu$, as in \eqref{eq:ratio-criterion}. Therefore the likelihood ratio in \eqref{eq:ratio-criterion} does not converge to $1$ in probability (along any subsequence of $N$).
\end{proof}

\begin{proof}[Proof of Theorem~\ref{thm:single-clause-disorder-chaos}]
In the notation above, the pair $(H',H'')$ has exactly the coupling described in the theorem statement.
Suppose for contradiction that $\EE[W_1(\mu_{\beta,H'},\mu_{\beta,H''})]\to 0$ along some subsequence.
By \cref{prop:compressed-TV-to-Wasserstein}, $\EE[d_\TV(\hat\mu,\hat\mu')]\to0$ along a further subsequence, so $d_\TV(\hat\mu,\hat\mu')\to0$ in probability, and \cref{prop:pair-anticoncentration-suffices} then forces the likelihood ratio \eqref{eq:ratio-criterion} to converge to $1$ in probability along that subsequence, which contradicts \cref{prop:ratio-anticoncentrates}.
\end{proof}

\section*{Acknowledgments}
We thank  David Gamarnik, David Gosset, and Jarrod McClean for comments on a draft of this work. LZ thanks David Gamarnik for helpful discussions.
LZ is funded by ONR Award No. N000142612399 and the Google Research Scholar Award.

\paragraph{AI usage statement.} Almost all of the proof ideas in this work are human-generated. GPT‑5.6 Sol, GPT‑6 Astra, and Claude Fable 5.1 were used during the preparation of this manuscript to help verify the results and repair minor proof issues. Claude Fable 5.1 was used to implement Python code that generated the numerical data in Tables~\ref{tab:dense-betas}, \ref{tab:xorsat-grid}, and \ref{tab:xorsat-grid-reg}. The authors take full responsibility for the correctness of the results.

\clearpage
\appendix 

\section{Non-Stability of DQI via Right-invertible XORSAT}
\label{app:invertible}

In this appendix, we give a simple argument showing that DQI is not a stable algorithm.
Specifically, we consider a family of XORSAT instances with a solution set that exhibits shattering.
On the other hand, both DQI and a classical algorithm can sample from their Gibbs measure exactly.

\begin{theorem}
\label{thm:invertible}
For any $k\ge 3$, there exists $\alpha \in (0,1)$ such that with high probability, a random $k$-XORSAT instance $H\sim\GG_{\ER}(N,k,\alpha k)$ has a solution set that shatters into exponentially many well-separated clusters.
Nevertheless, both DQI and a classical (non-stable) algorithm can efficiently sample from its Gibbs measure $\mu_{\beta,H}$ at every $\beta\in[0,\infty]$.
\end{theorem}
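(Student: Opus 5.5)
Two ingredients are needed: shattering of the solution space of random $k$-XORSAT below the satisfiability threshold, and exact samplability whenever $B$ has full row rank.

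\textbf{Step 1: choosing $\alpha$.} By \cref{thm:coja-fullrank}, if $\alpha k < K_*(k)$, i.e.\ $\alpha<K_*(k)/k$, then $B\in\FF_2^{M\times N}$ with $M=\lfloor \alpha N\rfloor$ has full row rank with high probability. Hence $B\xv=\vv$ is solvable for every right-hand side $\vv$, and the solution set is an affine translate of $\ker B$. For shattering, I would invoke the known clustering threshold for random $k$-XORSAT (Ibrahimi--Kanoria--Kraning--Montanari; Achlioptas--Molloy), which lies at the 2-core emergence density $\alpha_d(k)$. For $\alpha\in(\alpha_d(k),\alpha_s(k))$, with $\alpha_s(k)=K_*(k)/k$ the satisfiability threshold, the solutions split into exponentially many clusters. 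Each cluster is indexed by an assignment to the 2-core variables. Distinct core solutions are at linear Hamming distance, because the core's code has linear minimum distance. Each cluster carries the same mass $2^{-\Theta(N)}$ under the uniform measure on solutions, since they are cosets of a common subspace. Within a cluster, fixing the core leaves a peelable forest, so cluster diameter and separation can be checked against \cref{def:shattering}. Since $\alpha_d<\alpha_s<1$ for all $k\ge3$, any $\alpha$ in this interval works. I would cite this rather than reprove it, checking only that properties \ref{it:each-cluster-small}--\ref{it:cover-gibbs-measure} follow for the uniform measure on solutions, which is $\mu_{\infty,H}$.

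\textbf{Step 2: classical sampler.} When $B$ has full row rank, the map $\xv\mapsto B\xv$ is surjective onto $\FF_2^M$, and every fiber has exactly $|\ker B|=2^{N-M}$ elements. The Gibbs weight depends only on $\yv=B\xv$, and the weight of $\yv$ factorizes over clauses. Therefore, under $\mu_{\beta,H}$, the pushforward of $\xv$ to $\yv$ has independent coordinates: $y_i\ne v_i$ with probability $e^{-\beta}/(2\cosh\beta)$. The sampler is then: sample each clause's violation bit independently, set $\yv$ accordingly, and output a uniformly random solution of $B\xv=\yv$ by Gaussian elimination. This is exact for every $\beta\in[0,\infty]$; at $\beta=\infty$ it outputs a uniform satisfying assignment. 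The algorithm is not stable, and indeed cannot be, by the disorder-chaos mechanism of \cref{thm:single-clause-disorder-chaos} applied to the clusters of Step 1. I would note this remark without proving it.

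\textbf{Step 3: DQI.} Full row rank of $B$ means the columns of $B^T$ are independent, so $C^\perp=\{\dv: B^T\dv=\mathbf 0\}=\{\mathbf 0\}$. The quantum decoding problem of \cref{def:QDP} is then trivial: the only codeword is $\mathbf 0$, so the decoder is the identity with $\delta=0$, at every $\gamma\le1/2$. \cref{thm:gamma_vs_beta} with $\delta=0$ then gives exact sampling at every $\beta$. Equivalently, $\ket{\Phi_{\rm target}}$ is a product state and the construction of \cref{sec:DQI} needs no decoding. The limit $\beta=\infty$ corresponds to $\gamma=1/2$, which is still allowed.

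\textbf{Main obstacle.} The only delicate point is Step 1, specifically matching the literature's cluster statement to \cref{def:shattering}. This includes the small-diameter requirement $\eps N$ for a chosen $\eps$ and the separation $\eps^3N$. I would handle it by choosing $\alpha$ just above $\alpha_d(k)$, where non-core variables and their fluctuations are controlled, and by using linear minimum distance of the core code for separation. The diameter bound may need adjusting $\eps$ as a function of $(k,\alpha)$, which the definition permits.
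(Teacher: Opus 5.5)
Your proposal is correct, and Steps 2 and 3 are the paper's argument. Your independent per-clause violation bits are equivalent to the paper's binomial draw of the number $s$ of violated clauses followed by a uniform choice of which $s$ to flip (\cref{lem:classical-sample-alg}). DQI is trivial because $C^\perp=\ker B^T=\{\mathbf 0\}$.

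The one real difference is how you obtain full row rank. The paper never uses the rank threshold here. It takes $\alpha\in(\alpha_d(k),\alpha_s(k))$, where \cite{Ibrahimi_2015} gives both satisfiability and shattering with high probability, and then observes
\[
\PP(H\text{ satisfiable})=\EE\bigl[2^{\rank(B)-M}\bigr]\le \tfrac12+\tfrac12\,\PP\bigl(\rank(B)=M\bigr),
\]
so satisfiability w.h.p.\ forces right-invertibility w.h.p. You instead apply \cref{thm:coja-fullrank} directly. That needs the shattering window to meet $\{\alpha<K_*(k)/k\}$, i.e.\ the identification $\alpha_s(k)=K_*(k)/k$ of the satisfiability and full-rank thresholds. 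This identification is true: it is part of \cite{dubois20023,pittel2016satisfiability,coja2024k}, and it also follows from the displayed inequality combined with \cref{thm:coja-fullrank}. But you state it without justification. The paper's route is self-contained given only the shattering citation. Yours relies on the external threshold identification, but in exchange yields the explicit admissible range $\alpha\in(\alpha_d(k),K_*(k)/k)$.

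On your ``main obstacle'': the theorem claims shattering only in the sense of \cite{Ibrahimi_2015}, namely exponentially many clusters at pairwise linear distance. The paper's proof just cites that and does not check the quantitative \cref{def:shattering}. If you do try to check the diameter and separation conditions there, your remedy of taking $\alpha$ just above $\alpha_d(k)$ does not work as stated. About $e^{-k\alpha}N$ variables are isolated, hence unconstrained and free within every cluster. More generally, the core appears discontinuously at $\alpha_d(k)$ and leaves a macroscopic non-core part. So cluster diameters are bounded below by a constant fraction of $N$ uniformly over the window. Whether some $\eps$ gives diameter at most $\eps N$ together with separation at least $\eps^3 N$ then depends on the size of the separation constant, which the cited results do not quantify. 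It is cleaner to drop that step and state the shattering in the form the literature provides, as the paper does.
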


Since the solution set (i.e., the zero-temperature Gibbs measure at $\beta=\infty$) exhibits shattering, we expect that shattering-to-hardness argument from \cref{thm:single-clause-disorder-chaos,thm:main-hardness} can be applied to rule out stable sampling algorithms. As DQI can sample this solution set as well as its Gibbs measure at any finite temperature, this implies that DQI is not a stable algorithm.

To show \cref{thm:invertible}, we start by discussing some linear algebraic properties of the XORSAT problem.
Let $R=\rank(B)$ over the finite field $\FF_2$.
Let $\range(B) \subseteq \FF_2^M$ be the range (or column space) of $B$.
The linear system of equation $B\xv=\vv$ is satisfiable (i.e., has a solution) if and only if $\vv \in \range(B) \subseteq \FF_2^M$.
The number of solutions (if they exist) is $2^{N-R}$, and all solutions can always be written in the form of
\begin{equation}
    \xv = B^+ \vv + \wv, \qquad \wv \in \ker(B) \subseteq \FF_2^N
\end{equation}
and $B^+$ is a generalized inverse of $B$ (any $B^+$  satisfying $BB^+B=B$). We also consider the following special case of XORSAT problems where $B^+$ is actually a right-inverse:
\begin{definition}
We say any XORSAT instance $H=(B,J)$ is \emph{right-invertible} when $B$ is right-invertible, i.e., there exists $B^+$ such that $BB^+ = I$. This is true iff $B$ has full row rank, i.e., $\rank(B)=M\le N$.
\end{definition}

\begin{lemma}\label{lem:classical-sample-alg}
Any right-invertible XORSAT instance is satisfiable, and there is an efficient classical algorithm to sample exactly from $\mu_{\beta,H}$ for all $\beta \in [0, \infty]$.
\end{lemma}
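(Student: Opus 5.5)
The plan is to use the fact that when $B$ has full row rank, the map $\xv\mapsto B\xv$ from $\FF_2^N$ onto $\FF_2^M$ is surjective. Satisfiability is then immediate: every $\vv\in\FF_2^M$ lies in $\range(B)=\FF_2^M$, so $\xv=B^+\vv$ solves $B\xv=\vv$, with $B^+$ any right inverse, obtainable by Gaussian elimination over $\FF_2$.

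For sampling, I will use the observation from \cref{sec:DQI} that $H_{B,J}((-1)^\xv)$ depends on $\xv$ only through $\yv=B\xv$. With $\vv$ defined by $(-1)^{\vv}=J$, the number of unsatisfied clauses is $|\yv+\vv|$, so $H = 2|\yv+\vv| - M$. Every $\yv\in\FF_2^M$ has exactly $|\ker B|=2^{N-M}$ preimages. Hence the pushforward of $\mu_{\beta,H}$ under $\xv\mapsto B\xv$ is proportional to $e^{-2\beta|\yv+\vv|}$ on all of $\FF_2^M$. This is a product measure: writing $\yv=\vv+\mathbf{e}$, the coordinates $e_i$ are i.i.d.\ Bernoulli with $\PP[e_i=1]=e^{-2\beta}/(1+e^{-2\beta}) = (1-\tanh\beta)/2$. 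Conditioned on $\yv$, the Gibbs measure is uniform on the fiber $\{\xv: B\xv=\yv\}$.

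The algorithm is therefore the following.
\begin{enumerate}
\item Sample $\mathbf{e}$ with i.i.d.\ Bernoulli$((1-\tanh\beta)/2)$ coordinates and set $\yv=\vv+\mathbf{e}$.
\item Compute a particular solution $B^+\yv$.
\item Add a uniformly random element of $\ker B$, obtained as a uniform random combination of a kernel basis.
\item Output $\zv=(-1)^{\xv}$.
\end{enumerate}
Each step is polynomial time via Gaussian elimination. Exactness follows from the product-measure computation together with uniformity on fibers.

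The endpoint $\beta=\infty$ needs a separate remark. There $\mathbf{e}=\mathbf{0}$ almost surely, and the output is a uniform solution of $B\xv=\vv$, which is exactly the zero-temperature Gibbs measure, meaning uniform on ground states. At $\beta=0$ the Bernoulli parameter is $1/2$, so the output is uniform on $\bSig_N$, as it should be.

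I expect no real obstacle. The only points needing care are the constant-factor bookkeeping that yields energy $2|\yv+\vv|-M$, with the resulting flip probability $(1-\tanh\beta)/2$ rather than $\gamma$, and checking that the counting of preimages is uniform. That uniformity holds because fibers of a surjective linear map are cosets of $\ker B$.
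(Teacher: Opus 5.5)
Your proof is correct and follows the paper's argument: surjectivity of $B$ gives satisfiability and equal fiber sizes $2^{N-M}$, the pushforward to $\yv=B\xv$ has weight proportional to $e^{-2\beta|\yv+\vv|}$, and you output $B^+\yv$ plus a uniform element of $\ker B$. The only cosmetic difference is that the paper first draws the number $s$ of violated clauses with probability $\binom{M}{s}e^{\beta(M-2s)}/(2\cosh\beta)^M$ and then flips a uniformly random $s$-subset of the bits of $\vv$, which has exactly the same law as your i.i.d.\ Bernoulli$((1-\tanh\beta)/2)$ flip pattern.
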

\begin{proof}
Since $B$ has full row rank, let $B^+$ be the right inverse of $B$. 
Note for each of the $2^M$ possible vectors $\tilde{\vv}\in \FF_2^M$, there are $2^{N-M}$ solutions to $B\xv=\tilde{\vv}$.
As a result, the XORSAT problem is satisfiable regardless of the choice of $\vv$. 
The following algorithm can sample from $\mu_{\beta,H}$ efficiently:
\begin{enumerate}
    \item Sample $s\in\{0,1,\ldots,M\}$ as the number of unsatisfied clauses according to the $\Pr(s)= \frac{1}{Z_s} \binom{M}{s}e^{\beta (M-2s)}$, where $Z_s=(2 \cosh\beta)^M$.
    \item Sample $\tilde{\vv}$ as one of $\binom{M}{s}$ vectors that disagree with $\vv$ on $s$ bits (obtain by flipping $s$ bits of $\vv$ at random).
    \item Output sample $\tilde\xv = B^+\tilde{\vv} + \wv$, where $\wv$ is chosen uniformly at random from the $2^{N-M}$ vectors in $\ker(B)$. This can be done efficiently by finding a basis for $\ker(B)$.
\end{enumerate}
At zero temperature ($\beta=\infty$), take $s=0$ deterministically.
Note every $\tilde{\vv}$ is sampled with probability proportional to $e^{-2\beta s}=e^{-2\beta |\vv-\tilde{\vv}|}$, and they have the same $2^{N-M}$ number of pre-images. Thus this protocol samples each $\xv$ with its Gibbs weight proportional to $e^{-\beta H((-1)^\xv)}$, as claimed.
\end{proof}

We now prove the main claim of this section, Theorem~\ref{thm:invertible}, by showing that the Gibbs measures of certain sparse random XORSAT problems exhibit shattering at $\beta=\infty$, but are nevertheless admit efficient sampling algorithms due to their right-invertibility.

\begin{proof}[Proof of Theorem~\ref{thm:invertible}]
Consider the choice of $\beta=\infty$. Then the Gibbs measure is simply the uniform distribution of all the solutions. It is known in e.g.~\cite{Ibrahimi_2015} that for $\alpha\in (\alpha_d(k), \alpha_s(k))$, where $\alpha_s(k)\le 1$, a random XORSAT instance from $\GG_{\ER}(N,k,\alpha k)$ is satisfiable with high probability but the set of ground states shatters into exponentially large (in $N$) number of clusters, each containing an exponential number of solutions.
(For example, $\alpha_s(3)=0.917935$~\cite{Dietzfelbinger2010cuckoo}, and $\alpha_d(3)=0.818469$ \cite[Theorem 1]{Ibrahimi_2015}.)

For any matrix $A \in \FF_2^{M\times N}$, let $Z_A\ge 1$ be the number of solutions $\xv$ to $A\xv=0$ (i.e., the cardinality of $\ker{A}$).
Note $Z_A=2^{N-\rank(A)}$, and $B$ is right-invertible iff $Z_{B^T}=1$.
Now, for any distribution $\GG$ over $H=(B,\vv)$ where $\vv$ is chosen uniformly from $\FF_2^M$ and independent of $B$, $H$ is satisfiable iff $\vv \in \range(B)$, so
\begin{equation} \label{eq:RI-upperbound}
    \PP(H \text{ is satisfiable}) = \EE\left[\frac{2^{\rank(B)}}{2^M}\right] =  2^{N-M} \EE[1/Z_{B}] \ge \PP(Z_{B^T} =1).
\end{equation}
 Furthermore, (see also \cite[Section 6.1]{dembo2009gibbs})
\begin{equation} \label{eq:RI-lowerbound}
    \PP(H \text{ is satisfiable})  \le \frac12 + \frac12\PP(Z_{B^T} = 1).
\end{equation}
Hence, if $H$ is satisfiable with high probability, then $H$ is right invertible with high probability. Hence the classical algorithm in \cref{lem:classical-sample-alg} can efficiently sample from $\mu_{\beta,H}$ with high probability.

Furthermore, if $B$ has full row rank, then $C^\perp = \ker B^T = \{0\}$, which only has only one codeword. In this case, the quantum decoding problem is trivial, and the DQI/Regev method in \cref{sec:DQI} prepares the target state for Gibbs sampling at every $\beta$.
\end{proof}

We remark, however, that under a basis transformation, the set of solutions for any right-invertible XORSAT no longer exhibits shattering. 
Specifically, if $B\in \FF_2^{M\times N}$ is right-invertible, then there exists a invertible matrix $U$ such that $BU = [I_M ~|~ 0]=: B'$ where $I_M$ is the $M\times M$ identity matrix. Hence, if we change the basis by considering $\yv=U^{-1} \xv$, then there is a bijection between solutions to $B\xv=\vv$ and solutions to $B'\yv=\vv$. The latter has a trivial solution space where $y_i = v_i$ for $1\le i \le M$ and $y_i$ is arbitrary for $M < i \le N$. Hence, the transformed solution set $\{U^{-1}\xv: B\xv=\vv\}$ does not exhibit shattering.
In a similar spirit, \cite{koehler2026overlap} show that changing the ``solution basis'' in the shortest path problem from paths to spanning trees removes the overlap gap limitation discovered in \cite{li2024some}.

For the above reason and more, we believe this example family of right-invertible, fully satisfiable XORSAT problems is somewhat unsatisfying, and thus have endeavored to consider the more general Max-XORSAT problem in the frustrated and glassy regime in the main body of this paper.

\section{Prange's Algorithm for Sampling}
\label{sec:prange}

Here we describe an algebraic sampling procedure, which we call \emph{Prange's algorithm} after \cite{prange1962use}, for the Erd\H{o}s-R\'enyi ensemble considered in \cref{def:erdos_renyi}.
In a sense, this is a natural generalization of the Gaussian elimination procedure in \cref{lem:classical-sample-alg} beyond the trivial right-invertible case.
Indeed, this is a canonical setting in which
algebraic structure allows one to bypass hardness barriers for \emph{stable} algorithms (in particular, Gaussian elimination is non-stable). 
Related examples based on lattice methods have also recently been connected to problems in statistics and learning theory \cite{song2021cryptographic,zadik2022lattice,diakonikolas2022non}.

Prange's algorithm is a generalization of the simple linear system example above to positive temperature.
Here, one retains each clause as a hard constraint with an appropriate independent probability, and discards the rest of the clauses. 
One then samples uniformly from the resulting \emph{random} linear system by Gaussian elimination. 
This approach intuitively represents a soft constraint as a mixture of a hard constraint and a non-constraint, but requires technical care to control the possibility of exponential multiplicative fluctuations in the number of solutions.
We show below that in a sizable parameter regime, this yields a Gibbs sampler with $o(1)$ total variation error (with high probability over the random instance), despite provable failure of stable sampling algorithms.

\paragraph{A mixture representation of the Gibbs weight.}
Write
\[
H(\zv) = -\sum_{m=1}^M f_m(\zv).
\]
Since $f_m(\zv)\in\{\pm1\}$, we may rewrite
\begin{equation}
\label{eq:cluster-expand-1}
e^{-\beta H(\zv)}=\prod_{m=1}^M e^{\beta f_m(\zv)}
\propto \prod_{m=1}^M e^{2\beta\,\Id\{f_m(\zv)=1\}}.
\end{equation}
Set $\lambda:=e^{2\beta}-1\ge 0$, so that for $f\in\{\pm1\}$,
\[
e^{2\beta\,\Id\{f=1\}}=1+\lambda\,\Id\{f=1\}.
\]
Plugging this into \eqref{eq:cluster-expand-1} and expanding gives the pointwise identity (for every $\zv\in\bSig_N$)
\begin{equation}
\label{eq:pointwise-highT}
e^{-\beta H(\zv)}
\propto
\prod_{m=1}^M \bigl(1+\lambda\,\Id\{f_m(\zv)=1\}\bigr)
=
\sum_{S\subseteq[M]} \lambda^{|S|}\,\Id\{f_s(\zv)=1\ \forall s\in S\}.
\end{equation}
For $S\subseteq[M]$, define
\[
\Omega_S:=\{\zv \in\bSig_N:\ f_s(\zv)=1\ \forall s\in S\},
\qquad
Z_S:=|\Omega_S|.
\]
Eq.~\eqref{eq:pointwise-highT} is most naturally interpreted as an identity of \emph{unnormalized measures on $\bSig_N$}.
Equivalently, introduce the joint unnormalized measure on $(S,\zv)$ by
\begin{equation}
\label{eq:joint-measure}
\Pi_\beta(S,\zv)\propto \lambda^{|S|}\,\Id\{\zv\in\Omega_S\}.
\end{equation}
Its $\zv$-marginal is proportional to $e^{-\beta H(\zv)}$ (and hence equals the Gibbs law after normalization), while $\zv\mid S$
is uniform on $\Omega_S$. Therefore exact Gibbs sampling can be performed by the following two steps:
sample $S$ from its marginal
\begin{equation}
\label{eq:tilde-nu-def}
\tilde\nu(S):=\Pi_\beta(S)=\frac{\lambda^{|S|}Z_S}{\sum_{T\subseteq[M]}\lambda^{|T|}Z_T},
\end{equation}
and then sample $\zv$ uniformly from $\Omega_S$. Summing \eqref{eq:pointwise-highT} over $\zv$ recovers the partition-function identity
\[
Z_\beta\propto\sum_S \lambda^{|S|}Z_S.
\]

\paragraph{Normalization and the proposal law.}
For fixed $S$ and random clauses, $\mathbb E[Z_S]=2^{N-|S|}$, so define
\[
\tilde Z_S:=2^{|S|-N}Z_S
\qquad\text{so that}\qquad
\mathbb E[\tilde Z_S]=1.
\]
Then
\begin{equation}
\label{eq:prange-expansion}
Z_{\beta}\propto \sum_{S\subseteq[M]}\Big(\frac{\lambda}{2}\Big)^{|S|}\tilde Z_S.
\end{equation}
Let $\nu$ be the product measure on subsets $S\subseteq[M]$ with
\[
\nu(S)\propto \Big(\frac{\lambda}{2}\Big)^{|S|}
\qquad\Longleftrightarrow\qquad
\nu(S)=\theta^{|S|}(1-\theta)^{M-|S|},
\qquad
\theta:=\frac{\lambda}{2+\lambda}=\frac{e^{2\beta}-1}{e^{2\beta}+1}=\tanh(\beta).
\]
Define the rescaled partition function
\[
\tilde Z_\beta:=\sum_{S\subseteq[M]}\nu(S)\tilde Z_S.
\]
Comparing \eqref{eq:tilde-nu-def} with the above definitions yields the Radon--Nikodym identity
\begin{equation}
\label{eq:RN-prange}
\tilde\nu(S)=\frac{\nu(S)\tilde Z_S}{\tilde Z_\beta},
\qquad\text{i.e.}\qquad
\frac{\tilde\nu(S)}{\nu(S)}=\frac{\tilde Z_S}{\tilde Z_\beta}.
\end{equation}
When $S\sim\nu$, the law of large numbers gives $|S|/M\to\theta$ in probability, so the retained system has effective average degree $k|S|/N\to\theta D$.
The following proposition applies the exact full-row-rank threshold from \cref{thm:coja-fullrank} to this independently thinned set of rows.
\begin{proposition}
\label{prop:restatement-sat-threshold}
Fix any $k\ge 3$. Let $H\sim\GG_\ER(N,k,D)$ and retain each clause independently with probability $\theta$. Let $K_*(k)$ be the quantity defined in \eqref{eq:admissible}. If
\begin{equation}
\label{eq:prange-rank-condition}
\theta D<K_*(k),
\end{equation}
then, with probability $1-o_N(1)$ over the instance and the retained set $S$,
\[
\tilde Z_S = 1\qquad\text{(equivalently, }Z_S=2^{N-|S|}\text{)}.
\]
\end{proposition}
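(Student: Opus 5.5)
The plan is to mirror the one-sided coupling argument from the proof of \cref{thm:gamma_USD}, with the erasure rate $\hat e$ replaced by the retention probability $\theta$. First observe that $\Omega_S$ is the solution set of the linear system $B_S\xv=\vv_S$ over $\FF_2$, where $B_S$ is the row-submatrix of $B$ indexed by $S$ and $\vv_S$ the corresponding entries of $\vv$ (with $z=(-1)^\xv$, $J=(-1)^\vv$). If $B_S$ has full row rank $|S|$, the system is consistent for every right-hand side, and its solution set is an affine translate of $\ker B_S$, of size $2^{N-|S|}$. Hence $\{B_S\text{ has full row rank}\}\subseteq\{\tilde Z_S=1\}$, and it suffices to show that $B_S$ has full row rank with probability $1-o_N(1)$ jointly over $(B,S)$. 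The signs $J$ play no role at all.

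Since the rows of $B$ are i.i.d.\ uniform $k$-subsets and $S$ is chosen independently of $B$, conditional on $|S|=m$ the matrix $B_S$ consists of $m$ i.i.d.\ uniform $k$-subsets of $V$. Fix $\eta>0$ with $(1+\eta)\theta D<K_*(k)$, which is possible by the strict inequality \eqref{eq:prange-rank-condition}. Set $m^+=\lfloor(1+\eta)\theta M\rfloor$, and draw $B^+\sim\GG_\ER(N,k,km^+/N)$ with exactly $m^+$ rows. On the event $\{|S|\le m^+\}$, couple $B_S$ to the first $|S|$ rows of $B^+$. Full row rank is preserved under deleting rows, so
\[
\PP[B_S\text{ not full row rank}]\le\PP[B^+\text{ not full row rank}]+\PP[|S|>m^+].
\]
The effective degree of $B^+$ is $km^+/N=(1+\eta)\theta D(1+o(1))<K_*(k)$, so the first term is $o_N(1)$ by \cref{thm:coja-fullrank}. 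The second term is $e^{-\Omega(N)}$ by a Chernoff bound, since $|S|\sim\mathrm{Bin}(M,\theta)$ has mean $\theta M=\Theta(N)$.

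Two degenerate cases need a line each. If $\theta=0$, then $S=\emptyset$ and $\tilde Z_S=1$ trivially. If $\theta M$ is bounded, take instead a deterministic $m^+=\lfloor\eta' N\rfloor$ for a small $\eta'$ with $k\eta'<K_*(k)$. The only mildly delicate step is the coupling, and it is identical to the one already carried out in the proof of \cref{thm:gamma_USD}, so I expect no real obstacle. The care taken in that proof to reconcile the random row count with the deterministic count assumed by \cref{thm:coja-fullrank} carries over verbatim.
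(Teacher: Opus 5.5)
Your proof is correct and matches the paper's argument: you reduce $\tilde Z_S=1$ to full row rank of $B_S$, bound $|S|$ by a deterministic row count via Chernoff, and couple $B_S$ to the first $|S|$ rows of a matrix that has full row rank by \cref{thm:coja-fullrank}. The only cosmetic difference is that the paper fixes $D'\in(\theta D,K_*(k))$ and takes $m'=\lfloor ND'/k\rfloor$ rows. Then the auxiliary matrix is exactly $\GG_\ER(N,k,D')$ with no $(1+o(1))$ slack, and your separate degenerate-case discussion becomes unnecessary.
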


\begin{proof}
Choose $D'$ with $\theta D<D'<K_*(k)$. A Chernoff bound gives $|S|\le m':=\lfloor ND'/k\rfloor$ with probability $1-o_N(1)$. Conditional on $|S|=s$, the selected rows are exactly $s$ i.i.d.\ uniform $k$-subsets of $[N]$, as required by \cref{def:erdos_renyi}; their signs remain independent uniform Rademacher variables. Couple these rows to the first $s$ rows of an $m'$-row matrix with the same row law. By \cref{thm:coja-fullrank}, the $m'$-row matrix has full row rank with probability $1-o_N(1)$ because $km'/N\le D'<K_*(k)$. On that event its first $s$ rows are also linearly independent. Every right-hand side on the selected rows is then consistent and has exactly $2^{N-s}$ solutions, so $\tilde Z_S=1$.
\end{proof}

\paragraph{Prange's algorithm.}

Motivated by the two-stage sampling procedure above in \eqref{eq:tilde-nu-def}, we approximate $\tilde\nu$ by the simple deterministic measure $\nu$, and then sample $z$ uniformly from $\Omega_S$. This gives the following Prange-like sampling algorithm:

\begin{enumerate}
\item Sample $S\sim \nu$ (i.e.\ retain each clause independently with probability $\theta=\tanh\beta$).
\item Run Gaussian elimination on the constraints in $S$.
If the system is inconsistent or not full row rank (equivalently, $\tilde Z_S\neq 1$), declare failure and return an arbitrary bit string.
\item Otherwise output a uniformly random $\zv\in\Omega_S$ (which Gaussian elimination can sample).
\end{enumerate}

\begin{proposition}
\label{prop:prange-works}
If $D \tanh(\beta) < K_*(k) $, then Prange's algorithm outputs a sample whose law is within $o(1)$ expected total variation
distance of the Gibbs measure $\mu_{\beta,H}$.
\end{proposition}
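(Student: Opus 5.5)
We compare the output law of Prange's algorithm with the exact two-stage sampler. By the mixture representation \eqref{eq:joint-measure}, the Gibbs measure $\mu_{\beta,H}$ is exactly the $\zv$-marginal of the following procedure: draw $S\sim\tilde\nu$, then draw $\zv$ uniformly from $\Omega_S$. Prange's algorithm instead draws $S\sim\nu$, then $\zv$ uniformly from $\Omega_S$ when $\tilde Z_S=1$, and outputs an arbitrary string otherwise. Call its output law $\nu^{\rm P}$. Both laws are mixtures over $S$ of the same kernel, except on the failure event $F:=\{S:\tilde Z_S\ne 1\}$. Hence, by the data-processing / coupling bound,
\[
d_\TV(\nu^{\rm P},\mu_{\beta,H})\le d_\TV(\nu,\tilde\nu)+\nu(F).
\]
It therefore suffices to show that $\EE_H[d_\TV(\nu,\tilde\nu)]$ and $\EE_H[\nu(F)]$ are both $o_N(1)$.

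The second term is immediate. By \cref{prop:restatement-sat-threshold} with $\theta=\tanh\beta$ and $\theta D<K_*(k)$, we have $\EE_H[\nu(F)]=\PP_{H,S}[\tilde Z_S\neq 1]=o_N(1)$.

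For the first term, use the Radon--Nikodym identity \eqref{eq:RN-prange}:
\[
d_\TV(\nu,\tilde\nu)=\tfrac12\,\EE_{S\sim\nu}\big|\tilde Z_S/\tilde Z_\beta-1\big|.
\]
Write $\tilde Z_S=1+X_S$, where $X_S=0$ off $F$ and $X_S\ge -1$ always. Then $\tilde Z_\beta=1+\EE_\nu X_S$, and
\[
d_\TV(\nu,\tilde\nu)\le \frac{\EE_\nu|X_S|}{\tilde Z_\beta}.
\]
The difficulty is that on $F$ the quantity $\tilde Z_S$ can be exponentially large (it equals $2^{|S|-\rank}$ on consistent systems, and $0$ on inconsistent ones), so $\nu(F)=o(1)$ alone does not control $\EE_\nu|X_S|$. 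This is the main obstacle. I would handle it through the random signs $J$.

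\textbf{Step 1 (averaging over $J$).} For fixed hypergraph and $S$, averaging over $J$ gives $\EE_J[\tilde Z_S]=1$ exactly. This holds because $\tilde Z_S=2^{|S|-\rank(B_S)}$ on the consistent event, which has $J$-probability $2^{\rank(B_S)-|S|}$, and $\tilde Z_S=0$ otherwise. Hence
\[
\EE_J|X_S|=2\,\PP_J[\tilde Z_S=0]\le 2\cdot\Id\{B_S\text{ not full row rank}\},
\]
using $\EE_J X_S=0$ and $X_S\ge -1$.

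\textbf{Step 2 (small numerator).} Combining Step 1 with the coupling in the proof of \cref{prop:restatement-sat-threshold},
\[
\EE_{H}\EE_\nu|X_S|\le 2\,\PP[B_S\text{ not full row rank}]=o_N(1).
\]
By Markov's inequality, $\EE_\nu|X_S|\to0$ in probability over $H$.

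\textbf{Step 3 (denominator bounded below).} Since $\tilde Z_\beta\ge 1-\EE_\nu|X_S|$, Step 2 gives $\tilde Z_\beta\ge 1/2$ with probability $1-o_N(1)$.

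\textbf{Conclusion.} Steps 2 and 3 give $d_\TV(\nu,\tilde\nu)=o_N(1)$ in probability. Since total variation distance is bounded by $1$, convergence in probability upgrades to convergence in expectation. Adding the bound on $\EE_H[\nu(F)]$ proves the claim.
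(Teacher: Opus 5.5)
Your proof is correct and follows essentially the same route as the paper. It uses the same decomposition into $d_{\TV}(\nu,\tilde\nu)$ plus the failure probability, and the same mean-one/nonnegativity identity $\EE|\tilde Z_S-1|=2\,\EE(1-\tilde Z_S)_+$ to obtain $L^1$ control even though $\tilde Z_S$ can be exponentially large. The only differences are cosmetic: you average over the signs $J$ with the hypergraph fixed, and you lower-bound the denominator via $\tilde Z_\beta\ge 1-\EE_\nu|X_S|$, where the paper uses Jensen and bounded convergence.
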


\begin{proof}
By \eqref{eq:joint-measure}--\eqref{eq:tilde-nu-def}, exact Gibbs sampling is $\tilde\nu U$, where
$U(S,\cdot)$ is the uniform law on $\Omega_S$. Prange's algorithm (on success) outputs $\nu U$.
Thus it suffices to show $\mathbb E\,d_{\mathrm{TV}}(\nu,\tilde\nu)=o(1)$ and that the failure probability is $o(1)$.

Let $S\sim\nu$, where each clause is sampled independently with probability $\theta=\tanh\beta$.
Since $\theta D = D \tanh \beta < K_*(k)$,
Proposition~\ref{prop:restatement-sat-threshold}
implies $\mathbb P(\tilde Z_S=1)=1-o(1)$ under the joint law of $(S,\{f_m\})$.
Because $\mathbb E[\tilde Z_S\mid S]=1$ and $\tilde Z_S\ge 0$, we have the identity
$\mathbb E|\tilde Z_S-1|=2\,\mathbb E[(1-\tilde Z_S)_+]$, where $(x)_+ = \max(0,x)$.  Since $(1-\tilde Z_S)_+\in[0,1]$ and
$(1-\tilde Z_S)_+\to 0$ in probability, it follows that
\begin{equation}
\label{eq:L1tildeZS}
\mathbb E|\tilde Z_S-1|=o(1),
\end{equation}
where the expectation is over the random instance and the draw $S\sim\nu$.

Next, by definition $\tilde Z_\beta=\mathbb E_{S\sim\nu}[\tilde Z_S\mid \{f_m\}]$, so Jensen and \eqref{eq:L1tildeZS} give
\[
\mathbb E|\tilde Z_\beta-1|
=
\mathbb E\Big|\mathbb E_{S\sim\nu}\big[\tilde Z_S-1\mid \{f_m\}\big]\Big|
\le
\mathbb E_{S\sim\nu}\,\mathbb E|\tilde Z_S-1|
=
o(1).
\]
In particular, $\tilde Z_\beta\stackrel{p}{\to}1$, and therefore $\tilde Z_S/\tilde Z_\beta\stackrel{p}{\to}1$ for $S\sim\nu$.

Using \eqref{eq:RN-prange} we can write
\[
d_{\mathrm{TV}}(\nu,\tilde\nu)
=
\mathbb E_{S\sim\nu}\Big(1-\frac{\tilde\nu(S)}{\nu(S)}\Big)_+
=
\mathbb E_{S\sim\nu}\Big(1-\frac{\tilde Z_S}{\tilde Z_\beta}\Big)_+.
\]
The integrand lies in $[0,1]$ and converges to $0$ in probability, so taking expectation over the random instance yields
$\mathbb E\,d_{\mathrm{TV}}(\nu,\tilde\nu)=o(1)$.

Finally, total variation contracts under post-processing, hence
$d_{\mathrm{TV}}(\nu U,\tilde\nu U)\le d_{\mathrm{TV}}(\nu,\tilde\nu)$.
Moreover, the failure event is exactly $\{\tilde Z_S\neq 1\}$, which has probability $o(1)$ by Proposition~\ref{prop:restatement-sat-threshold}.
Combining these bounds completes the proof.
\end{proof}

\section{Derivation of $\beta_{\mathrm{Holevo}}$ and $\beta_{\mathrm{Shannon}}$}
\label{app:holevo_shannon}

\subsection{Holevo Bound}

In DQI, we decode from the classical-quantum channel
\begin{align}
    \label{eq:cq_channel}
    0 &\to \ket{\psi_{0,\gamma}} = \sqrt{1 - \gamma} \ket{0} + \sqrt{\gamma} \ket{1} \\
    1 &\to \ket{\psi_{1,\gamma}} = \sqrt{\gamma} \ket{0} + \sqrt{1-\gamma} \ket{1}
\end{align}
The Holevo bound says that, if 0 and 1 are equiprobable then the rate $R$ of the channel is limited by
\begin{equation}
    \label{eq:Holevo}
    R \leq \chi(\gamma)
\end{equation}
where
\begin{equation}
    \chi(\gamma) = S \left( \frac{1}{2} \ket{\psi_{0,\gamma}}\bra{\psi_{0,\gamma}} + \frac{1}{2} \ket{\psi_{1,\gamma}}\bra{\psi_{1,\gamma}} \right)
\end{equation}
and $S$ denotes the von Neumann entropy (in bits). By direct calculation, one finds
\begin{equation}
    \label{eq:chi_val}
    \chi(\gamma) = H_2 \left( \frac{1}{2} - \sqrt{\gamma(1-\gamma)} \right),
\end{equation}
where $H_2(p) = - p \log_2 (p) - (1-p) \log_2 (1-p)$ is the binary entropy. Note $\chi(\gamma)= 1-2\gamma/\ln2 + O(\gamma^2)$, so we want $R$ to be small if we want to allow $\gamma$ large.
In the context of DQI, $R$ is the rate of the code $C^\perp$, which is
\begin{equation}
    R = 1 - \frac{\rank(B)}{M}.
\end{equation}
The best case scenario is when $B$ is full rank, i.e., $\rank(B)=N$. For Max-$k$-XORSAT on hypergraphs with average degree $D$, we have $N/M = k/D$. Thus, we will take
\begin{equation}
    \label{eq:R_asymp}
    R = 1 - \frac{k}{D}.
\end{equation}
Substituting \cref{eq:R_asymp} and \cref{eq:chi_val} into \cref{eq:Holevo} yields
\begin{equation}
    1 - \frac{k}{D} \leq  H_2 \left( \frac{1}{2} - \sqrt{\gamma(1-\gamma)} \right).
\end{equation}
Let $\gamma_{\mathrm{Holevo}}$ be the value of $\gamma$ that saturates this inequality. Treating the above inequality as an equality and solving yields
\begin{equation}
    \label{eq:gamma_Holevo}
    \gamma_{\mathrm{Holevo}} = \frac{1}{2} - \sqrt{ H_2^{-1}\left( 1 - \frac{k}{D} \right)\left( 1 - H_2^{-1}\left( 1 - \frac{k}{D} \right)\right)}.
\end{equation}
Note that $H_2^{-1}$ has two branches. Here and throughout we take the one lying between 0 and $\frac12$. From \cref{thm:gamma_vs_beta} we have
\begin{equation}
    \label{eq:betamax_repeat}
    \beta_{\max} = \sech^{-1} \left( 1 - 2 \gamma \right).
\end{equation}
Substituting \cref{eq:gamma_Holevo} into \cref{eq:betamax_repeat} shows that the maximum achievable $\beta$ by DQI given a decoder saturating the Holevo bound is
\begin{equation} \label{eq:beta-Holevo-app}
    \beta_{\mathrm{Holevo}} = \sech^{-1} \left[ 2 \sqrt{ H_2^{-1} \left( 1 - \frac{k}{D} \right)\left( 1 - H_2^{-1}\left( 1 - \frac{k}{D} \right)\right)} \right].
\end{equation}

\subsection{Shannon Bound}

By measuring in the computational basis, the channel \cref{eq:cq_channel} induces a classical binary symmetric channel where the probability of a bit flip error is $\gamma$. For such a channel, the Shannon bound states
\begin{equation}
    \label{eq:shannon}
    R \leq 1 - H_2(\gamma).
\end{equation}
By \cref{eq:R_asymp} we thus have
\begin{equation}
    1 - \frac{k}{D} \leq 1 - H_2(\gamma).
\end{equation}
We next treat this as an equality and solve for $\gamma_{\mathrm{Shannon}}$. The result is
\begin{equation}
    \gamma_{\mathrm{Shannon}} = H_2^{-1} \left( \frac{k}{D} \right).
\end{equation}
By \cref{eq:betamax_repeat} we thus have
\begin{equation}
    \beta_{\mathrm{Shannon}} = \sech^{-1} \left[ 1 - 2 H_2^{-1} \left( \frac{k}{D} \right) \right].
\end{equation}

\section{Phase Transitions and Algorithmic Thresholds in Spin Glasses}
\label{app:spin-glass}

\subsection{Dense $k$-spin Model: the $D\to\infty$ Limit}

Here we review some known results on phase transitions in the dense $k$-spin model, which is connected to the Max-$k$-XORSAT problems on sparse degree-$D$  graphs (such as the Erd\H{o}s-R\'enyi ensemble in \cref{def:erdos_renyi}) in the $D\to\infty$ limit~\cite{sen2018optimization}. For any given $k$, the dense $k$-spin Hamiltonian is
\begin{equation}
    H^{\rm dense}(\zv) = \frac{1}{\sqrt{N^{k-1}}}\sum_{i_1,\ldots,i_k=1}^N J_{i_1,\ldots,i_k} z_{i_1} \cdots z_{i_k}
\end{equation}
where $J_{i_1,\ldots,i_k} \sim \cN(0,1)$ are drawn from the standard Gaussian distribution.

There are two main relevant spin glass transitions in this model: (1) the dynamical transition at $\beta_\dyn(k)$, which is believed to mark the onset of shattering, ergodicity-breaking, and slowdown of Glauber dynamics; (2) the static transition at $\beta_\RS(k)$, which marks the onset of replica symmetry breaking.

\begin{conjecture}[\cite{alaoui2023sampling}]
For the dense $k$-spin model, no stable algorithms can efficiently Gibbs sample when $\beta > \beta_{\dyn}(k)$, where
\begin{equation} \label{eq:beta-dyn}
\beta_{\dyn}(k) = \inf\Big\{\beta > 0: \exists q\in (0,1] \text{ s.t. } q = \frac{\EE_G[\sinh(\lambda G) \tanh(\lambda G)]}{e^{\lambda^2/2}}, ~\lambda(q) = \beta \sqrt{k q^{k-1}}\Big\},
\end{equation}
where $G\sim \cN(0,1)$.
Asymptotically as $k\to\infty$, we have
\begin{equation}
\beta_\dyn(k) = \sqrt{\frac{2\ln k}{k}} \cdot [1+o_{k}(1)].
\end{equation}
\end{conjecture}

The static threshold, corresponding to the replica symmetry breaking transition, can be defined more directly through the free energy.  Let $\beta_\RS(k)$ denote this threshold for the dense Ising $k$-spin Hamiltonian used above, then it is the maximum inverse temperature where the free energy equals the ``annealed'' value of $\ln 2 + \beta^2/2$:
\begin{equation}
\label{eq:dense-beta-RS}
 \beta_\RS(k)
 :=\sup\left\{\beta :\
 \limsup_{N\to\infty}\frac1N\E\ln Z^{\rm dense}_{\beta, H}
 =\ln2+\frac{\beta^2}{2}\right\}.
\end{equation}
Talagrand's theorem \cite{talagrand2000rigorous}, in the normalization of \eqref{eq:dense-beta-RS}, gives $(1-2^{-k})\sqrt{2\ln2} \leq \beta_\RS(k) \leq \sqrt{2\ln2}$.
In particular, this shows
$\beta_\RS(k)=\sqrt{2\ln2}\,[1+o_{k}(1)]$. For finite $k$, $\beta_\RS(k)$ is determined by optimizing the 1RSB ansatz for the free energy~\cite{Panchenko2014parisi} finding the smallest $\beta$ where a nontrivial minimizer arises.

Besides the thermodynamical phase transitions, there exists an algorithmic threshold $E_\OGP$ associated with the overlap gap property~\cite{gamarnik2021overlap}, such that many classes of optimization algorithms (including all stable algorithms) fail to produce a configuration (bit string) with better energy $E<E_\OGP$.
For the dense $k$-spin model at any even $k$, this threshold $E_\OGP$ is proven to be tight in the sense that it is achievable by a family of stable algorithms based on approximate message passing~\cite{AMS21optimization,huang2022tight}.
In this case, the value of $E_\OGP$ can be computed numerically by an extended Parisi formula~\cite{AMS21optimization}, which is a generalization of the celebrated Parisi formula used to calculate the ground state energy~\cite{parisi1980sequence,talagrand2006parisi}.
To compare the optimization energy threshold $E_\OGP$ with the temperature thresholds on an equal footing, we use a standard conversion that maps energy to an effective (inverse) temperature:
\begin{equation}
\label{eq:beta-OGP}
\beta_\OGP = E^{-1}(E_\OGP), 
\qquad
\text{where}\quad 
E(\beta) := -\frac{\partial F}{\partial \beta}, \qquad F(\beta) = \lim_{N\to\infty} \frac{1}{N} \EE \ln Z^{\rm dense}_{\beta,H}.
\end{equation}

A numerical evaluation of $\beta_{\dyn}(k)$, $\beta_\RS(k)$, $\beta_\OGP(k)$  for various $k\ge 3$ is shown in \cref{tab:dense-betas}.
\begin{table}[h!]
\centering
\begin{tabular}{r@{\hspace{1.2em}}cccccc@{\hspace{1.2em}}c}
\hline
\hline
$k$ & $\beta_{\dyn}$ & $\beta_\RS $ & $E_{\rm ground}$ & $E_\OGP$ & $\beta_{\rm OGP}$  & $\beta_\OGP/\beta_\dyn$\\
\hline
 3 & 1.0374 & 1.0855 & $-$1.1500 & $-$1.1313 & 1.6163 & 1.558\\
 4 & 1.0423 & 1.1463 & $-$1.1673 & $-$1.1072 & 1.1072 & 1.062\\
 5 & 1.0099 & 1.1650 & $-$1.1732 & $-$1.0678 & 1.0678 & 1.057\\
 6 & 0.9720 & 1.1721 & $-$1.1756 & $-$1.0271 & 1.0271 & 1.057\\
 8 & 0.9017 & 1.1763 & $-$1.1770 & $-$0.9532 & 0.9532 & 1.057\\
10 & 0.8435 & 1.1772 & $-$1.1773 & $-$0.8920 & 0.8920 & 1.058\\
12 & 0.7956 & 1.1774 & $-$1.1774 & $-$0.8412 & 0.8412 & 1.057\\
%14 & 0.7555 & 1.1774 & $-$1.1774 & $-$0.7986 & 0.7986 & 1.057\\
16 & 0.7214 & 1.1774 & $-$1.1774 & $-$0.7621 & 0.7621 & 1.056\\
20 & 0.6662 & 1.1774 & $-$1.1774 & $-$0.7030 & 0.7030 & 1.055\\
\hline\hline
\end{tabular}
\caption{\label{tab:dense-betas}
The various critical temperatures and energies for the dense $k$-spin models. $\beta_\dyn$ and $\beta_\RS$ mark the dynamical and static phase transitions. $E_{\rm ground}$ is the ground energy density obtained from the Parisi formula, while $E_\OGP$ is the algorithmic threshold energy for stable algorithms. We also convert $E_\OGP$ to an effective temperature $\beta_{\OGP}$, and compare its value with $\beta_\dyn$. For $k\ge 4$ the listed $E_\OGP$ lies in the replica-symmetric regime, where $E(\beta)=-\beta$.
}
\end{table}

\paragraph{Connection to finite-connectivity models.}
Many results (e.g., \cite{sen2018optimization,chen2019suboptimality,chou2022limitations}) have connected the free energy of dense $k$-spin models to random $k$-XORSAT on degree-$D$ hypergraphs in the $D\to\infty$ limit. Following a rescaling to match $\EE_J[(H^{\rm dense}(\zv))^2]=N$, this suggests the following predictions:
\begin{align}
\label{eq:diluted-to-gaussian-beta-dyn}
    \lim_{D\to\infty}\beta_\dyn(k, D) \sqrt{\frac{D}{k}} = \beta_\dyn(k), \\
\label{eq:diluted-to-gaussian-beta-RS}
    \lim_{D\to\infty}\beta_{\RS}(k, D) \sqrt{\frac{D}{k}} = \beta_\RS(k).
\end{align}

\subsection{The Dynamical Threshold $\beta_{\dyn}(k,D)$  at Finite Connectivity}
\label{app:beta-dyn}
We now consider the finite-connectivity ensemble of Max-$k$-XORSAT on hypergraphs with average degree $D$, as in \cref{def:erdos_renyi}. 
In spin glass theory, a precise prediction of $\beta_\dyn(k,D)$ at any finite $(k, D)$  is determined by when there exists a nontrivial solution to the so-called ``1RSB equation''~\cite{Montanari2003}. According to \cite{mezard2006reconstruction}, the 1RSB equation can be mapped to a reconstruction problem on the infinite tree, which yields a more transparent method for determining $\beta_\dyn$. We will adopt this as our definition of $\beta_\dyn(k, D)$ below.

\paragraph{Reconstruction-threshold definition for $\beta_\dyn(k,D)$.}

Let $\tilde m^{(0)} = 1$ deterministically, and for $\ell \ge 0$ let
\begin{equation}\label{eq:xor-de}
    \tilde m^{(\ell+1)} \;\overset{d}{=}\; \tanh\Bigg( \sum_{a=1}^{d} \tanh^{-1}\Big( \theta\, S_a \prod_{j=1}^{k-1} \tilde m^{(\ell)}_{a,j} \Big) \Bigg),
\end{equation}
where $d\sim\Poisson(D)$ for Erd\H{o}s-R\'enyi hypergraphs (or fixed $d=D-1$ for $D$-regular hypergraphs), the $S_a$ are i.i.d. $\pm1$ signs with mean $\theta=\tanh\beta$, and the $\tilde m^{(\ell)}_{a,j}$ are i.i.d.\ copies of $\tilde m^{(\ell)}$.
Then let
\begin{equation}
    \rho_\ell := \EE\big[\tilde m^{(\ell)}\big]
    \qquad \text{and} \qquad
    \rho_\infty(\beta;k,D) := \lim_{\ell\to\infty} \rho_\ell.
\end{equation}
Following the reconstruction interpretation of the 1RSB equation~\cite{mezard2006reconstruction},  the dynamical phase transition temperature can be defined via
\begin{equation}\label{eq:xor-beta-dyn-def}
    \beta_\dyn(k,D) := \inf\{\beta>0:\ \rho_\infty(\beta;k,D)>0\}.
\end{equation}

\paragraph{Planted density evolution algorithm for $\beta_\dyn(k,D)$.} In practice, we can determine $\beta_\dyn(k, D)$ via the following planted density evolution algorithm:

\begin{enumerate}
    \item Initialize a pool of $\tilde{M}$ samples at the planted value, i.e. set $\{\tilde m^{(0)}_\mu = 1: \mu = 1,2,\ldots,\tilde{M}\}$.
    \item Replace the pool by $\tilde{M}$ fresh samples of \eqref{eq:xor-de}, drawing offspring $d\sim\Poisson(D)$ for Erd\H{o}s-R\'enyi (or fixed $d=D-1$ for regular ensembles), clause noises $S_a$, and children messages uniformly from the current pool.
    \item Track the order parameter $\rho_\ell = \EE[\tilde m^{(\ell)}]$. Declare the trivial fixed point reached when the order parameter drops below a small floor (e.g., $10^{-4}$), and a nontrivial fixed point when it stabilizes above $5\times10^{-3}$ with negligible drift. We then locate the threshold by bisection.
\end{enumerate}

We numerically evaluate $\beta_{\dyn}(k,D)$ using the above algorithm. The results are shown in \cref{tab:xorsat-grid} for the Erd\H{o}s-R\'enyi ensemble, and in \cref{tab:xorsat-grid-reg} for the random regular graph ensemble.

\begin{table}[ht]
\centering
\begin{tabular}{c|cccccccc}
$D_d(k)$ & 2.455 & 3.089 & 3.509 & 3.823 & 4.072 & 4.280 & 4.457 & 4.612\\
\hline
\hline
\diagbox{$D$}{$k$}
& 3 & 4 & 5 & 6 & 7 & 8 & 9 & 10\\
\hline
3 & 1.754(4) & --- & --- & --- & --- & --- & --- & ---\\
4 & 1.225(1) & 1.656(4) & 2.093(12) & 2.826(12) & --- & --- & --- & ---\\
5 & 1.006(1) & 1.275(1) & 1.489(1) & 1.681(2) & 1.850(10) & 2.034(18) & 2.244(10) & 2.479(14)\\
6 & 0.875(1) & 1.082(1) & 1.233(3) & 1.349(3) & 1.456(4) & 1.550(5) & 1.657(4) & 1.700(25)\\
7 & 0.788(2) & 0.967(9) & 1.080(3) & 1.170(4) & 1.241(5) & 1.316(2) & 1.373(5) & 1.411(11)\\
8 & 0.712(4) & 0.877(6) & 0.977(7) & 1.043(4) & 1.113(2) & 1.164(2) & 1.195(10) & 1.253(4)\\
9 & 0.667(2) & 0.802(1) & 0.892(13) & 0.954(2) & 1.009(3) & 1.050(5) & 1.093(4) & 1.128(5)\\
10 & 0.623(2) & 0.756(10) & 0.838(6) & 0.882(3) & 0.930(7) & 0.971(4) & 1.008(3) & 1.037(5)\\
11 & 0.588(3) & 0.706(7) & 0.775(2) & 0.830(2) & 0.871(3) & 0.909(3) & 0.938(4) & 0.964(4)\\
12 & 0.557(3) & 0.665(3) & 0.731(3) & 0.781(4) & 0.822(2) & 0.854(3) & 0.899(8) & 0.907(3)\\
13 & 0.530(6) & 0.646(11) & 0.698(2) & 0.735(6) & 0.783(2) & 0.810(2) & 0.846(7) & 0.841(18)\\
14 & 0.512(3) & 0.605(3) & 0.668(3) & 0.719(6) & 0.746(8) & 0.774(3) & 0.794(4) & 0.809(7)\\
15 & 0.492(2) & 0.582(3) & 0.638(2) & 0.677(3) & 0.703(11) & 0.737(3) & 0.759(3) & 0.782(3)\\
16 & 0.483(4) & 0.562(4) & 0.615(4) & 0.654(2) & 0.684(2) & 0.709(2) & 0.731(2) & 0.742(8)\\
17 & 0.461(2) & 0.540(5) & 0.595(4) & 0.631(6) & 0.659(4) & 0.683(2) & 0.703(2) & 0.690(28)\\
18 & 0.452(7) & 0.524(2) & 0.576(2) & 0.608(2) & 0.639(3) & 0.656(4) & 0.677(3) & 0.696(3)\\
19 & 0.432(2) & 0.501(8) & 0.564(9) & 0.593(3) & 0.612(4) & 0.636(3) & 0.682(21) & 0.678(3)\\
20 & 0.430(6) & 0.495(2) & 0.538(4) & 0.570(4) & 0.600(2) & 0.613(6) & 0.636(3) & 0.650(4)
\end{tabular}
\caption{Estimated $\beta_\dyn(k,D)$ for Max-$k$-XORSAT on the Erd\H{o}s--R\'enyi-like ensemble (where degree distribution is Poisson$(D)$), using pool size $\tilde{M}=2\times 10^5$. 
Digits in parentheses are $1\sigma$ confidence interval in units of the last quoted digit.
``---'' marks $D<D_d(k)=\inf\{ D : \exists\, q\in(0,1],\ q = 1-e^{-Dq^{k-1}} \}$, where no dynamical transition exists ($\beta_\dyn=\infty$). The first header row lists the values of the critical $D_d(k)$. The results at $k=3$ are consistent with \cite{Franz_2001}.
}
\label{tab:xorsat-grid}
\end{table}

\begin{table}[ht]
\centering
\begin{tabular}{c|cccccccc}
\diagbox{$D$}{$k$}& 3 & 4 & 5 & 6 & 7 & 8 & 9 & 10\\
\hline
3 & 1.961(4) & 2.429(5) & 2.742(13) & 2.987(9) & 3.171(18) & 3.328(30) & 3.452(47) & 3.605(28)\\
4 & 1.327(2) & 1.597(1) & 1.771(3) & 1.909(5) & 2.003(7) & 2.094(5) & 2.159(9) & 2.219(14)\\
5 & 1.080(6) & 1.272(4) & 1.407(16) & 1.482(14) & 1.574(6) & 1.639(3) & 1.698(6) & 1.725(19)\\
6 & 0.920(1) & 1.088(1) & 1.193(2) & 1.270(9) & 1.330(3) & 1.381(3) & 1.423(4) & 1.459(5)\\
7 & 0.821(3) & 0.967(1) & 1.056(2) & 1.118(3) & 1.162(11) & 1.215(4) & 1.257(5) & 1.275(10)\\
8 & 0.751(4) & 0.877(2) & 0.958(2) & 1.018(1) & 1.064(2) & 1.101(3) & 1.130(12) & 1.171(8)\\
9 & 0.688(2) & 0.813(4) & 0.880(2) & 0.938(2) & 0.973(4) & 1.006(4) & 1.036(4) & 1.061(3)\\
10 & 0.641(2) & 0.756(3) & 0.816(3) & 0.870(3) & 0.905(4) & 0.941(3) & 0.967(2) & 0.982(6)\\
11 & 0.605(2) & 0.710(3) & 0.772(2) & 0.817(2) & 0.853(2) & 0.887(2) & 0.886(14) & 0.906(18)\\
12 & 0.573(1) & 0.672(1) & 0.730(2) & 0.770(3) & 0.798(7) & 0.833(4) & 0.856(3) & 0.875(3)\\
13 & 0.543(5) & 0.636(3) & 0.684(6) & 0.738(3) & 0.769(5) & 0.795(2) & 0.814(2) & 0.831(3)\\
14 & 0.523(2) & 0.611(2) & 0.663(3) & 0.704(2) & 0.729(4) & 0.751(6) & 0.779(2) & 0.794(3)\\
15 & 0.509(5) & 0.587(3) & 0.633(4) & 0.673(3) & 0.712(12) & 0.727(2) & 0.744(5) & 0.765(3)\\
16 & 0.481(3) & 0.562(3) & 0.612(3) & 0.648(3) & 0.677(3) & 0.695(4) & 0.714(4) & 0.731(4)\\
17 & 0.476(7) & 0.540(3) & 0.596(2) & 0.627(2) & 0.652(2) & 0.672(3) & 0.690(4) & 0.709(3)\\
18 & 0.448(5) & 0.527(4) & 0.568(5) & 0.596(5) & 0.632(6) & 0.646(6) & 0.665(3) & 0.684(3)\\
19 & 0.439(3) & 0.509(2) & 0.557(2) & 0.582(4) & 0.617(9) & 0.626(4) & 0.634(15) & 0.639(13)\\
20 & 0.423(3) & 0.505(5) & 0.533(5) & 0.567(4) & 0.595(2) & 0.611(3) & 0.628(3) & 0.639(5)
\end{tabular}
\caption{Estimated $\beta_\dyn(k,D)$ for Max-$k$-XORSAT on the random $D$-regular ensemble (every variable in exactly $D$ clauses). Same method as \cref{tab:xorsat-grid}. We remark that $\beta_\dyn(k,2)=\infty$.}
\label{tab:xorsat-grid-reg}
\end{table}

\clearpage

\paragraph{High-connectivity limit reduces to the dense $k$-spin threshold.}
Consider a fixed point of \eqref{eq:xor-de}.  Let $m$ have the fixed-point message law and set $q=\E[m]$.  Since $m$ is a posterior magnetization under the planted law, the Nishimori identity gives
\begin{equation}
\label{eq:nishimori-message-identity}
    \E[m^2]=\E[m]=q.
\end{equation}
To see this directly, suppose $X\in\{\pm1\}$ is the planted spin and let $m_0=\E[X\mid\mathcal F]$, where $\mathcal F$ is the information received from the descendants. Then $\E[Xm_0]=\E[m_0^2]$ by conditioning on $\mathcal F$. Now $m=Xm_0$, so $\E[m]=\E[m^2]$.

For one clause incident to the root, let
\[
    P=\prod_{j=1}^{k-1}m_j,
    \qquad
    u=\tanh^{-1}(\theta S P),
    \qquad \theta=\tanh\beta,
\]
where the $m_j$ are independent copies of $m$, and $S\in\{\pm1\}$ is independent with $\E S=\theta$.
As $\theta\to0$, the Taylor expansion of $\tanh^{-1}$ and \eqref{eq:nishimori-message-identity} give
\begin{align}
\E u
&=\theta^2 q^{k-1}+O(\theta^4),
\nonumber
\\
\E u^2
&=\theta^2 q^{k-1}+O(\theta^4).
\label{eq:cavity-field}
\end{align}
Indeed, $\tanh^{-1}(\theta SP)=S\tanh^{-1}(\theta P)$, so the extra factor $\E S=\theta$ makes the error in the first line $O(\theta^4)$ rather than $O(\theta^3)$.  Also, independence and \eqref{eq:nishimori-message-identity} give $\E P=\E P^2=q^{k-1}$.

The quantity inside the outer $\tanh$ in \eqref{eq:xor-de}, sometimes referred to as the cavity field, is a sum of $d\sim\Poisson(D)$ independent copies of $u$.  Define
\begin{equation}
\label{eq:lambda-cavity-definition}
    \lambda^2=D\theta^2q^{k-1}.
\end{equation}
Eq.~\eqref{eq:cavity-field} shows that this field has mean and variance $\lambda^2+O(D\theta^4)$.  Its higher standardized cumulants vanish as $D\to\infty$.  Consequently, when $D\theta^4=o(1)$, the density evolution replaces this field, to leading order, by $\lambda^2+\lambda G$ with $G\sim\cN(0,1)$.  The fixed-point equation becomes
\begin{equation}
\label{eq:gaussian-cavity-fixed-point}
    q=\EE_G\tanh(\lambda^2+\lambda G)+o(1).
\end{equation}
Gaussian change of measure gives
\begin{align}
\EE_G\tanh(\lambda^2+\lambda G)
=e^{-\lambda^2/2}\EE_G \left[e^{\lambda G}\tanh(\lambda G)\right] =\frac{\EE_G[\sinh(\lambda G)\tanh(\lambda G)]}{e^{\lambda^2/2}},
\label{eq:gaussian-change-of-measure-dyn}
\end{align}
because the contribution $\E[\cosh(\lambda G)\tanh(\lambda G)]$ vanishes by symmetry.  If
\[
    \widehat\beta=\theta\sqrt{\frac{D}{k}},
\]
then \eqref{eq:lambda-cavity-definition} can be read as $\lambda=\widehat\beta\sqrt{kq^{k-1}}$.  Hence \eqref{eq:gaussian-cavity-fixed-point}--\eqref{eq:gaussian-change-of-measure-dyn} reproduce \eqref{eq:beta-dyn}, and $\theta=\beta[1+o(1)]$ in the regime considered here.  At the level of a fixed-point equation, this gives the relation
\begin{equation}
\label{eq:diluted-gaussian-reduction-summary}
    \beta_\dyn(k,D)
    =\sqrt{\frac{k}{D}}\,\beta_\dyn(k)\,[1+o(1)]
\end{equation}
within the reconstruction calculation.  If the appearance of the nonzero fixed point is stable under this limit, then for fixed $k$ it yields \eqref{eq:diluted-to-gaussian-beta-dyn}.

The remaining input is the known asymptotic
\[
    \beta_\dyn(k)
    =\sqrt{\frac{2\ln k}{k}}\,[1+o_{k}(1)]
\]
for the dense $k$-spin model defined in \eqref{eq:beta-dyn}; see \cite{Ferrari2012two, alaoui2023sampling}.  Substitution into \eqref{eq:diluted-gaussian-reduction-summary} gives \begin{equation}
\label{eq:beta-dyn-closed-form-prediction}
    \beta_\dyn(k,D)
    =\sqrt{\frac{2\ln k}{D}}\,[1+o(1)].
\end{equation}

At the predicted threshold, $\theta^2\simeq  \beta^2 \asymp(\ln k)/D$, so the Taylor error satisfies
\[
    D\theta^4\asymp\frac{(\ln k)^2}{D}
    \leq\frac{(\ln k)^2}{k}\longrightarrow0
    \qquad\text{as }k\to\infty,\quad D\geq k.
\]
Thus this error vanishes throughout $D\geq k$, including the regime $D=\alpha k$ with fixed $\alpha\geq1$.
The same Taylor expansion controls the logarithmic moment-generating function of the cavity field with error $O(D\theta^4)$ for bounded arguments, suggesting that the Gaussian approximation has a correction that vanishes as $k\to\infty$. Heuristically, assuming the nonzero fixed point is stable under perturbation by this correction, then the corresponding relative correction to $\beta_\dyn(k,D)$ also vanishes. Combining this with the dense large-$k$ asymptotic and $\theta=\beta[1+O((\ln k)/D)]$ gives the predicted $o_k(1)$ error in \eqref{eq:beta-dyn-asymp} uniformly for $D\ge k$.

\subsection{The Static Threshold $\beta_{\RS}(k, D)$  at Finite Connectivity}
\label{sec:beta-RS}

We next characterize the static threshold for the finite-connectivity ensemble in \cref{def:erdos_renyi}.

We begin by providing a variational description of $\beta_\RS(k,D)$, which is due to \cite{coja2018charting}.
Let $\theta=\tanh\beta$ and $\Lambda(x)=x\ln x$.  Let $\pi$ be a candidate distribution of a random variable $X\in[-1,1]$ with $\E X=0$.  Let $d\sim\Poisson(D)$, and let all variables $J_0,J_1,\ldots$, $X_1,\ldots$, and $X_{a,j}$ below be mutually independent, with the $J$'s being Rademacher variables (i.e., uniformly $\pm 1$) and the $X$'s distributed according to $\pi$.  Define
\begin{align}
\label{eq:sparse-static-functional}
\mathcal B_{k,D,\beta}(\pi)
:={}&\E\Bigg[\frac12\Lambda\Bigg(
 \sum_{\sigma\in\{\pm1\}}
 \prod_{a=1}^{d}
 \left(1+\theta J_a\sigma\prod_{j=1}^{k-1}X_{a,j}\right)
 \Bigg) -\frac{D(k-1)}{k}\,
 \Lambda\left(1+\theta J_0\prod_{j=1}^{k}X_j\right)\Bigg].
\end{align}
Applying the condensation theorem for random factor graphs
\cite[Theorem~2.2 and Lemma~2.15]{coja2018charting} to the factor weights
\[
    \psi_J(\sigma_1,\ldots,\sigma_k)
    =1+\theta J\prod_i\sigma_i \,,
\]
we get the following characterization
\begin{equation}
\label{eq:sparse-beta-RS-variational}
 \beta_\RS(k,D)
 :=\inf\left\{\beta>0:\
 \sup_{\pi:\,\E X=0}\mathcal B_{k,D,\beta}(\pi)>\ln2\right\}.
\end{equation}
This result uses only the symmetry of the Rademacher coupling and is valid for both even and odd $k$.  The cited theorem uses a Poisson number of ordered clauses.  It applies to our exact ensemble after two standard comparisons: conditioning the clause count to be $M=\lfloor ND/k\rfloor$ changes the expected free energy by $o(N)$, and replacing the $o(N)$ clauses that repeat a vertex within an ordered clause by uniform $k$-subsets also changes it by $o(N)$.
Finally,
$e^{\beta J\prod_i\sigma_i}=\cosh\beta\,\psi_J(\sigma_1,\ldots,\sigma_k)$, so the normalization of the factor weights does not change the quenched vs. annealed free-energy gap.

\paragraph{Comparing $\beta_{\rm Holevo}$ with $\beta_\RS$.}
Let us now compare $\beta_\RS(k, D)$ to $\beta_{\rm Holevo}(k, D)$ given in \eqref{eq:beta-Holevo} and \eqref{eq:beta-Holevo-app}.  
Let
\begin{equation}
\tau(\beta):=\beta\tanh\beta - \ln \cosh\beta,
\end{equation} 
and $\beta_{\rm ann}(k,D)$ be the unique positive solution of
\begin{equation}
\label{eq:annealed-entropy-zero}
 \frac{D}{k}\big(\beta\tanh\beta-\ln\cosh\beta\big)=\ln2.
\end{equation}
For $D> k$, the existence and uniqueness of the solution follows from the fact that $\tau'(\beta)=\beta \sech^2\beta >0$, $\tau(0)=0$, $\tau(\infty)=\ln 2$.

\begin{proposition}[Static versus Holevo threshold]
\label{prop:beta-RS-Holevo}
For every $k\geq2$ and $D>k$,
\begin{equation}
\label{eq:beta-RS-Holevo-order}
 \beta_\RS(k,D)<\beta_{\rm ann}(k,D)
 =\beta_{\rm Holevo}(k,D),
\end{equation}
where $\beta_{\rm Holevo}$ is defined in \eqref{eq:beta-Holevo-app}.
\end{proposition}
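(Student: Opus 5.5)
The proposition has two parts, and I would handle them separately: an algebraic identity $\beta_{\rm ann}=\beta_{\rm Holevo}$, and the strict inequality $\beta_\RS<\beta_{\rm ann}$, which I would derive from an entropy argument.

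\textbf{Equality.} Set $p:=H_2^{-1}(1-k/D)\in(0,\tfrac12)$. Then \eqref{eq:beta-Holevo-app} reads $\sech\beta=2\sqrt{p(1-p)}=\sqrt{1-(1-2p)^2}$. On the branch $p\le\frac12$, $\beta\ge0$, this is equivalent to $1-2p=\tanh\beta$, i.e.\ $p=e^{-\beta}/(2\cosh\beta)$. Substituting this $p$ into the entropy in nats gives
\[
-p\ln p-(1-p)\ln(1-p)=\ln(2\cosh\beta)-\beta(1-2p)=\ln 2-\tau(\beta).
\]
The defining relation $H_2(p)=1-k/D$ (in bits) therefore becomes $\tau(\beta)=(k/D)\ln 2$, which is exactly \eqref{eq:annealed-entropy-zero}. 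Uniqueness of the positive root follows from the monotonicity of $\tau$ noted above, so the two thresholds coincide.

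\textbf{Strict inequality.} Let $f_N(\beta)=\frac1N\E\ln Z_{\beta,H}$ and $\phi(\beta)=\ln2+\frac Dk\ln\cosh\beta$, so that $\frac1N\ln\E Z_{\beta,H}=\phi(\beta)+o(1)$. Both are convex in $\beta$. For each $N$, the quantity $f_N(\beta)-\beta f_N'(\beta)=\frac1N\E[\mathrm{Ent}(\mu_{\beta,H})]$ is the normalized Shannon entropy of the Gibbs measure.

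I would lower bound this entropy uniformly. Vertices lying in no hyperedge have a fraction concentrating at $e^{-D}(1+o(1))$ (Poisson approximation for $\GG_\ER$). Each such spin is an independent uniform bit under $\mu_{\beta,H}$. Hence $f_N-\beta f_N'\ge e^{-D}\ln2-o(1)$ for every $\beta$.

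The annealed entropy, by contrast, is $\phi(\beta)-\beta\phi'(\beta)=\ln2-\frac Dk\tau(\beta)$, which vanishes at $\beta_{\rm ann}$ and is continuous and decreasing.

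Now suppose for contradiction that $\beta_\RS\ge\beta_{\rm ann}$. Then $f_N\to\phi$ pointwise on $[0,\beta_{\rm ann}]$. Since these are convex functions, this gives convergence of derivatives, and I would use the left derivatives at $\beta_{\rm ann}$: $\limsup_N \partial^- f_N(\beta_{\rm ann})\le\phi'(\beta_{\rm ann})$. Passing to the limit in the entropy bound at $\beta=\beta_{\rm ann}$ then gives
\[
0=\phi(\beta_{\rm ann})-\beta_{\rm ann}\phi'(\beta_{\rm ann})\ge e^{-D}\ln 2>0,
\]
a contradiction.

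\textbf{Main obstacle.} The delicate step is the limit of the derivatives. Convexity gives only $\liminf_N f_N'(\beta^-)\ge\phi'(\beta^-)$ on the open interval, in the wrong direction for the left endpoint. To avoid this, I would apply the contradiction argument at a point $\beta'<\beta_{\rm ann}$ close to it rather than at $\beta_{\rm ann}$ itself. There the annealed entropy is below $\frac12 e^{-D}\ln2$, and $f_N\to\phi$ on a neighborhood of $\beta'$. Standard convex-function arguments then give $f_N'(\beta')\to\phi'(\beta')$, since $\phi$ is differentiable and both one-sided derivatives of the $f_N$ are squeezed by difference quotients. This yields the contradiction $\tfrac12 e^{-D}\ln2\ge e^{-D}\ln2$.

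Since $\beta'$ is strictly below $\beta_{\rm ann}$, the argument shows the quenched and annealed free energies already separate before $\beta_{\rm ann}$. This gives $\beta_\RS<\beta_{\rm ann}$ strictly, for all $k\ge2$ and $D>k$.
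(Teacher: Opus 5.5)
Your proposal is correct and follows essentially the same route as the paper. Setting $p=(1-\tanh\beta)/2$ in the entropy identity gives $\beta_{\rm ann}=\beta_{\rm Holevo}$. The isolated-vertex lower bound $e^{-D}\ln 2$ on the Gibbs entropy, with convexity used to pass from free-energy agreement below $\beta_\RS$ to entropy agreement, then contradicts the vanishing annealed entropy near $\beta_{\rm ann}$; the paper phrases this as $\beta_\RS\le\beta_{\rm iso}<\beta_{\rm ann}$. Your endpoint worry is a sign slip rather than a real obstacle: the left-difference-quotient bound $\liminf_N f_N'(\beta_{\rm ann})\ge\phi'(\beta_{\rm ann})$ is exactly the direction needed to upper-bound $f_N-\beta f_N'$, though moving to an interior point $\beta'$ also works.
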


\begin{proof}
From the calculation in \cref{sec:RS}, we see that the annealed free energy (per spin) for the Erd\H{o}s-R\'enyi ensemble in \cref{def:erdos_renyi} is 
\[
 a(\beta)=\ln2+\frac{D}{k}\ln\cosh\beta.
\]
Let $f_N(\beta) = \frac{1}{N}\EE \ln Z_{\beta, H}$ be the quenched free energy. By definition of $\beta_\RS$, the quenched and annealed free energies agree throughout $\beta\in(0,\beta_\RS(k, D))$. In this interval, the convex function $f_N(\beta)$ converges to the differentiable function $a(\beta)$, implying $f'_N(\beta)\to a'(\beta)$.
So for $\beta  < \beta_\RS$, the expected Gibbs entropy per spin thus converge to
\[
\lim_{N\to\infty} (f_N(\beta)-\beta f'_N(\beta)) = a(\beta)-\beta a'(\beta)
% =\ln2-\frac{D}{k}\big(\beta\tanh\beta-\ln\cosh\beta\big)
=\ln2-\frac{D}{k}\tau(\beta).
\]
Note $\beta_{\rm ann}(k, D)$ is precisely the point where this entropy vanishes.

We next use the existence of isolated vertices in a random Erd\H{o}s-R\'enyi hypergraph to show $\beta_\RS(k, D) < \beta_{\rm ann}(k, D)$. In such a random hypergraph, the expected number of isolated vertices is $N(1-k/N)^M = N(e^{-D} + o_N(1))$. Since an isolated vertice does not appear in the Hamiltonian, its spin is independent and uniformly random under the Gibbs measure at all temperatures. Thus the Gibbs entropy per spin at any temperature is at least $e^{-D}\ln 2$, and so for all $\beta \in (0, \beta_\RS)$ we must have
\begin{equation}
\ln2-\frac{D}{k}\tau(\beta) \ge e^{- D}\ln 2.
\end{equation}
Observe that there is a unique $\beta_{\rm iso} > 0$ such that $\frac{D}{k} \tau(\beta_{\rm iso}) = (1-e^{-D})\ln 2$. Monotonicity of $\tau(\beta)$ then implies that $\beta_\RS \le \beta_{\rm iso} < \beta_{\rm ann}$.

It remains to show that $\beta_{\rm ann} = \beta_{\rm Holevo}$.  If
$p=H_2^{-1}(1-k/D)\in[0,1/2]$ and
$\beta=\sech^{-1}(2\sqrt{p(1-p)})$, then
$p=(1-\tanh\beta)/2$.  The elementary entropy identity
\[
 H_2\left(\frac{1-\tanh\beta}{2}\right) \cdot \ln2
 =\ln2+\ln\cosh\beta-\beta\tanh\beta
\]
shows that $H_2(p)=1-k/D$ is equivalent to \eqref{eq:annealed-entropy-zero}.  This is exactly the formula for $\beta_{\rm Holevo}$ in \cref{app:holevo_shannon}.
\end{proof}

Thus the ideal Holevo inverse temperature is strictly larger than the static transition inverse temperature of the Erd\H{o}s-R\'enyi ensemble for every finite $k\geq2$ and $D>k$.  To determine the size of the gap at any finite $(k, D)$, however, we would need to evaluate the variational problem in \eqref{eq:sparse-beta-RS-variational}. Furthermore, one should also account for the caveat that $\beta_{\rm Holevo}$ is an optimistic upper bound that assumes the hypergraph's incidence matrix $B$ has full rank, which is not true for a typical Erd\H{o}s-R\'enyi instance. We leave an ensemble-dependent characterization of $\beta_{\rm Holevo}$ and a more precise comparison to $\beta_\RS$ as future work.

\bibliographystyle{alpha}
\bibliography{refs}

\newcommand{\etalchar}[1]{$^{#1}$}
\begin{thebibliography}{FMRT{\etalchar{+}}01}

\bibitem[ACO08]{achlioptas2008algorithmic}
Dimitris Achlioptas and Amin Coja-Oghlan.
\newblock Algorithmic barriers from phase transitions.
\newblock In {\em 49th Annual IEEE Symposium on Foundations of Computer
  Science}, pages 793--802. IEEE, 2008.

\bibitem[AGL25]{anschuetz2025decoded}
Eric~R. Anschuetz, David Gamarnik, and Jonathan~Z. Lu.
\newblock {Spin Glass Transitions Obstruct Decoded Quantum Interferometry}.
\newblock {\em arXiv preprint arXiv:2509.14509}, 2025.

\bibitem[Ala26]{alaoui2024near}
Ahmed~El Alaoui.
\newblock {Near-optimal shattering in the Ising pure $p$-spin and rarity of
  solutions returned by stable algorithms}.
\newblock {\em Stochastic Processes and their Applications}, 192:104792, 2026.

\bibitem[AM23]{anshu2023concentration}
Anurag Anshu and Tony Metger.
\newblock Concentration bounds for quantum states and limitations on the qaoa
  from polynomial approximations.
\newblock {\em Quantum}, 7:999, 2023.

\bibitem[AMS21]{AMS21optimization}
Ahmed~El Alaoui, Andrea Montanari, and Mark Sellke.
\newblock {Optimization of mean-field spin glasses}.
\newblock {\em Ann. Probab.}, 49(6):2922--2960, 2021.

\bibitem[AMS22]{AMS22sampling}
Ahmed~El Alaoui, Andrea Montanari, and Mark Sellke.
\newblock Sampling from the {Sherrington-Kirkpatrick Gibbs} measure via
  algorithmic stochastic localization.
\newblock In {\em IEEE 63rd Annual Symposium on Foundations of Computer Science
  (FOCS)}, pages 323--334, 2022.

\bibitem[AMS25a]{alaoui2023sampling}
Ahmed~El Alaoui, Andrea Montanari, and Mark Sellke.
\newblock Sampling from mean-field {Gibbs} measures via diffusion processes.
\newblock {\em Probability and Mathematical Physics}, 6(3):961--1022, 2025.

\bibitem[AMS25b]{alaoui2023shattering}
Ahmed~El Alaoui, Andrea Montanari, and Mark Sellke.
\newblock Shattering in pure spherical spin glasses.
\newblock {\em Communications in Mathematical Physics}, 406(5):1--36, 2025.

\bibitem[AR05]{AR05}
Dorit Aharonov and Oded Regev.
\newblock Lattice problems in {NP $\cap$ coNP}.
\newblock {\em Journal of the ACM (JACM)}, 52(5):749--765, 2005.

\bibitem[ATS03]{ATS03}
Dorit Aharonov and Amnon Ta-Shma.
\newblock Adiabatic quantum state generation and statistical zero knowledge.
\newblock In {\em Proceedings of the 35th annual ACM symposium on Theory of
  computing}, pages 20--29, 2003.

\bibitem[BGMZ22]{basso2022performance}
Joao Basso, David Gamarnik, Song Mei, and Leo Zhou.
\newblock Performance and limitations of the qaoa at constant levels on large
  sparse hypergraphs and spin glass models.
\newblock In {\em IEEE 63rd Annual Symposium on Foundations of Computer Science
  (FOCS)}, page 335–343. IEEE, 2022.

\bibitem[BH22]{bresler2022kSAT}
Guy Bresler and Brice Huang.
\newblock {The Algorithmic Phase Transition of Random k-SAT for Low Degree
  Polynomials}.
\newblock In {\em IEEE 62nd Annual Symposium on Foundations of Computer Science
  (FOCS)}, pages 298--309, 2022.

\bibitem[BHJK25]{buhai2025quasi}
Rares-Darius Buhai, Jun-Ting Hsieh, Aayush Jain, and Pravesh~K Kothari.
\newblock The quasi-polynomial low-degree conjecture is false.
\newblock In {\em 2025 IEEE 66th Annual Symposium on Foundations of Computer
  Science (FOCS)}, pages 2577--2590. IEEE, 2025.

\bibitem[Cel24]{Celentano2024}
Michael Celentano.
\newblock {Sudakov–Fernique post-AMP, and a new proof of the local convexity
  of the TAP free energy}.
\newblock {\em Ann. Probab.}, 52(3), 2024.

\bibitem[CGPR19]{chen2019suboptimality}
Wei-Kuo Chen, David Gamarnik, Dmitry Panchenko, and Mustazee Rahman.
\newblock Suboptimality of local algorithms for a class of max-cut problems.
\newblock {\em Ann. Probab.}, 47(3):1587--1618, 2019.

\bibitem[CHM23]{chen2023local}
Antares Chen, Neng Huang, and Kunal Marwaha.
\newblock {Local algorithms and the failure of log-depth quantum advantage on
  sparse random CSPs}.
\newblock {\em arXiv preprint arXiv:2310.01563}, 2023.

\bibitem[CLSS22]{chou2022limitations}
Chi-Ning Chou, Peter~J. Love, Juspreet~Singh Sandhu, and Jonathan Shi.
\newblock {Limitations of Local Quantum Algorithms on Random MAX-k-XOR and
  Beyond}.
\newblock In {\em 49th International Colloquium on Automata, Languages, and
  Programming (ICALP 2022)}, volume 229, pages 41:1--41:20, 2022.

\bibitem[COEJ{\etalchar{+}}18]{coja2018charting}
Amin Coja-Oghlan, Charilaos Efthymiou, Nor Jaafari, Mihyun Kang, and Tobias
  Kapetanopoulos.
\newblock Charting the replica symmetric phase.
\newblock {\em Communications in Mathematical Physics}, 359(2):603--698, 2018.

\bibitem[COKKR24]{coja2024k}
Amin Coja-Oghlan, Mihyun Kang, Lena Krieg, and Maurice Rolvien.
\newblock The {$k$-XORSAT} threshold revisited.
\newblock {\em The Electronic Journal of Combinatorics}, pages P2--16, 2024.

\bibitem[CT23]{chailloux2023quantum}
André Chailloux and Jean-Pierre Tillich.
\newblock The quantum decoding problem.
\newblock {\em arXiv preprint arXiv:2310.20651}, 2023.

\bibitem[DGM{\etalchar{+}}10]{Dietzfelbinger2010cuckoo}
Martin Dietzfelbinger, Andreas Goerdt, Michael Mitzenmacher, Andrea Montanari,
  Rasmus Pagh, and Michael Rink.
\newblock Tight thresholds for cuckoo hashing via {XORSAT}.
\newblock In {\em International Colloquium on Automata, Languages, and
  Programming}, pages 213--225. Springer, 2010.

\bibitem[DK22]{diakonikolas2022non}
Ilias Diakonikolas and Daniel Kane.
\newblock {Non-Gaussian} component analysis via lattice basis reduction.
\newblock In {\em Conference on Learning Theory}, pages 4535--4547. PMLR, 2022.

\bibitem[DM02]{dubois20023}
Olivier Dubois and Jacques Mandler.
\newblock The {3-XORSAT} threshold.
\newblock In {\em Proceedings of the 43rd Symposium on Foundations of Computer
  Science}, pages 769--778, 2002.

\bibitem[DM10]{dembo2009gibbs}
Amir Dembo and Andrea Montanari.
\newblock Gibbs measures and phase transitions on sparse random graphs.
\newblock {\em Brazilian Journal of Probability and Statistics}, pages
  137--211, 2010.

\bibitem[FGG20]{farhi2020typical}
Edward Farhi, David Gamarnik, and Sam Gutmann.
\newblock {The Quantum Approximate Optimization Algorithm Needs to See the
  Whole Graph: A Typical Case}.
\newblock {\em arXiv preprint arXiv:2004.09002}, 2020.

\bibitem[FJ25]{FJ24}
Edward Farhi and Stephen~P. Jordan.
\newblock Efficiently constructing a quantum uniform superposition over bit
  strings near a binary linear code.
\newblock {\em Quantum Information and Computation}, 24(15/16):1326--1355,
  2025.
\newblock arXiv:2404.16129.

\bibitem[FLPR12]{Ferrari2012two}
U.~Ferrari, L.~Leuzzi, G.~Parisi, and T.~Rizzo.
\newblock {Two-step relaxation next to dynamic arrest in mean-field glasses:
  Spherical and Ising $p$-spin model}.
\newblock {\em Phys. Rev. B}, 86:014204, 2012.

\bibitem[FMRT{\etalchar{+}}01]{Franz_2001}
S.~Franz, M.~Mézard, F.~Ricci-Tersenghi, M.~Weigt, and R.~Zecchina.
\newblock A ferromagnet with a glass transition.
\newblock {\em Europhysics Letters (EPL)}, 55(4):465--471, 2001.

\bibitem[Fri90]{frieze1990independence}
Alan~M. Frieze.
\newblock On the independence number of random graphs.
\newblock {\em Discrete Mathematics}, 81(2):171--175, 1990.

\bibitem[Gam21]{gamarnik2021overlap}
David Gamarnik.
\newblock The overlap gap property: A topological barrier to optimizing over
  random structures.
\newblock {\em Proceedings of the National Academy of Sciences},
  118(41):e2108492118, 2021.

\bibitem[GC26]{garrido2026regev}
M.~Garrido and Andr{\'e} Chailloux.
\newblock Regev's reduction as a candidate quantum algorithm for the discrete
  logarithm problem in finite abelian groups.
\newblock {\em arXiv preprint arXiv:2605.03972}, 2026.

\bibitem[GJK25]{gamarnik2023shattering}
David Gamarnik, Aukosh Jagannath, and Eren~C. Kızıldağ.
\newblock {Shattering in the Ising p-spin glass model}.
\newblock {\em Probability Theory and Related Fields}, 193(1-2):89–141, 2025.

\bibitem[GJW20]{gamarnik2020low}
David Gamarnik, Aukosh Jagannath, and Alexander~S. Wein.
\newblock Low-degree hardness of random optimization problems.
\newblock In {\em IEEE 61st Annual Symposium on Foundations of Computer Science
  (FOCS)}, pages 131--140, 2020.

\bibitem[GJW24]{gamarnik2024low}
David Gamarnik, Aukosh Jagannath, and Alexander~S. Wein.
\newblock Hardness of random optimization problems for boolean circuits,
  low-degree polynomials, and langevin dynamics.
\newblock {\em SIAM Journal on Computing}, 53(1):1–46, 2024.

\bibitem[GKPX22]{GKPX2022}
David Gamarnik, Eren~C. Kızıldağ, Will Perkins, and Changji Xu.
\newblock Algorithms and barriers in the symmetric binary perceptron model.
\newblock In {\em IEEE 63rd Annual Symposium on Foundations of Computer Science
  (FOCS)}, pages 576--587, 2022.

\bibitem[GS14]{gamarnik2014limits}
David Gamarnik and Madhu Sudan.
\newblock Limits of local algorithms over sparse random graphs.
\newblock In {\em Proceedings of the 5th Conference on Innovations in
  Theoretical Computer Science}, ITCS '14, page 369–376, New York, NY, USA,
  2014. Association for Computing Machinery.

\bibitem[Hop18]{hopkins2018statistical}
Samuel Hopkins.
\newblock {\em Statistical inference and the sum of squares method}.
\newblock PhD thesis, Cornell University, 2018.

\bibitem[HPP25]{huang2025hardness}
Neng Huang, Will Perkins, and Aaron Potechin.
\newblock Hardness of sampling for the anti-ferromagnetic {I}sing model on
  random graphs.
\newblock In {\em 16th Innovations in Theoretical Computer Science Conference
  (ITCS 2025)}, pages 61--1, 2025.

\bibitem[HS22]{huang2022tight}
Brice Huang and Mark Sellke.
\newblock {Tight Lipschitz Hardness for optimizing Mean Field Spin Glasses}.
\newblock In {\em IEEE 63rd Annual Symposium on Foundations of Computer Science
  (FOCS)}, page 312–322. IEEE, 2022.

\bibitem[HS26]{bogpinprogress}
Brice Huang and Mark Sellke.
\newblock {Algorithmic threshold for high-dimensional projection pursuit II:
  asymptotics in constraint satisfaction problems}.
\newblock {\em In Preparation}, 2026.

\bibitem[HW21]{holmgren_wein}
Justin Holmgren and Alexander~S. Wein.
\newblock {Counterexamples to the Low-Degree Conjecture}.
\newblock In {\em 12th Innovations in Theoretical Computer Science Conference
  (ITCS 2021)}, volume 185 of {\em Leibniz International Proceedings in
  Informatics (LIPIcs)}, pages 75:1--75:9, 2021.

\bibitem[IKKM15]{Ibrahimi_2015}
Morteza Ibrahimi, Yash Kanoria, Matt Kraning, and Andrea Montanari.
\newblock The set of solutions of random {XORSAT} formulae.
\newblock {\em The Annals of Applied Probability}, 25(5), 2015.

\bibitem[JSW{\etalchar{+}}25]{jordan2024optimization}
Stephen~P. Jordan, Noah Shutty, Mary Wootters, Adam Zalcman, Alexander
  Schmidhuber, Robbie King, Sergei~V. Isakov, Tanuj Khattar, and Ryan Babbush.
\newblock Optimization by decoded quantum interferometry.
\newblock {\em Nature}, 646(8086):831--836, 2025.

\bibitem[KS26]{koehler2026overlap}
Frederic Koehler and Joonhyung Shin.
\newblock Overlap analysis of the shortest path problem: Local search,
  landscapes, and franz-parisi potential.
\newblock In Steve Hanneke and Tor Lattimore, editors, {\em Proceedings of
  Thirty Ninth Conference on Learning Theory}, volume 336 of {\em Proceedings
  of Machine Learning Research}, pages 4050--4228. PMLR, 2026.

\bibitem[KWB19]{kunisky2019notes}
Dmitriy Kunisky, Alexander~S. Wein, and Alfonso~S. Bandeira.
\newblock Notes on computational hardness of hypothesis testing: Predictions
  using the low-degree likelihood ratio.
\newblock In {\em ISAAC Congress (International Society for Analysis, its
  Applications and Computation)}, pages 1--50. Springer, 2019.

\bibitem[Lop26]{lopatto2026replica}
Patrick Lopatto.
\newblock {Full replica symmetry breaking in the Sherrington-Kirkpatrick
  model}.
\newblock {\em arXiv preprint arXiv:2607.11756}, 2026.

\bibitem[LS25]{li2024some}
Shuangping Li and Tselil Schramm.
\newblock Some easy optimization problems have the overlap-gap property.
\newblock 291:3582--3622, 2025.

\bibitem[Lyo17]{lyons2017factors}
Russell Lyons.
\newblock {Factors of IID on trees}.
\newblock {\em Combinatorics, Probability and Computing}, 26(2):285--300, 2017.

\bibitem[Mao26]{mao2026polynomial}
Songtao Mao.
\newblock The polynomial-time low-degree conjecture is false.
\newblock {\em arXiv preprint arXiv:2607.20318}, 2026.

\bibitem[MM06]{mezard2006reconstruction}
Marc Mézard and Andrea Montanari.
\newblock Reconstruction on trees and spin glass transition.
\newblock {\em Journal of Statistical Physics}, 124(6):1317–1350, 2006.

\bibitem[Mon19]{montanari2019optimization}
A.~Montanari.
\newblock {Optimization of the Sherrington-Kirkpatrick Hamiltonian}.
\newblock In {\em Proceedings of 60th Annual Symposium on Foundations of
  Computer Science (FOCS)}, pages 1417--1433, 2019.

\bibitem[MPV87]{mezard1987spin}
Marc M{\'e}zard, Giorgio Parisi, and Miguel~Angel Virasoro.
\newblock {\em Spin glass theory and beyond: An Introduction to the Replica
  Method and Its Applications}, volume~9.
\newblock World Scientific Publishing Company, 1987.

\bibitem[MRT03]{Montanari2003}
A.~Montanari and F.~Ricci-Tersenghi.
\newblock On the nature of the low-temperature phase in discontinuous
  mean-field spin glasses.
\newblock {\em The European Physical Journal B - Condensed Matter},
  33(3):339--346, 2003.

\bibitem[Pan14]{Panchenko2014parisi}
Dmitry Panchenko.
\newblock The parisi formula for mixed $p$-spin models.
\newblock {\em The Annals of Probability}, 42(3), 2014.

\bibitem[Par80]{parisi1980sequence}
Giorgio Parisi.
\newblock {A sequence of approximated solutions to the SK model for spin
  glasses}.
\newblock {\em Journal of Physics A: Mathematical and General}, 13(4):L115,
  1980.

\bibitem[Pra62]{prange1962use}
Eugene Prange.
\newblock The use of information sets in decoding cyclic codes.
\newblock {\em IRE Transactions on Information Theory}, 8(5):5--9, 1962.

\bibitem[PS16]{pittel2016satisfiability}
Boris Pittel and Gregory~B. Sorkin.
\newblock The satisfiability threshold for {k-XORSAT}.
\newblock {\em Combinatorics, Probability and Computing}, 25(2):236--268, 2016.

\bibitem[Reg04]{R04}
Oded Regev.
\newblock Quantum computation and lattice problems.
\newblock {\em SIAM Journal on Computing}, 33(3):738--760, 2004.

\bibitem[Reg09]{R09}
Oded Regev.
\newblock On lattices, learning with errors, random linear codes, and
  cryptography.
\newblock {\em Journal of the ACM (JACM)}, 56(6):1--40, 2009.

\bibitem[RV17]{rahman2017local}
Mustazee Rahman and Bálint Virág.
\newblock Local algorithms for independent sets are half-optimal.
\newblock {\em The Annals of Probability}, 45(3), 2017.

\bibitem[Sen18]{sen2018optimization}
Subhabrata Sen.
\newblock Optimization on sparse random hypergraphs and spin glasses.
\newblock {\em Random Structures \& Algorithms}, 53(3):504--536, 2018.

\bibitem[SLS{\etalchar{+}}25]{schmidhuber2025hamiltonian}
Alexander Schmidhuber, Jonathan~Z. Lu, Noah Shutty, Stephen Jordan, Alexander
  Poremba, and Yihui Quek.
\newblock Hamiltonian decoded quantum interferometry.
\newblock {\em arXiv preprint arXiv:2510.07913}, 2025.

\bibitem[SMR{\etalchar{+}}26]{SMR26}
Noah Shutty, Avijit Mandal, Seyoon Ragavan, Quentin Buzet, André Chailloux,
  Nicholas~C. Rubin, Abid Khan, Sami Boulebnane, Ruslan Shaydulin, John
  Azariah, and Stephen~P. Jordan.
\newblock Optimization using locally-quantum decoders.
\newblock {\em arXiv preprint arXiv:2604.24633}, 2026.

\bibitem[SZ81]{Sompolinsky1981}
H.~Sompolinsky and Annette Zippelius.
\newblock Dynamic theory of the spin-glass phase.
\newblock {\em Phys. Rev. Lett.}, 47:359--362, Aug 1981.

\bibitem[SZB21]{song2021cryptographic}
Min~Jae Song, Ilias Zadik, and Joan Bruna.
\newblock On the cryptographic hardness of learning single periodic neurons.
\newblock {\em Advances in neural information processing systems},
  34:29602--29615, 2021.

\bibitem[Tal98]{talagrand1998sherrington}
Michel Talagrand.
\newblock The {Sherrington--Kirkpatrick} model: A challenge for mathematicians.
\newblock {\em Probability Theory and Related Fields}, 110:109--176, 1998.

\bibitem[Tal00]{talagrand2000rigorous}
Michel Talagrand.
\newblock Rigorous low-temperature results for the mean field {$p$}-spins
  interaction model.
\newblock {\em Probability Theory and Related Fields}, 117:303--360, 2000.

\bibitem[Tal06]{talagrand2006parisi}
Michel Talagrand.
\newblock {The Parisi formula}.
\newblock {\em Annals of mathematics}, pages 221--263, 2006.

\bibitem[Tal11]{talagrand2011mean1}
Michel Talagrand.
\newblock {\em {Mean Field Models for Spin Glasses. Volume I: Basic Examples}},
  volume~54.
\newblock Springer Science \& Business Media, 2011.

\bibitem[Wei22]{Wein2022}
Alexander~S. Wein.
\newblock Optimal low-degree hardness of maximum independent set.
\newblock {\em Mathematical Statistics and Learning}, 4(3):221–251, 2022.

\bibitem[Wei25]{wein2025computational}
Alexander~S Wein.
\newblock Computational complexity of statistics: New insights from low-degree
  polynomials.
\newblock {\em arXiv preprint arXiv:2506.10748}, 2025.

\bibitem[ZKA25]{zlokapa2025average}
Alexander Zlokapa, Bobak~T. Kiani, and Eric~R. Anschuetz.
\newblock Average-case quantum complexity from glassiness.
\newblock {\em arXiv preprint arXiv:2510.08497}, 2025.

\bibitem[ZSWB22]{zadik2022lattice}
Ilias Zadik, Min~Jae Song, Alexander~S. Wein, and Joan Bruna.
\newblock Lattice-based methods surpass sum-of-squares in clustering.
\newblock In {\em Conference on Learning Theory}, pages 1247--1248. PMLR, 2022.

\end{thebibliography}
\end{document}